%% file: main.tex
\documentclass[twoside]{article}

\input{math_commands.tex}

\usepackage{microtype}
\usepackage{graphicx}
\usepackage{subcaption}
\usepackage{booktabs} 
\usepackage{wrapfig}
\usepackage[utf8]{inputenc} 
\usepackage[T1]{fontenc}    
\usepackage{hyperref}
\usepackage{url}            
\usepackage{booktabs}       
\usepackage{amsfonts}       
\usepackage{nicefrac}       
\usepackage{microtype}      
\usepackage{xcolor}         
\usepackage{pgf,tikz}
\usetikzlibrary{babel}
\usepackage{circuitikz}
\usepackage{siunitx}
\usepackage{bbding}
\usepackage{enumitem}
\usepackage{algorithm}
\usepackage{algorithmic}
\usepackage{multicol}
\usepackage[normalem]{ulem}

\usepackage{amsmath}
\usepackage{amssymb}
\usepackage{mathtools}
\usepackage{amsthm}
\usepackage[round]{natbib}
\usepackage[capitalize,noabbrev]{cleveref}

\theoremstyle{plain}
\newtheorem{theorem}{Theorem}[section]

\newtheorem{lemma}[theorem]{Lemma}
\newtheorem{corollary}[theorem]{Corollary}
\newtheorem{principle}[theorem]{Principle}
\theoremstyle{definition}

\newtheorem{assumption}[theorem]{Assumption}
\theoremstyle{remark}
\newtheorem{remark}[theorem]{Remark}

\def\Rex{\mathcal{R}_{\mathrm{excess}}}

\graphicspath{ {figures/} }

\usepackage[textsize=tiny]{todonotes}

\usepackage[accepted]{aistats2026}
\usepackage{titlesec}
\usepackage{textcase}

\titleformat{\section}
  {\normalfont\bfseries}{\thesection}{1em}{\MakeTextUppercase}

\begin{document}

%

%
\runningauthor{Kuan-Ta Li, Chia-Chun Lin, Ping-Chun Hsieh, Yu-Chih Huang}

\twocolumn[

\aistatstitle{A Modularized Framework for Piecewise-Stationary Restless Bandits}

\aistatsauthor{Kuan-Ta Li \And Chia-Chun Lin}
\aistatsaddress{ Institute of Communications Engineering \\ National Yang Ming Chaio Tung University \\ Hsinchu, Taiwan \And Institute of Communications Engineering \\ National Yang Ming Chaio Tung University \\ Hsinchu, Taiwan}

\aistatsauthor{Ping-Chun Hsieh \And Yu-Chih Huang}

\aistatsaddress{Department of Computer Science \\ National Yang Ming Chaio Tung University \\ Hsinchu, Taiwan \And Institute of Communications Engineering \\ National Yang Ming Chaio Tung University \\ Hsinchu, Taiwan } ]

\input{0-abstract}
\input{1-intro_new}

\input{2-problem}
\input{3-algorithm}

\input{4-analysis}
\input{6-simulate}
\input{7-conclude}

\bibliography{reference.bib}

\clearpage
\appendix
\thispagestyle{empty}

\appendix

\onecolumn
\input{1-changealg}
\input{2-proof}

\input{3-parameters_RMAB}
\input{4-one-state_simulation.tex}
\input{5-parameter.tex}

\end{document}

%% file: math_commands.tex
\usepackage{amsmath,amsfonts,bm}

\def\eqref#1{equation~\ref{#1}}
\def\Eqref#1{Equation~\ref{#1}}

\def\Algref#1{Algorithm~\ref{#1}}

\def\1{\bm{1}}

\DeclareMathAlphabet{\mathsfit}{\encodingdefault}{\sfdefault}{m}{sl}
\SetMathAlphabet{\mathsfit}{bold}{\encodingdefault}{\sfdefault}{bx}{n}

\def\gK{{\mathcal{K}}}

\def\gO{{\mathcal{O}}}

\def\gR{{\mathcal{R}}}

\newcommand{\E}{\mathbb{E}}

\DeclareMathOperator*{\argmax}{arg\,max}

%% file: 0-abstract.tex
\begin{abstract}
We study the piecewise-stationary restless multi-armed bandit (PS-RMAB) problem, 
where each arm evolves as a Markov chain but \emph{mean rewards may change across unknown segments}. 
To address the resulting exploration--detection delay trade-off, we propose a modular framework that integrates 
arbitrary RMAB base algorithms with change detection and a novel diminishing exploration mechanism. 
This design enables flexible plug-and-play use of existing solvers and detectors, while efficiently adapting 
to mean changes without prior knowledge of their number. 

To evaluate performance, we introduce a refined regret notion that measures the 
\emph{excess regret due to exploration and detection}, benchmarked against an oracle that restarts 
the base algorithm at the true change points. 
Under this metric, we prove a regret bound of 
$\tilde{O}(\sqrt{LMKT})$, where $L$ denotes the maximum mixing time of the Markov chains across all arms and segments, $M$ the number of segments, 
$K$ the number of arms, and $T$ the horizon. 
Simulations confirm that our framework achieves regret close to that of the segment oracle and consistently outperforms base solvers that do not incorporate any mechanism to handle environmental changes. 
\end{abstract}

%% file: 1-intro_new.tex
\section{Introduction}
\label{sec:intro}
\vspace{-5pt}
Restless multi-armed bandits (RMABs) provide a powerful abstraction for sequential decision-making under resource constraints. In contrast to the classical stochastic multi-armed bandit (MAB), where unselected arms remain frozen, RMABs model each arm as an evolving Markov chain that continues to transition regardless of whether it is activated. This restless property enables RMABs to capture a wide range of dynamic allocation problems where opportunities or risks evolve continuously, even in the absence of intervention.

Since the seminal work of \citet{Whittle1988RestlessBA}, RMABs have been studied extensively in both theory and practice. Whittle introduced the Whittle index, derived from a Lagrangian relaxation of the original optimization problem, which provides a computationally tractable priority rule under the indexability condition. Although finding the exact optimal policy is PSPACE-hard \citep{PSPACE_hard}, this line of work has inspired a rich literature on both theoretical guarantees and practical algorithms.

Beyond its theoretical significance, the RMAB framework has proven highly impactful in practice, demonstrating utility across multiple domains. In communication systems, RMAB formulations underpin opportunistic scheduling tasks such as dynamic spectrum access in cognitive radio networks and downlink scheduling in wireless networks \citep{2008qingzhao}. The same framework has been applied in healthcare, where RMAB-based policies help optimize limited intervention resources (e.g., scheduling follow-up calls to patients) in public health outreach programs to maximize engagement and outcomes \citep{2022healthcare}. Likewise, modern recommender systems leverage RMAB models to adapt content delivery as user preferences evolve over time \citep{2017recommender}. Another important application arises in maintaining information freshness, where scheduling to minimize the \emph{Age of Information} (AoI) can be naturally cast as an RMAB problem \citep{hsu2018age}.

However, most existing studies assume stationary dynamics, which rarely hold in practice. Real-world environments often undergo gradual or abrupt changes, requiring policies that adapt over time. This motivates the study of RMABs in piecewise-stationary settings, where environments are segmented and the transition dynamics of arms shift at change points. These shifts are typically characterized by changes in the underlying Markov chains. To remain effective, the player’s policy must continually adjust to the new segments.

Classical piecewise-stationary bandit problems employ two broad strategies for adaptation: active and passive methods. Active approaches couple standard bandit algorithms with change detection modules, which trigger resets upon detecting environmental shifts, making them well-suited for abrupt changes. In contrast, passive methods exploit mechanisms such as sliding windows or discount factors that gradually down-weight outdated samples, allowing the algorithm to adapt smoothly without explicit resets.

While piecewise-stationary bandits have been extensively studied, extending these ideas to RMABs introduces unique challenges. In RMABs, different states of an arm can yield varying reward means, complicating the distinction between routine state transitions and genuine environmental changes. This variability raises the risk of false alarms, where a detection module may mistakenly interpret state-based fluctuations as true shifts in the environment.

Building on this motivation, our work makes the following contributions:

\textbf{Modular framework for PS-RMABs.} 
We introduce a modular algorithmic framework that integrates three key components: 
a stationary base solver, a change detection module, and a diminishing exploration schedule. 
This design ensures flexibility and plug-and-play adaptability, while isolating the two 
core sources of cost in piecewise-stationary environments: forced exploration and detection delay.

\textbf{Refined excess-regret notion.} 
We formalize a new notion of \emph{excess regret}, which explicitly captures the 
overhead caused by exploration and change detection, distinct from the intrinsic 
regret of the stationary base solver. Existing results for piecewise-stationary bandits 
show that the dominant terms in regret already arise from this exploration–delay balance, 
with the state-of-the-art scaling at $\tilde{O}(\sqrt{MKT})$. 
Our formulation makes this balance transparent and provides a principled way to analyze 
its extension to restless settings.

\textbf{Diminishing exploration with minimal assumptions and low complexity.} 
Our diminishing exploration (DE) requires no prior knowledge of the number of change points \(M\),
yet achieves a nearly optimal extra-regret bound of \(\tilde{\mathcal{O}}(\sqrt{MKT})\).
It adds virtually no computational overhead, relying only on simple scheduling to trigger exploration phases.

\textbf{Theoretical guarantees and general-case analysis.} 
We establish general excess-regret bounds that hold for arbitrary choices of base solvers 
and change detectors. These bounds confirm that the exploration–delay overhead is 
tightly controlled and does not exceed the known minimax rate in the piecewise-stationary 
setting. Consequently, whenever the stationary base solver is near-optimal, our 
framework preserves this near-optimality under piecewise-stationary extensions. 

Unlike passive adaptation methods such as sliding-window or discounted bandits, which continuously down-weight past observations, our approach actively coordinates exploration and change detection through a diminishing exploration schedule. This mechanism ensures sufficient statistical evidence for reliable detection while avoiding excessive exploration, and crucially eliminates the need for prior knowledge of the number of change points. Moreover, our framework naturally extends these ideas to restless settings, where passive forgetting mechanisms alone may fail due to state-dependent reward fluctuations.

A natural question is whether one can simply combine existing change-detection schemes with learning algorithms designed for stationary restless bandits. However, naive integration may lead to unreliable detection, excessive resets, or degraded regret performance, particularly in restless environments where state evolution can obscure mean shifts. Our framework identifies conditions under which such integration remains stable and preserves regret guarantees.
\vspace{-5pt}
\subsection{Related Work}
\label{sec:intro:related}
\vspace{-5pt}
\textbf{Restless Bandit Problem.} 
Restless bandits demand a range of algorithmic approaches. 
In planning settings with known models, index-based policies (Whittle’s and its variants) and LP-relaxation methods offer tractable solutions with guarantees like asymptotic optimality or constant-factor approximation \citep{Whittle1988RestlessBA, guha2010approximation}.  
In online learning settings, even under partial observability, the RMAB community has achieved sublinear regret. 
A notable early algorithm is Colored UCRL2 \citep{ortner2012regret}, which adapts the UCRL2 framework for restless bandits and provides $\tilde{\mathcal{O}}(\sqrt{T})$ regret under mild mixing assumptions, nearly matching the fundamental lower bound of $\Omega(\sqrt{ST})$ established in \citep{ortner2012regret}, where $S$ is the total number of states.  
More recent methods refine this paradigm using UCB-driven optimism—e.g., UCWhittle \citep{wang2023optimistic} and Restless-UCB \citep{wang2020restless}—to improve computational efficiency and adaptability. 
These developments supply both strong theoretical guarantees and broaden the practical applicability of RMAB solutions. These methods assume stationary environments and do not address change detection or adaptation to piecewise-stationary dynamics, which is the focus of our work.

Recent works have begun to address non-stationary RMABs under smoothly varying environments. For instance, \citet{shisher2025onlinelearningwhittleindices} propose a sliding-window Whittle index policy that adapts to time-varying transition kernels under a bounded variation budget, achieving sublinear dynamic regret. Similarly, \citet{hung2025nonstationaryrestlessmultiarmedbandits} develop an optimistic Whittle-based algorithm with arm-specific sliding windows and establish regret guarantees under a global variation budget constraint. These approaches model non-stationarity through gradual drift in transition dynamics and rely on continuous adaptation mechanisms such as sliding windows.

Another line of work considers more general non-stationarity under adversarial environments. For example, \citet{xiong2024provablyefficientreinforcementlearning} study adversarial RMABs with unknown transitions and bandit feedback, where rewards can change arbitrarily across episodes, and develop reinforcement learning algorithms with $\tilde{O}(\sqrt{T})$ regret guarantees. These settings capture worst-case non-stationarity and require fundamentally different techniques based on online learning and occupancy measures.

In contrast, our work focuses on piecewise-stationary environments with abrupt changes, where the main challenge lies in detecting change points and coordinating exploration with detection. Compared to variation-budget settings that assume gradual drift, and adversarial settings that allow arbitrary changes, our formulation imposes a structured non-stationarity model that enables sharper regret guarantees while still capturing realistic abrupt changes. 

\textbf{Piecewise-Stationary Bandits.} 
Existing approaches to PS-MAB can be broadly categorized into \emph{passive} and \emph{active} methods. 
Passive methods adapt to changes through forgetting mechanisms such as Discounted UCB \citep{kocsis2006discounted}, Sliding-Window UCB \citep{garivier2011upper}, and their Thompson Sampling counterparts \citep{raj2017taming,qi2023discounted}, all achieving $\gO(\sqrt{KMT}\,\mathrm{polylog}(T))$ regret. 
Active methods, in contrast, integrate change detection with standard bandit algorithms. 
Examples include CD-UCB \citep{liu2018change} and M-UCB \citep{cao2019nearly}, which combine UCB with detectors like CUSUM or simple window-based tests to obtain $\gO(\sqrt{KMT\log T})$ regret. 
More recent efforts, such as AdSwitch \citep{auer2019adaptively} and GLR-klUCB \citep{besson2022efficient}, avoid assuming prior knowledge of the number of changes $M$, advancing practical applicability. Among active methods, complexity is largely dictated by the change detector: M-UCB \citep{cao2019nearly} uses a lightweight window-based test, while GLR-klUCB adopts a more statistically refined GLR detector, incurring moderately higher cost but achieving nearly optimal guarantees and strong empirical performance. AdSwitch, in contrast, features a substantially more complex detection scheme with elegant theory but no empirical validation.
For example, \citet{manegueu2021generalized} design detectors based on empirical gaps between arms; 
\citet{suk2022tracking} quantify significant shifts at each step to avoid reliance on prior non-stationarity knowledge; 
and \citet{abbasi2023new} achieve comparable guarantees under abrupt changes. 
Complementary to these, multi-scale approaches such as \citet{pmlr-v134-wei21b} maintain multiple competing instances of the base algorithm to strengthen robustness. 
Although these works do not explicitly target the restless bandit problem, their designs—typically involving change detection or adaptive re-weighting—suggest potential applicability. 
However, their theoretical analyses usually rely on specific assumptions about the base algorithm (e.g., optimism or structural properties), 
which narrows the range of admissible base solvers that can be used while still guaranteeing strong regret bounds.

Our framework differs from these works in two key aspects: (i) it introduces a diminishing exploration mechanism to systematically supply evidence for detection without prior knowledge of change frequency, and (ii) it extends these ideas to restless bandits, where state evolution can confound detection.

%% file: 2-problem.tex
\vspace{-5pt}
\section{Problem Formulation}
\vspace{-5pt}
\label{sec:prelim}
We study the piecewise-stationary RMAB (PS-RMAB) problem. 
The system consists of a single learner interacting with $K$ arms. 
Each arm $k \in \{1, \ldots, K\}$ is modeled as a finite-state Markov chain 
$\mathcal{M}_k = (\mathcal{S}, P_k, R_k)$, 
where $\mathcal{S} = \{1, \ldots, S\}$ is the state space, 
$P_k : \mathcal{S} \times \mathcal{S} \to [0,1]$ is the transition kernel, 
and $R_k : \mathcal{S} \to [0,1]$ is the reward distribution. 

At each round $t=1,2,\ldots,T$, the learner selects an arm $A_t$, 
observes its current state $s_{A_t,t}$, 
and receives a random reward $R_{A_t}(s_{A_t,t}) \in [0,1]$. 
The expectation of this reward is denoted by 
$\mu_{A_t,s_{A_t,t}} = \mathbb{E}[R_{A_t}(s_{A_t,t})]$, 
which we call the \emph{per-state mean reward}. 

We assume the environment is \emph{piecewise-stationary}. 
There exist change points $0=\nu_0 < \nu_1 < \cdots < \nu_{M} = T$ such that, 
within each segment $i \in \{1,\ldots,M\}$, 
arm $k$ is characterized by a Markov chain 
$\mathcal{M}_k^{(i)} = (P_k^{(i)}, R_k^{(i)})$ 
with per-state mean rewards $\mu_{k,s}^{(i)}$. 

The object of interest is the \emph{arm-mean} vector 
$\bar{\boldsymbol{\mu}}^{(i)} = (\bar{\mu}_1^{(i)}, \ldots, \bar{\mu}_K^{(i)})$, 
which represents the long-run average rewards of all arms in segment $i$. 
This vector remains fixed within each segment $(\nu_{i-1}, \nu_i]$ 
but may change across consecutive segments 
(i.e., $\bar{\boldsymbol{\mu}}^{(i)} \neq \bar{\boldsymbol{\mu}}^{(i+1)}$ for at least one arm). 
Any modification that alters an arm mean is regarded as a change point. Figure~\ref{fig:PSRMAB} illustrates the instant reward of one of the arms in a PS-RMAB environment. 
\begin{figure}[ht]
    \centering
    \includegraphics[width=0.8\linewidth]{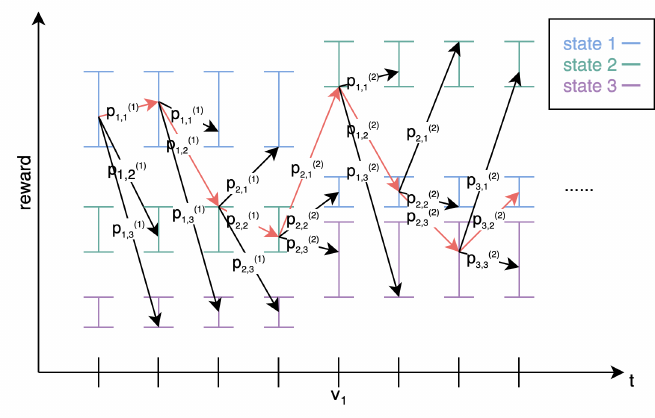}
    \caption{Instant reward of Arm $k$ in PS-RMAB environment}
    \label{fig:PSRMAB}
\end{figure}

Fix a stationary RMAB base solver $\mathcal{B}$. The segment-wise oracle $\pi^{\mathrm{seg}}(\mathcal{B})$ knows the true change points $\nu_1,\ldots,\nu_{M-1}$ and simply restarts $\mathcal{B}$ at the beginning of each segment (all other details—model class, observability, and action constraints—are identical to those faced by the learner). Let $r_t^{\pi}$ denote the expected reward at time $t$ under the learner’s policy and $r_t^{\mathrm{seg}}$ the counterpart under $\pi^{\mathrm{seg}}(\mathcal{B})$

We define the excess regret of a policy $\pi$ (relative to $\pi^{\mathrm{seg}}(\mathcal{B})$) as
\begin{equation}
    \Rex(T)
    \;=\;
    \mathbb{E}\!\Big[\sum_{t=1}^{T} r_t^{\mathrm{seg}}\Big]
    \;-\;
    \mathbb{E}\!\Big[\sum_{t=1}^{T} r_t^{\pi}\Big].
\end{equation}

This metric depends only on the overhead induced by exploration and change detection (and any reset misalignment), while factoring out the stationary performance of the chosen base solver.

%% file: 3-algorithm.tex

\section{Proposed Framework: Diminishing Exploration}

Our framework builds upon the active method paradigm~\citep{yu2009piecewise,liu2018change,
cao2019nearly}, which combines a 
stationary bandit algorithm with a change-detection module. 
Whenever a change is detected, the base algorithm is restarted in the new segment. 
To ensure sufficient environmental sampling for change detection, 
we further integrate a novel exploration scheme, \emph{diminishing exploration}, 
which dynamically balances the trade-off between exploration cost 
and detection delay. This modular design allows arbitrary stationary solvers and detectors to be flexibly combined without altering their internal structures.

\subsection{Diminishing Exploration}

Most existing works adopt a uniform exploration scheme that allocates a 
constant fraction of time to exploration~\citep{liu2018change,
cao2019nearly}. 
While effective for detecting changes, the approach results in regret that scales linearly with the exploration rate and therefore requires prior knowledge of the number of change points $M$ in order to appropriately set this rate. To address this limitation, we introduce \emph{diminishing exploration}, 
which adaptively reduces the frequency of forced exploration over time, thereby eliminating the need for prior knowledge of \(M\).

Let us define $\tau_{i}$ as the $i$-th time when the algorithm alarms a change. In the proposed method, a uniform exploration round starts at $u_{i-1}^{(j)}$ for $j\in\{1, 2, \ldots\}$ with $u_{i-1}^{(1)}=\tau_{i-1}+1$. i.e., the learner chooses to pull the arm $1, 2, \ldots, K$ at time $u_{i-1}^{(j)}, u_{i-1}^{(j)}+1, \ldots, u_{i-1}^{(j)}+K-1$, respectively. The process restarts whenever a new change $\tau_{i}$ is detected. 

\begin{figure}[H]
    \centering
    \includegraphics[width=\linewidth]{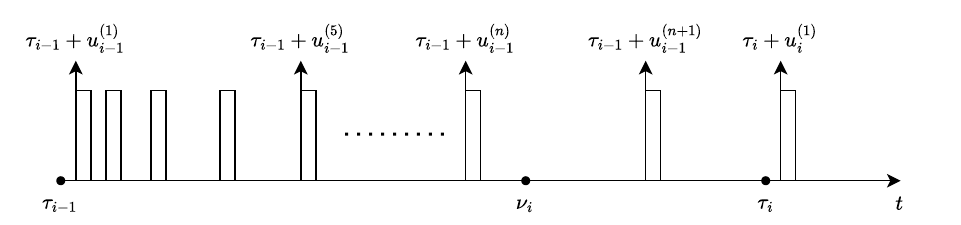}
    \caption{Diminishing exploration.}
    \label{fig:diminishing}
\end{figure}

We aim to balance the regret resulting from exploration and that associated with the performance of change detection by dynamically adjusting the exploration rate \textit{within a segment}. Let $u_{i-1}^{(j)}$ be the start time of the $j$-th uniform exploration session between two consecutive alarms $\tau_{i}$ and $\tau_{i-1}$. 
In our approach, these sessions are designed in such a way that $u_{i-1}^{(j+1)}-u_{i-1}^{(j)}$ is greater than $u_{i-1}^{(j)}-u_{i-1}^{(j-1)}$. This means that the inter-session time within the same time segment increases with $j$, which in turn results in a reduction in the exploration rate. Specifically, for the $i$-th segment, we choose $u_{i}^{(1)}=\left\lceil\left(\alpha-K/4\alpha\right)^{2}\right\rceil$ and
$u_{i}^{(j)} = \left\lceil u_{i}^{(j-1)}+\frac{K}{\alpha}\sqrt{u_{i}^{(j-1)}}+\frac{K^{2}}{4\alpha^{2}} \right\rceil,\quad \forall~1\leq i\leq M \textrm,~j\geq 2$, without the knowledge of $M$ and the parameter $\alpha$ will be chosen later. Clearly, we have $u_{i}^{(j-1)}+K < u_{i}^{(j)}$ for every $j\geq 2$; thus, these exploration phases will not overlap. Moreover, it is obvious that the duration between two exploration phases $u_{i}^{(j)}-u_{i}^{(j-1)}=\mathcal{O}(\sqrt{u_{i}^{(j-1)}})$ increases with time as Figure~\ref{fig:diminishing}; hence, the exploration rate decreases. Thus, we term this mechanism {\it diminishing exploration}.
\subsection{Integration of Modular Components}
The diminishing exploration scheme is orthogonal to the choice of 
change-detection algorithm. 
Any off-the-shelf detector (e.g., CUSUM, GLR, or window-based tests) 
can be used. 
Our framework therefore consists of three modular components:
(i) A \textbf{base algorithm} $\mathcal{B}$ for stationary RMABs (e.g., RestlessUCB, Colored-UCRL2).
(ii) A \textbf{change detector} $\mathcal{D}$ that raises alarms upon detecting arm-mean changes.
(iii) The \textbf{diminishing exploration} scheme $\mathcal{E}$ that schedules exploration rounds to feed $\mathcal{D}$.

Whenever $\mathcal{D}$ signals a change, the base algorithm $\mathcal{B}$ 
is restarted from scratch. 
The modular design ensures forward compatibility, allowing future advancements in any of the three components (e.g., stronger detectors or more efficient base solvers) to be seamlessly incorporated into the framework.

\subsection{Main Algorithm}

\begin{algorithm}[ht]
    \caption{CD-RMAB with Diminishing Exploration (Modular)}\label{alg:main_alg}
    \begin{algorithmic}[1]
        \REQUIRE Horizon $T$, number of arms $K$, exploration parameter $\alpha$; \\
                 \textbf{Base} solver $\mathcal{B}$ (for stationary RMABs); \\
                 \textbf{Change Detector} $\mathcal{D}$ (operates on arm-level reward streams)
        \STATE Initialize segment start $\tau \gets 0$, exploration cursor $u \gets \left\lceil\left(\alpha-\frac{K}{4\alpha}\right)^{2}\right\rceil$, and per-arm counters $n_k \gets 0$ for all $k \in \mathcal{K}$
        \FOR{$t = 1,2,\ldots,T$}
            \IF{$u \leq t-\tau < u+K$}
                \STATE $A_t \gets (t-\tau) - u + 1$
            \ELSE
                \IF{$t-\tau = u+K$} 
                    \STATE $u \gets \left\lceil u + \frac{K}{\alpha}\sqrt{u} + \frac{K^{2}}{4\alpha^{2}} \right\rceil$
                \ENDIF
                \STATE $A_t \gets \mathcal{B}.\textsc{Select}(t, \mathcal{H}_{t-1})$ 
            \ENDIF
            \STATE Pull arm $A_t$; observe current state $s_{A_t,t}$ and reward $r_t \in [0,1]$
            \STATE $\mathcal{B}.\textsc{Update}(A_t, s_{A_t,t}, r_t)$ 
            \STATE $n_{A_t} \gets n_{A_t} + 1$;\quad $Z_{A_t, n_{A_t}} \gets r_t$ 
            \IF{$\mathcal{D}.\textsc{Alarm}(\{Z_{k,\cdot}\}, t) = \textsc{True}$} 
                \STATE $\tau \gets t$;\quad $u \gets 1$;\quad $n_k \gets 0$ for all $k \in \mathcal{K}$
                \STATE $\mathcal{B}.\textsc{Reset}()$ 
            \ENDIF
        \ENDFOR
    \end{algorithmic}
\end{algorithm}

\vspace{-10pt}
The proposed framework is described in Algorithm~\ref{alg:main_alg}, 
which interleaves diminishing exploration (lines 3--7) with actions taken by a 
stationary RMAB base solver $\mathcal{B}$ (line 8). $\mathcal{H}_{t-1}$ denotes the available history for arm selection, which can be either base-only ($\mathcal{H}^{\mathcal{B}}_{t-1}$), where the base algorithm relies solely on its own interaction history, or shared (${\mathcal{H}^{\mathcal{B}}_{t-1},\mathcal{H}^{\mathcal{E}}_{t-1}}$), where it additionally incorporates information gathered during the exploration phase (line 9).
After each step, observed rewards are passed to the change detector $\mathcal{D}$ 
(lines 12--14). 
If $\mathcal{D}$ raises an alarm, the algorithm resets both the exploration 
schedule and the base solver (lines 15--16). 

Although Algorithm~\ref{alg:main_alg} is presented in a modular form, 
our framework is compatible with a wide range of instantiations. 
For example, $\mathcal{B}$ can be any stationary RMAB solver 
(e.g., RestlessUCB~\citep{wang2020restless}, Colored-UCRL2~\citep{ortner2012regret}), 
while $\mathcal{D}$ can be instantiated by standard change-detection methods such as 
CUSUM~\citep{liu2018change}, GLR~\citep{besson2022efficient}, 
or window-based tests~\citep{cao2019nearly}. 
This design highlights the plug-and-play nature of our approach: 
diminishing exploration can be seamlessly integrated with any $\mathcal{B}$ and $\mathcal{D}$.

%% file: 4-analysis.tex
\vspace{-5pt}
\section{Regret Analysis}
\label{sec:analysis}
\vspace{-5pt}

Since the framework allows flexible choices of base solvers $\mathcal{B}$ and change detectors $\mathcal{D}$ at the algorithmic level, our first main result (Theorem~\ref{thm:general_extra}) establishes a general excess regret bound that applies to any such combination. Our analysis treats the base solver as a black box and does not rely on structural properties specific to particular algorithms (e.g., indexability, monotonicity, or other problem-dependent assumptions). To obtain concrete regret guarantees, we then instantiate the framework with specific choices of base solvers and change detectors that satisfy the required conditions. This separation highlights the generality of our framework while ensuring that the final regret bounds remain valid.

To establish an upper bound on the regret, we first introduce the necessary 
notations and event definitions that characterize the behavior of the change 
detection module. These events allow us to disentangle the errors introduced by 
false alarms, detection delay, and the inherent stochasticity of the Markovian 
reward processes. 

Let $\tau_i$ denote the time at which the $i$-th change point is detected by the 
change detector, where $0 \leq i \leq M-1$. 
We define the set of \emph{false alarm events} as 
$F_i := \{\tau_i < \nu_i\}$ for $1 \leq i \leq M-1$, and $F_0 := \{\tau_0 = 0\}$. 
That is, $F_i$ indicates that the $i$-th declared change occurs strictly before the 
true change point $\nu_i$. 
Similarly, we define the set of \emph{successful detection events} as 
$D_i := \{\nu_i \leq \tau_i \leq \nu_i + h_i\}$ for $1 \leq i \leq M-2$, 
where $h_i$ is a detection delay parameter determined by the underlying change 
detector. For completeness, we also set $D_0 := \{\tau_0 = 0\}$ and 
$D_{M-1} := \{\tau_{M-1} \leq T\}$.

In addition, for the regret decomposition we define two gap quantities. 
For each arm $k$ and segment $i$, 
$\Delta_k^{(i)} := \max_{k' \in \mathcal{K}} (\bar{\mu}_{k'}^{(i)} - \bar{\mu}_k^{(i)})$ 
and $\delta_k^{(i)} := |\bar{\mu}_k^{(i+1)} - \bar{\mu}_k^{(i)}|$. 
Finally, we define $\delta^{(i)} := \max_{k \in \mathcal{K}} \delta_k^{(i)}$ as the 
maximum mean shift across arms at change point $\nu_i$.

Recall the excess regret $\Rex(T) := \mathbb{E}\!\big[\sum_{t=1}^{T} r_t^{\mathrm{seg}}\big]-\mathbb{E}\!\big[\sum_{t=1}^{T} r_t^{\pi}\big]$, where $r_t^{\mathrm{seg}}$ corresponds to the segment–oracle that restarts the same base solver at true change points. We decompose $\Rex(T)$ into an \emph{exploration cost} and a \emph{change–detection cost}. Throughout the analysis, we focus on the \emph{fully modular} setting, where arm selection relies solely on the base algorithm’s history $\mathcal{H}^{\mathcal{B}}_{t-1}$. Incorporating $\mathcal{H}^{\mathcal{E}}_{t-1}$ would make the analysis dependent on the specific base algorithm.

\begin{theorem}[Extra–regret bound]\label{thm:general_extra}
For any base solver $\mathcal{B}$ and change detector $\mathcal{D}$, the excess regret of Algorithm~\ref{alg:main_alg} satisfies
\begin{multline} 
\Rex(T)\leq \underbrace{2\alpha\sqrt{MT}}_{(a)}
    +\underbrace{\sum^{M-1}_{i=1}\E\left[\tau_{i}-\nu_{i}\middle | D_{i}\overline{F}_{i}D_{i-1}\overline{F}_{i-1}\right]}_{(b)}\\
+\underbrace{T\sum^{M}_{i=1}\mathbb{P}\left(F_{i}\middle| \overline{F}_{i-1}D_{i-1}\right)+T\sum^{M-1}_{i=1}\mathbb{P}\left(\overline{D}_{i}\middle|\overline{F}_{i}\overline{F}_{i-1}D_{i-1}\right)}_{(c)}, \label{eqn:regret_bound}
\end{multline}
\end{theorem}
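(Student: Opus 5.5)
The plan is an induction over segments on a good event, in the style of the regret decompositions of CD-UCB and M-UCB \citep{liu2018change,cao2019nearly}. Define the good event for segment $i$ as $G_i := D_i\overline{F}_i$, with $G_0 = D_0$. On $G_{i-1}\cap G_i$ the $(i-1)$-th alarm falls in $[\nu_{i-1},\nu_{i-1}+h_{i-1}]$ and the $i$-th in $[\nu_i,\nu_i+h_i]$. Between them the learner runs the fully modular algorithm: it uses only $\mathcal{H}^{\mathcal{B}}_{t-1}$ and restarts $\mathcal{B}$ at $\tau_{i-1}$. Its per-round loss against $\pi^{\mathrm{seg}}(\mathcal{B})$ on $(\nu_{i-1},\nu_i]$ therefore comes from three sources only: the forced exploration rounds, the misaligned rounds $(\nu_{i-1},\tau_{i-1}]$, and the misaligned rounds $(\nu_i,\tau_i]$. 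Rewards lie in $[0,1]$, so each such round costs at most $1$. Every other round is one where $\mathcal{B}$ is queried on a history generated exactly as after a restart at the segment start.

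The first technical point is the shifted restart. Because the segment oracle restarts at $\nu_{i-1}$ while we restart at $\tau_{i-1}\ge\nu_{i-1}$, I would couple the learner's base-solver trajectory to the oracle's. Under the fully modular assumption, skipping exploration rounds does not change the sequence of base decisions. Each misaligned round then costs at most $1$, which yields the conditional detection-delay term $(b)$ after summing $\E[\tau_i-\nu_i\mid\cdot]$ over $i$. The delay after $\nu_{i-1}$ is charged to segment $i-1$'s term. Restlessness makes the coupling imperfect, since arm states keep evolving while $\mathcal{B}$ is not queried. I will handle this by simply charging any such discrepancy as a per-round loss of at most $1$ whenever an exploration step intervenes. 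This is the step I expect to be the main obstacle, and if the coupling cannot be made exact I would absorb the discrepancy into term $(a)$.

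Next comes the exploration cost $(a)$. Let $\ell$ be the length of a restart interval. Solving the recursion $u^{(j)} \approx u^{(j-1)} + \frac{K}{\alpha}\sqrt{u^{(j-1)}} + \frac{K^2}{4\alpha^2}$ gives $\sqrt{u^{(j)}} \approx \sqrt{u^{(j-1)}} + \frac{K}{2\alpha}$. This identity is exactly why the constant $K^2/4\alpha^2$ and the initial value $(\alpha-K/4\alpha)^2$ were chosen. Hence $\sqrt{u^{(j)}}\approx \alpha + (j-1)\frac{K}{2\alpha}$, so the number of sessions started before time $\ell$ is at most about $\frac{2\alpha}{K}\sqrt{\ell}$. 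Each session costs $K$ rounds, for a total of at most $2\alpha\sqrt{\ell}$. On the good event there are $M$ intervals with lengths summing to $T$. By Cauchy--Schwarz, $\sum_i 2\alpha\sqrt{\ell_i}\le 2\alpha\sqrt{MT}$.

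Finally, the bad events give term $(c)$. Apply a chain-of-conditioning (union) argument: the probability that the good chain fails first at index $i$ is at most $\mathbb{P}(F_i\mid\overline{F}_{i-1}D_{i-1}) + \mathbb{P}(\overline{D}_i\mid\overline{F}_i\overline{F}_{i-1}D_{i-1})$. On such a failure, bound the remaining regret trivially by $T$. Summing over $i$, and using that conditional expectations on the good chain are bounded by their conditional versions (with probability factors at most $1$), gives the stated inequality.
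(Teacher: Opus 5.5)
Your skeleton matches the paper's: peel off segments on the good events $\overline{F}_iD_i$, charge $T$ at the first failure, charge one unit per delay round, and count exploration sessions via $\sqrt{u^{(j)}}\ge\sqrt{u^{(j-1)}}+K/(2\alpha)$ followed by Cauchy--Schwarz. The genuine gap is the coupling step you flagged yourself, and the patch you propose for it does not work.

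The claim that every non-exploration, non-delay round earns what the oracle earns on the matching round needs two things:
\begin{itemize}
\item[(i)] $\mathcal{B}$ must act on its own step count and history, not on the wall-clock $t$ passed to $\textsc{Select}(t,\cdot)$;
\item[(ii)] deleting the exploration rounds must leave the environment seen by $\mathcal{B}$ unchanged.
\end{itemize}
Condition (ii) holds for i.i.d.\ rewards but fails for restless arms. During each exploration block every arm makes $K$ extra transitions, so at $\mathcal{B}$'s next step the conditional law of the states given its history differs from the oracle's.

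Charging ``at most $1$ whenever an exploration step intervenes'' does not bound this, because the two runs need not resynchronize afterwards. $\mathcal{B}$'s later observations differ, and so do all its later decisions. The discrepancy can therefore be of the order of the remaining segment length, not $O(K)$ per block. For a black-box $\mathcal{B}$ this really happens. Take tridiagonal kernels as in the experiments, and a solver that switches permanently to a poor arm when it sees an apparently impossible one-step transition. That solver triggers in your run right after an exploration block, but never in the oracle's. Mixing cannot help, since it re-couples the arm states but not $\mathcal{B}$'s internal history. So nothing can be absorbed into term (a).

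The paper does not supply the missing idea either. Lemma~\ref{applemma:excess_regret_stat} simply asserts that the base-solver regrets $C_i$ of the learner and of $\pi^{\mathrm{seg}}(\mathcal{B})$ cancel, which is your coupling claim with the same unproven status. Making either argument rigorous requires one of two things. One option is an explicit hypothesis on $\mathcal{B}$, e.g.\ that interleaved exploration rounds and a later restart cannot lower the expected reward $\mathcal{B}$ collects on its own rounds. The other is to restrict to the one-state case with a step-count-driven $\mathcal{B}$.

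In the one-state case your trajectory coupling (e.g.\ via a shared per-segment reward table) does go through, and it is cleaner than the paper's recursion. To make it work, bound the reward difference pathwise on the coupled space and split on the global good event $\bigcap_i G_i$ only at the end. Conditioning on $G_i$ first would change the law of $\mathcal{B}$'s history, because alarms depend on base-round rewards.

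Finally, with $u^{(1)}=\lceil(\alpha-K/4\alpha)^2\rceil$ your session count is exactly $mK\le 2\alpha\sqrt{\ell}-2\alpha^2+\tfrac{3}{2}K$. The clean $2\alpha\sqrt{MT}$ therefore needs $\alpha^2\ge 3K/4$; otherwise an additive $\tfrac{3}{2}KM$ remains, and the paper drops the same term.
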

\vspace{-5pt}
\begin{proof}
We rewrite the excess regret as the difference between the total regret of our policy and that of the segment oracle, i.e. $\Rex(T)=(\E[\textstyle\sum_{t=1}^T r_t^{\star}]-\E[\textstyle\sum_{t=1}^T r_t^{\pi}])-(\E[\textstyle\sum_{t=1}^T r_t^{\star}]-\E[\textstyle\sum_{t=1}^T r_t^{\mathrm{seg}}]).$
The first term corresponds to the regret of running the base algorithm within each segment, i.e., the sum of the base solver’s regrets across all $M$ segments.
For the second term, we expand it recursively across change points. In the ideal case of correct detections, this adds the detection delays together with the sum of the base solver’s regrets over all $M$ segments.  
If a false alarm or missed detection occurs, we treat these as worst-case events and charge them up to $T$ regret each.  
Thus, once a concrete change detector is chosen, the analysis ensures that the probabilities of false alarms and missed detections are sufficiently small, leaving the main contributions from the forced exploration and the detection delay.  
Finally, since the base solver’s regret cancels out between the two terms, we arrive exactly at the decomposition stated in Theorem~\ref{thm:general_extra}. The detailed proof is deferred to Supplementary Materials.
\end{proof}
\vspace{-5pt}
The bound in Theorem~\ref{thm:general_extra} is entirely independent of the 
choice of base solver~$\mathcal{B}$. 
Indeed, the \emph{excess regret} arises solely from two sources: 
(i) the forced exploration introduced by the diminishing exploration schedule 
(Algorithm~\ref{alg:main_alg}), contributing term~(a), and 
(ii) the behavior of the change detection module~$\mathcal{D}$, which governs 
the detection delay and false-alarm probabilities, contributing terms~(b)--(c). 
\vspace{-5pt}
\subsection{One-State Special Case} \label{subsec:one-state}
\vspace{-5pt}
In this subsection, we consider the one-state setting, which reduces the PS-RMAB
to the classical piecewise-stationary bandit problem. 
We first present a general upper bound on the excess regret that does not assume
any particular base solver or change detector. 
We then instantiate this bound with specific algorithms to obtain concrete order guarantees. To make the general extra-regret bound in Theorem~\ref{thm:general_extra} 
concrete, we now instantiate both the base solver and the change detector. 

\textbf{Base Solver.} 
For the one-state case, the base solver $\mathcal{B}$ can be chosen as the classical 
UCB algorithm, which is known to achieve near-optimal regret in stationary MABs. 
Since the one-state PS-RMAB reduces to the piecewise-stationary MAB setting, 
this choice is natural and ensures that the base solver itself does not inflate 
the overall regret order.

\textbf{Change Detector.} 
We instantiate the CD module with the M-UCB Change Detector~\citep{cao2019nearly}, 
a sliding-window mean-shift test that raises an alarm whenever the empirical 
means of two consecutive halves of a window differ by more than a threshold. 
Its parameters $(w,b)$ are chosen according to
\begin{equation}
    w \;=\; \frac{4}{\delta^2}\Bigl(\sqrt{\log(2KT^2)} + \sqrt{\log(2T)}\Bigr)^2, \label{eqn:w}
\end{equation}
\begin{equation}
b \;=\; \sqrt{w \log(2KT^2/2)}.  \label{eqn:b}
\end{equation}
where $\delta$ is a known lower bound on the magnitude of mean shifts. 
The full pseudocode is in Appendix~\ref{app:cd_alg}. 

\begin{remark}
Assumptions such as the existence of a known gap parameter $\delta$ 
(Assumption~\ref{ass:minimum_gap} in \cite{cao2019nearly}) and minimum segment lengths 
(Assumption~\ref{asm:seg_length}) are required specifically for the M-UCB detector to guarantee 
its statistical properties. These are \emph{not} intrinsic to our framework: 
diminishing exploration itself does not impose additional requirements. 
This distinction highlights the modularity of our approach, which allows the 
plug-and-play use of different CD algorithms under their respective conditions. 
\end{remark}

Combining the base solver $\mathcal{B}$ with the M-UCB Change Detector, 
we obtain the following corollary of Theorem~\ref{thm:general_extra}.

\begin{corollary}\label{cor:onestate_ucb_mucb}
Combining Algorithm~\ref{alg:main_alg} (diminishing exploration with resets) 
with the base solver $\mathcal{B}$ and Algorithm~\ref{alg:mucb_cd} with parameters $(w,b)$ given in Equations~\ref{eqn:w} and~\ref{eqn:b}, 
the excess regret is upper-bounded as
\begin{equation}
    \Rex(T)
    \;\le\;
    \mathcal{O}\!\left(\sqrt{KMT\log T}\right).
\end{equation}
\end{corollary}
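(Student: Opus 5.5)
The plan is to instantiate Theorem~\ref{thm:general_extra} and bound its three terms (a)--(c) separately. Term (a) is already $2\alpha\sqrt{MT}$. At the end we choose $\alpha$ to balance it against the detection delay, say $\alpha=\Theta(\sqrt{K\log T})$ (equivalently $\sqrt{Kw}\,\delta$ up to constants). This makes (a) of order $\sqrt{KMT\log T}$.

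For term (b), we work on the event $D_i\overline{F}_i D_{i-1}\overline{F}_{i-1}$, so the $(i-1)$-th alarm occurred within $h_{i-1}$ of $\nu_{i-1}$. The detector is therefore freshly reset and sees only post-$\nu_{i-1}$ samples until $\nu_i$. We will use that M-UCB raises an alarm once arm $k^\ast=\arg\max_k\delta_k^{(i)}$ has accumulated $w/2$ post-change samples. This is the standard concentration argument of \citet{cao2019nearly}: with $w$ as in (\ref{eqn:w}), the two half-window means differ by at least $\delta-2\sqrt{\log(2T)/ (w/2)}\cdot(\cdot)$, which exceeds $b$ except with probability $1/T$. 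In the fully modular setting, the only guaranteed samples of $k^\ast$ come from exploration rounds. We therefore bound the delay by the time needed for $w/2$ exploration sessions starting from the current exploration cursor. Under the schedule $u^{(j)}\approx(\sqrt{u^{(j-1)}}+K/(2\alpha))^2$, we have $\sqrt{u^{(j)}}\approx \sqrt{u^{(1)}}+(j-1)K/(2\alpha)$. Hence the gap between sessions near elapsed time $s$ is about $(K/\alpha)\sqrt{s}$, and $w/2$ sessions take about $\tfrac{wK}{2\alpha}\sqrt{\nu_i-\tau_{i-1}}+O(w^2K^2/\alpha^2)$ rounds. This gives $h_i$. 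Summing over $i$ and using Cauchy--Schwarz, $\sum_i\sqrt{\nu_i-\nu_{i-1}}\le\sqrt{MT}$, so (b) is at most $\tfrac{wK}{2\alpha}\sqrt{MT}+O(Mw^2K^2/\alpha^2)$. With $w=O(\log T/\delta^2)$ and the choice of $\alpha$, this is $O(\sqrt{KMT\log T})$. The additive term is lower order under the minimum segment length Assumption~\ref{asm:seg_length}.

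For term (c), we bound the false-alarm probability $\mathbb{P}(F_i\mid\overline{F}_{i-1}D_{i-1})$ by a union bound over at most $T$ time steps and $K$ arms. Within a stationary stretch, each window test exceeds $b=\sqrt{w\log(2KT^2/2)}$ with probability at most $1/(KT^2)$ by Hoeffding. This gives a per-segment probability of $O(1/T)$, so $T\sum_i(\cdot)=O(M)$. The missed-detection probability $\mathbb{P}(\overline{D}_i\mid\cdot)$ is $O(1/T)$ by the same concentration used for (b), provided the segment is long enough for $\nu_i+h_i<\nu_{i+1}$, which is again Assumption~\ref{asm:seg_length}. This contributes $O(M)$ as well. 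Adding the three terms gives $O(\sqrt{KMT\log T})$.

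The main obstacle is (b): correctly translating the diminishing schedule into a delay bound $h_i$. Two effects complicate this. First, the cursor restarts at the alarm $\tau_{i-1}$, not at $\nu_{i-1}$. Second, the window may contain a mix of pre- and post-change samples. I would first handle the second effect by requiring a full $w/2$ post-change samples. For the first, I would use $\tau_{i-1}\ge\nu_{i-1}$ on $\overline{F}_{i-1}$, which keeps elapsed time bounded by the segment length, so the $\sqrt{\cdot}$ sum closes via Cauchy--Schwarz.
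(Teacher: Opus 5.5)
Your proposal is correct and follows the paper's route. You plug the M-UCB guarantees into Theorem~\ref{thm:general_extra}: false-alarm and missed-detection probabilities are $O(1/T)$ via Hoeffding and union bounds, the delay $h_i$ comes from the diminishing schedule (the paper's samples-to-time lemma) summed by Cauchy--Schwarz, and $\alpha\asymp\sqrt{K\log(KT)}$. You should make explicit that Assumption~\ref{asm:seg_length} is also what supplies $w/2$ pre-change exploration samples after $\tau_{i-1}$, so the first half-window is not contaminated by post-change samples.

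Your sharper treatment of the ceilings is worth keeping: $\sqrt{u}$ grows per session by $\frac{K}{2\alpha}+O(u^{-1/2})$, which costs only lower-order additive terms. The paper's cruder per-session factor $\frac{K}{2\alpha}+1$ leaves a term $w\sqrt{M(T+M)}$ of order $\sqrt{MT}\,\log(KT)/\delta^{2}$. That term is not $O(\sqrt{KMT\log T})$ for fixed $K$ as $T\to\infty$.
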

\vspace{-10pt}
\textbf{Proof Outline of Corollary~\ref{cor:onestate_ucb_mucb}.}
We decompose the excess regret into three sources: (i) the cost of diminishing exploration within stationary segments, (ii) the overhead caused by false alarms, and (iii) the delay incurred in detecting true changes. The following lemmas bound each component. Finally, combining these bounds with Theorem~\ref{thm:general_extra} yields the desired order result.

In what follows, the first lemma establishes a bound on the regret incurred during the diminishing exploration phase of the algorithm. 
We denote by $R_{\mathrm{DE}}(\tau_{i-1}, \nu_i)$ the regret accumulated due to exploration between the previous alarm time $\tau_{i-1}$ and the next change point $\nu_i$, 
and by $N_{\mathrm{DE},k}(\tau_{i-1}, \nu_i)$ the number of times arm $k$ is selected during this exploration phase.
\begin{lemma}[Diminishing exploration regret]\label{lemma:de_regret}
If the mean values of the arms remain the same during the time interval $[\tau_{i-1}, \nu_{i})$, then we have
\begin{equation}
    N_{\mathrm{DE},k}\left(\tau_{i-1}, \nu_{i}\right)\leq \frac{2\alpha\sqrt{\nu_{i}-\tau_{i-1}}}{K}+\frac{3}{2},
\end{equation}
and
\begin{equation}
    \E\left[R_{\mathrm{DE}}\left(\tau_{i-1}, \nu_{i}\right)\right]\leq 2\alpha\sqrt{\nu_{i}-\tau_{i-1}}+\frac{3}{2}K.
\end{equation}
\end{lemma}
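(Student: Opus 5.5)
The plan is to count the exploration sessions that fit into the window of length $n:=\nu_i-\tau_{i-1}$ and then charge each exploration pull at most unit regret. Each session pulls every arm exactly once. So $N_{\mathrm{DE},k}(\tau_{i-1},\nu_i)$ equals the number $J$ of session start times $u^{(j)}$ (measured relative to $\tau_{i-1}$) with $u^{(j)}\le n$, up to a possibly truncated last session, which still contributes at most one pull of arm $k$. The regret bound then follows from
\begin{equation*}
\E[R_{\mathrm{DE}}]\le\sum_k N_{\mathrm{DE},k}\cdot\max_k\Delta_k^{(i)},
\end{equation*}
using $\Delta_k^{(i)}\le 1$ and summing over $K$ arms. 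This turns $\frac{2\alpha\sqrt{n}}{K}+\frac32$ into $2\alpha\sqrt{n}+\frac32K$. So the real work is the first inequality.

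To count sessions, I would linearize the recursion through square roots. Without ceilings, the update $u\mapsto u+\frac{K}{\alpha}\sqrt u+\frac{K^2}{4\alpha^2}$ is exactly
\begin{equation*}
\sqrt{u^{(j)}}=\sqrt{u^{(j-1)}}+\frac{K}{2\alpha}.
\end{equation*}
The initialization $u^{(1)}=\lceil(\alpha-K/4\alpha)^2\rceil$ is chosen so that $\sqrt{u^{(1)}}\approx\alpha-\frac{K}{4\alpha}$. Because $\lceil\cdot\rceil$ only increases $u$, and the map is increasing, an induction gives
\begin{equation*}
\sqrt{u^{(j)}}\ge \alpha-\frac{K}{4\alpha}+(j-1)\frac{K}{2\alpha}.
\end{equation*}
After a reset the cursor restarts at $u=1$ rather than at $u^{(1)}$. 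For that case the same induction would start from $\sqrt{u^{(1)}}=1$, and I would check that the resulting lower bound is still of the form $\sqrt{u^{(j)}}\ge (j-c)\frac{K}{2\alpha}$ with $c\le 3/2$.

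The condition $u^{(J)}\le n$ then forces
\begin{equation*}
(J-1)\frac{K}{2\alpha}\le\sqrt n-\alpha+\frac{K}{4\alpha},
\end{equation*}
which gives $J\le \frac{2\alpha\sqrt n}{K}+\frac32-\frac{2\alpha^2}{K}$, and this is at most $\frac{2\alpha\sqrt n}{K}+\frac32$. The additive $\frac32$ is meant to absorb the $+\frac12$ from the $\frac{K}{4\alpha}$ offset plus one extra unit for the partially completed last session.

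The main obstacle is keeping the constant at exactly $\frac32$ uniformly over both initializations, since the ceiling slack and the reset to $u=1$ could each cost an extra session. If the direct induction leaves a residual, I would instead bound the gap $u^{(j)}-u^{(j-1)}\ge \frac{K}{\alpha}\sqrt{u^{(j-1)}}$ and compare with $\int du/\sqrt u$, which again yields $J\lesssim 2\alpha\sqrt n/K+O(1)$. I would then tune the bookkeeping so the $O(1)$ term is $\frac32$.
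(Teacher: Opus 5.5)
Your proposal is correct and follows the paper's proof essentially step for step. The paper also linearizes the recursion through square roots to get $u^{(n)}\ge\bigl(\alpha+\frac{(2n-3)K}{4\alpha}\bigr)^2$, inverts this at the last session start $u^{(m)}\le\nu_i-\tau_{i-1}$, and charges at most $K$ per session, reaching the same intermediate bound with the extra $-2\alpha^2$ term. The reset-to-$u=1$ case you flag, which the paper's proof does not treat, does go through: starting from $\sqrt{u^{(1)}}=1$ gives $J\le \frac{2\alpha\sqrt{n}}{K}+1-\frac{2\alpha}{K}$, and the ceilings only help a lower bound, so the integral-comparison fallback is unnecessary.
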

\vspace{-10pt}
This lemma shows that, on any stationary interval, the regret incurred purely from diminishing exploration is controlled by a term proportional to the square root of the interval length. Summing over all $M{+}1$ stationary segments and applying Cauchy–Schwarz, we obtain a total contribution of order $\sqrt{MT}$ up to logarithmic factors.

\begin{lemma}[Probability of false alarm]\label{lemma:prob_fa} 
Under Algorithm~\ref{alg:main_alg} with parameter in Equations~\ref{eqn:w} and \ref{eqn:b}, we have
\vspace{-5pt}
\begin{multline}
    \mathbb{P}\left(F_{i}\middle|\overline{F}_{i-1}D_{i-1}\right)\leq \\ 
    wK\left(1-\left(1-\exp\left(-2b^{2}/w\right)\right)^{\left\lfloor T/w \right\rfloor }\right)\leq \frac{1}{T}.    
\end{multline}
\end{lemma}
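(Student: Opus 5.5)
We condition on $\overline{F}_{i-1}D_{i-1}$, so the $(i-1)$-th alarm satisfies $\nu_{i-1}\le \tau_{i-1}\le \nu_{i-1}+h_{i-1}$. After this alarm the detector restarts with empty windows ($n_k\gets 0$ for all $k$). Every sample it then receives up to $\nu_i$ comes from segment $i$, and in the one-state case these samples are i.i.d.\ in $[0,1]$ with common mean $\bar\mu_k^{(i)}$ for each arm $k$. The event $F_i=\{\tau_i<\nu_i\}$ is contained in the event that, for some arm $k$ and some time before $\nu_i$, the M-UCB statistic on arm $k$'s most recent $w$ samples exceeds $b$. That statistic is
\begin{equation*}
\Bigl|\sum_{j=n-w+1}^{n-w/2}Z_{k,j}-\sum_{j=n-w/2+1}^{n}Z_{k,j}\Bigr|.
\end{equation*}
We therefore take a union bound over the $K$ arms and then bound the per-arm probability that this statistic ever exceeds $b$ while arm $k$ has at most $T$ samples.

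For a fixed arm and a fixed window ending at sample $n$, the statistic is a sum of $w$ independent terms, each lying in an interval of length $1$, with mean zero because both halves have the same mean. Hoeffding's inequality gives
\begin{equation*}
\mathbb{P}(\text{statistic}>b)\le 2\exp(-2b^2/w).
\end{equation*}
To reach the stated product form, we partition arm $k$'s sample indices into $w$ residue classes modulo $w$. Within one class, consecutive windows are disjoint, so their statistics are independent. Each class holds at most $\lfloor T/w\rfloor$ windows, so the probability that no window in that class triggers is at least $(1-\exp(-2b^2/w))^{\lfloor T/w\rfloor}$, with the constant absorbed as in \citet{cao2019nearly}. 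A union bound over the $w$ classes and the $K$ arms then gives the first inequality, with prefactor $wK$.

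For the second inequality we use $1-(1-x)^m\le mx$, which gives
\begin{equation*}
wK\bigl(1-(1-x)^{\lfloor T/w\rfloor}\bigr)\le wK\lfloor T/w\rfloor x\le KTx,
\end{equation*}
where $x=\exp(-2b^2/w)$. Substituting $b$ from Equation~\ref{eqn:b} gives $2b^2/w=2\log(KT^2)$, so $x=(KT^2)^{-2}$ and $KTx\le 1/T$. The remaining exponent constant is harmless.

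The main obstacle is the first inequality, specifically making the disjoint-window independence argument rigorous. Two difficulties arise. First, the windows are adaptively placed. Second, the number of samples each arm receives is random and depends on the base solver's choices. We resolve both by viewing each arm's reward stream as a fixed i.i.d.\ sequence, indexed by sample count rather than by time, which is a standard stack-of-rewards construction. Arm selection only decides how far into that sequence we read, so the bound over all prefixes of length at most $T$ holds uniformly. This argument requires the fully modular setting and one-state arms, so that segment-$i$ samples are genuinely i.i.d. Conditioning on $\overline{F}_{i-1}D_{i-1}$ is needed only so that the restarted detector sees no pre-change samples; if $D_{i-1}$ holds, the restart happens after $\nu_{i-1}$.
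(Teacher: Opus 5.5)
Your proposal is correct and follows essentially the same route as the paper's proof. Both use a union bound over the $K$ arms, split each arm's windows into $w$ residue classes of non-overlapping windows (so that, within a class, the time to the first trigger is geometric), apply a per-window Hoeffding/McDiarmid bound, and finish with $(1-x)^m\ge 1-mx$ and the chosen $b$ to reach $1/T$. Your version indexes the residue classes by arm-$k$ sample count rather than by time as the paper does, which is actually the cleaner way to justify disjointness under adaptive pulls. One caveat, shared with the paper: the two-sided bound really gives $p\le 2\exp(-2b^2/w)$, and the paper's own proof ends with exactly that factor $2$. That factor cannot be ``absorbed'' inside $1-(1-x)^{\lfloor T/w\rfloor}$, so the literal first inequality is a typo-level discrepancy, but the final $1/T$ bound holds under either reading of $b$.
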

\vspace{-10pt}
The above ensures that false alarms occur with vanishing probability. Therefore, the expected regret overhead due to spurious resets is negligible, contributing at most $\mathcal{O}(1)$ in expectation.

\begin{lemma}[Probability of successful detection]\label{lemma:prob_delay} 
Consider a piecewise-stationary bandit environment. For any $\boldsymbol{\mu}^{(i)},\boldsymbol{\mu}^{(i+1)}\in\left[0,1\right]^{K}$ with parameters chosen in \eqref{eqn:w} and \eqref{eqn:b} 
and
\begin{equation}
h_{i} = \left\lceil w\left(\frac{K}{2\alpha}+1\right)\sqrt{s_{i}+1}+\frac{w^{2}}{4}\left(\frac{K}{2\alpha}+1\right)^{2} \right\rceil,
\end{equation}
for some $k\in\gK, i\geq 1$ and $c>0$, under the \Algref{alg:main_alg}, we have 
\begin{equation}
    \mathbb{P}\left(D_{i}\middle|\overline{F}_{i}\overline{F}_{i-1}D_{i-1}\right)\geq 1-\frac{1}{T}.
\end{equation}
\end{lemma}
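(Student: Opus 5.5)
Conditioned on $\overline{F}_{i}\overline{F}_{i-1}D_{i-1}$, the previous alarm satisfies $\nu_{i-1}\le\tau_{i-1}\le\nu_{i-1}+h_{i-1}$ and no alarm fires before $\nu_i$. The algorithm therefore enters the post-change regime $(\nu_i,\nu_{i+1}]$ with a restart time $\tau_{i-1}$ lying in the previous segment. It suffices to show that within $h_i$ rounds after $\nu_i$ the detector fires with probability at least $1-1/T$. Here $s_i$ should be read as the elapsed time since the last reset at the change point, $s_i=\nu_i-\tau_{i-1}$ (or the index of the current exploration session). I would first fix an arm $k$ with $\delta_k^{(i)}\ge\delta$, which exists by the gap assumption. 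The proof has two parts: a deterministic count showing that by time $\nu_i+h_i$ arm $k$ has received at least $w/2$ post-change forced-exploration samples, followed by a concentration step for the M-UCB window statistic.

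\textbf{Counting step.} Each exploration session pulls every arm exactly once. The gap between sessions satisfies $u^{(j)}-u^{(j-1)}\le \frac{K}{\alpha}\sqrt{u^{(j-1)}}+\frac{K^2}{4\alpha^2}+1$, and the recursion is designed so that $\sqrt{u^{(j)}}\approx\sqrt{u^{(j-1)}}+\frac{K}{2\alpha}$, since $(\sqrt u+\frac{K}{2\alpha})^2=u+\frac{K}{\alpha}\sqrt u+\frac{K^2}{4\alpha^2}$. Hence $\sqrt{u^{(j)}}\le\sqrt{u^{(j-1)}}+\frac{K}{2\alpha}+1$ after absorbing the ceiling, which is where the factor $(\frac{K}{2\alpha}+1)$ comes from. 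Starting from relative time $s_i$, after $m=w/2$ further sessions the relative time is at most $\bigl(\sqrt{s_i+1}+m(\frac{K}{2\alpha}+1)\bigr)^2-1$. The elapsed time past $\nu_i$ is therefore at most $2m(\frac{K}{2\alpha}+1)\sqrt{s_i+1}+m^2(\frac{K}{2\alpha}+1)^2$, and with $m=w/2$ this is bounded by $h_i$. Forced samples alone thus guarantee at least $w/2$ post-change observations of arm $k$ by time $\nu_i+h_i$. Observations of arm $k$ made by the base solver only add more post-change samples to the stream and do not hurt.

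\textbf{Concentration step.} Consider the first time the sliding window of the last $w$ observations of arm $k$ has its second half entirely post-change. If the first half is entirely pre-change, the difference of half-sums has mean $\frac{w}{2}\delta_k^{(i)}\ge\frac{w}{2}\delta$. I would apply Hoeffding to each half separately, with deviation $\sqrt{\frac{w}{2}\log(2T)}$-type terms. Using $w=\frac{4}{\delta^2}(\sqrt{\log(2KT^2)}+\sqrt{\log(2T)})^2$, one gets $\frac{w}{2}\delta-b\ge\sqrt{w\log(2T)}$-scale slack, so the probability that $|\text{half-sum difference}|\le b$ is at most $1/T$. The window may instead straddle $\nu_i$ unevenly if fewer than $w/2$ pre-change samples are available. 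In that case I would use the monotonicity argument of \citet{cao2019nearly}: as the window slides, the mean gap in the statistic at the moment when exactly $w/2$ post-change samples enter is maximal. Conditioning on $D_{i-1}$ and on $\nu_i-\nu_{i-1}$ exceeding the minimum segment length (Assumption~\ref{asm:seg_length}) guarantees at least $w/2$ pre-change samples of arm $k$, again from the forced exploration.

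\textbf{Main obstacle.} I expect the main difficulty to be the counting step, namely handling ceilings and confirming that the quadratic growth bound matches $h_i$ exactly. A smaller issue is arguing that Hoeffding applies to the Markov/one-state rewards; in the one-state case the samples are independent, so this is immediate. Finally, the detection event only requires that an alarm occur by $\nu_i+h_i$. An alarm occurring earlier, still after $\nu_i$, also lies in $D_i$, so a union bound over arms is unnecessary and the single-window bound $1/T$ suffices.
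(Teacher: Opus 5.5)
Your proposal is correct and follows the paper's argument. The counting step is the paper's samples-to-time-steps lemma applied with $n=w/2$ and $t_d=s_i$, which is exactly how $h_i$ is chosen. The concentration step evaluates the M-UCB statistic at the balanced window with $w/2$ pre-change and $w/2$ post-change samples, where its mean is maximal, and uses the slack $\frac{w}{2}\delta-b\ge\sqrt{w\log(2T)}$ given by the choice of $(w,b)$. The paper does this with a single McDiarmid bound on the whole difference (with $c=2\sqrt{\log(2T)/w}$) instead of your two half-window Hoeffding bounds, but both give $1/T$. The only small discrepancy is your reading of $s_i$: in the paper it is the segment length $\nu_i-\nu_{i-1}$. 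Since $\nu_i-\tau_{i-1}\le s_i$ on $\overline{F}_{i-1}$ and the delay bound is increasing in $s_i$, this changes nothing.
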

\vspace{-10pt}
This lemma guarantees that each genuine change is detected with high probability, which prevents the algorithm from staying too long in a mismatched segment.

\begin{lemma}[Expected detection delay]\label{lemma:exp_delay} 
Consider a piecewise-stationary bandit environment. For any $\boldsymbol{\mu}^{(i)},\boldsymbol{\mu}^{(i+1)}\in\left[0,1\right]^{K}$ with parameters chosen in \eqref{eqn:w} and \eqref{eqn:b} 
and
\begin{equation}
h_{i} = \left\lceil w\left(\frac{K}{2\alpha}+1\right)\sqrt{s_{i}+1}+\frac{w^{2}}{4}\left(\frac{K}{2\alpha}+1\right)^{2} \right\rceil,
\end{equation}
for some $K\in\gK, i\geq 1$ and $c>0$, under the \Algref{alg:main_alg}, we have
\begin{equation}
    \E\left[\tau_{i}-\nu_{i}\middle| \overline{F}_{i}D_{i}\overline{F}_{i-1}D_{i-1}\right]\leq h_{i}.
\end{equation}
\end{lemma}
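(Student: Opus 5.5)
The plan is to use that $D_i$ already caps the delay. On the conditioning event $\overline{F}_{i}D_{i}\overline{F}_{i-1}D_{i-1}$, the event $D_i=\{\nu_i\le \tau_i\le \nu_i+h_i\}$ holds, so $0\le \tau_i-\nu_i\le h_i$ pointwise on this event. Taking conditional expectations gives $\E[\tau_i-\nu_i\mid \overline{F}_{i}D_{i}\overline{F}_{i-1}D_{i-1}]\le h_i$. This works provided the conditioning event has positive probability, and Lemma~\ref{lemma:prob_delay} gives probability at least $1-1/T>0$ once $\overline{F}_i\overline{F}_{i-1}D_{i-1}$ has positive probability. Formally this is trivial, so the substantive work is to check that the specific $h_i$ in the statement is a legitimate choice for the detection window. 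In other words, within $h_i$ steps after $\nu_i$ the diminishing exploration must supply enough samples to fill the M-UCB window.

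I would therefore justify the form of $h_i$ as follows. Let $s_i$ denote the time elapsed since the last reset, i.e.\ $s_i=\nu_i-\tau_{i-1}$, measured on the exploration clock. After $\nu_i$ the detector needs $w/2$ post-change samples of the changed arm inside a window of $w$ samples. Each exploration session samples every arm exactly once. On $\overline F_i$ no reset occurs in $(\nu_i, \tau_i)$, so the sessions continue at the schedule $u^{(j+1)}=\lceil u^{(j)}+\frac{K}{\alpha}\sqrt{u^{(j)}}+\frac{K^2}{4\alpha^2}\rceil$. Writing $\sqrt{u^{(j+1)}}\le \sqrt{u^{(j)}}+\frac{K}{2\alpha}+1$ (the ceiling accounts for the $+1$), after $w$ further sessions we have $\sqrt{u}\le\sqrt{s_i+1}+w(\frac{K}{2\alpha}+1)$. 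Squaring gives the elapsed time bound
\[
u-s_i\le 2w\Bigl(\tfrac{K}{2\alpha}+1\Bigr)\sqrt{s_i+1}+w^2\Bigl(\tfrac{K}{2\alpha}+1\Bigr)^2 .
\]
Counting only the $w/2$ sessions needed to replace the second half of the window halves the linear coefficient and quarters the quadratic one. This reproduces exactly $h_i$.

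Within this count, the concentration argument behind Lemma~\ref{lemma:prob_delay} (Hoeffding with the choice of $w,b$ in \eqref{eqn:w}--\eqref{eqn:b}) shows the test fires by $\nu_i+h_i$ with probability at least $1-1/T$. That is precisely what licenses defining $D_i$ with this $h_i$. The monotonicity of $\tau_i-\nu_i$ under the bound then completes the argument.

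The main obstacle is bookkeeping rather than analysis. The ceilings in the schedule must be controlled, and the bound must hold even though base-solver pulls may also contribute samples of the changed arm. Those extra samples only help fill the window, so they can only shorten the delay. I would state explicitly that the bound is taken in the worst case, where only exploration samples are used.
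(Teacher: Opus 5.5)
Your core argument is correct and is the paper's: on the conditioning event, $D_i$ forces $0\le \tau_i-\nu_i\le h_i$, which the paper writes as the truncated tail sum $\sum_{j=1}^{h_i}\mathbb{P}(\tau_i\ge \nu_i+j\mid \overline{F}_iD_i\overline{F}_{i-1}D_{i-1})\le h_i$. The schedule computation you add to justify $h_i$ is not needed here, since the bound holds for whatever $h_i$ defines $D_i$; it reproduces the paper's separate samples-to-time-steps lemma with $n=w/2$, and it supports the detection-probability lemma rather than this one.
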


In other words, once a change occurs, the delay before detection is bounded in expectation by $h_i$, which scales sublinearly with $s_i$ under our parameter choice. This ensures that the regret accumulated during delays is well controlled.

\textbf{Putting everything together.}
The regret from diminishing exploration (Lemma~\ref{lemma:de_regret}) is $\mathcal{O}(\sqrt{KMT})$; the false alarm contribution (Lemma~\ref{lemma:prob_fa}) is negligible; and the regret during detection delays (Lemmas~\ref{lemma:prob_delay}–\ref{lemma:exp_delay}) is bounded by $\tilde{\mathcal{O}}(\sqrt{KMT})$. Substituting these into the general bound in Theorem~\ref{thm:general_extra}, we obtain
\begin{equation}
\Rex(T) \le \mathcal{O}\!\left(\sqrt{KMT\log T}\right),
\end{equation}
which concludes the proof sketch.

In the one-state setting, PS-RMAB reduces to the classical piecewise-stationary MAB. 
In addition to the extra-regret perspective, our framework naturally extends to a 
standard regret analysis against a clairvoyant oracle that always pulls the best arm 
in each segment. 

If the base solver $\mathcal{B}$ is chosen as UCB, then under bounded rewards and a 
change detector with controlled false-alarm probability and expected delay, the total 
regret of Algorithm~\ref{alg:main_alg} remains near-optimal, scaling as 
$\tilde{\mathcal{O}}(\sqrt{KMT})$. 
The detailed proof is deferred to Supplementary Materials.
\vspace{-5pt}
\subsection{General Case}
\vspace{-5pt}
Compared to the one-state case, the general PS-RMAB setting introduces additional challenges 
due to the temporal dependence of rewards induced by the Markovian dynamics. 
To handle this dependence, our analysis leverages the mixing time of each arm’s Markov chain, 
which allows us to approximate independence and apply concentration inequalities. 
To make this argument rigorous, we require a technical assumption ensuring that each arm’s 
Markov chain admits well-defined steady-state behavior. 
In particular, ergodicity guarantees the existence of unique stationary distributions and 
enables the use of concentration tools in our regret analysis.

\begin{assumption}\label{assump:ergodic}
All Markov chains $\mathcal{M}_k^{(i)} = (P_k^{(i)},R_k^{(i)})$ are ergodic. 
This ensures the existence of a unique steady-state distribution $d_k^{(i)}$ for 
each arm $k$ in segment $i$, satisfying 
$d_k^{(i)} = d_k^{(i)} P_k^{(i)}$ with $\sum_{s \in \mathcal{S}} d_k^{(i)}(s) = 1$. 
We then define the steady-state arm mean as 
$\bar{\mu}_k^{(i)} := d_k^{(i)} R_k^{(i)}$.
\end{assumption}
\vspace{-5pt}
This assumption is necessary because the reward sequence of each arm arises from a 
Markov chain, which induces temporal dependence. 
Ergodicity allows us to approximate independence through the mixing properties 
of the chains and to apply concentration inequalities (e.g., Hoeffding-type bounds 
for Markov processes). 
Without this assumption, the notion of a stationary arm mean would not be 
well defined, and regret guarantees could not be rigorously established.

To accommodate this dependence within the change detection module, we refine the choice of its parameters. Specifically, the window size $w$ and threshold $b$ are tuned in accordance with the mixing-time bound, ensuring that the detector achieves the required statistical guarantees while remaining compatible with the diminishing exploration schedule.
Formally, we select

\begin{multline}\label{eqn:w_gen}
w \;=\; \frac{4}{\delta_{\min}^{2}}\!\left(
  \sqrt{2\ln(2KT^{2})}
  \;\right.\\ +\left.\; \sqrt{144\,L\,\ln(2KT^{2})}
  \;+\; \sqrt{144\,L\,\ln(2T)}
\right)^{2}, 
\end{multline}
\vspace{-0.5cm}
\begin{equation}\label{eqn:b_gen}
b \;=\; \sqrt{\frac{w}{2}\,\ln(2KT^{2})}
\;+\; \sqrt{144\,w\,L\,\ln(2KT^{2})}.
\end{equation}
where $\delta_{\min}$ is a known lower bound on the mean shift, and $L = \max_{i \in \{1,\dots,M\}} \max_{k \in \mathcal{K}} L_k^{(i)}$ 
denotes the maximum mixing time of the Markov chains associated with all arms and segments.

\begin{corollary}\label{cor:general_mucb}
Combining Algorithm~\ref{alg:main_alg} (diminishing exploration with resets) 
with the base solver $\mathcal{B}$ and Algorithm~\ref{alg:mucb_cd} with parameters $(w,b)$ given in Equations~\ref{eqn:w_gen} and \ref{eqn:b_gen}, 
the excess regret is upper-bounded as
\begin{equation}
    \Rex(T) \;\le\; \mathcal{O}\!\left(\sqrt{MKLT\log T}\right),
\end{equation}
where $L = \max_{i \in \{1,\dots,M\}} \max_{k \in \mathcal{K}} L_k^{(i)}$ 
denotes the maximum mixing time of the Markov chains associated with all arms and segments.
\end{corollary}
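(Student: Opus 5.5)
The plan is to instantiate Theorem~\ref{thm:general_extra} and bound its three terms separately, following the one-state argument of Corollary~\ref{cor:onestate_ucb_mucb}. The only new ingredient is a concentration inequality for Markov-dependent reward streams. The tool will be a Hoeffding-type bound for ergodic chains (Assumption~\ref{assump:ergodic}), of the form
\[
\mathbb{P}\Big(\Big|\sum_{j=1}^{n} Z_j - n\bar{\mu}_k^{(i)}\Big| \ge \epsilon\Big) \le 2\exp\!\big(-\epsilon^2/(cLn)\big),
\]
with a constant $c$ (the $144$ in \eqref{eqn:w_gen}--\eqref{eqn:b_gen} suggests $c$ of that order). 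The bound also carries an additive bias term of order $\sqrt{n}$ or $L$, which comes from the chain not starting in its stationary distribution $d_k^{(i)}$. The first summand $\sqrt{(w/2)\ln(2KT^2)}$ in $b$ is meant to absorb this bias/initial-state effect. The remaining $\sqrt{144 wL\ln(2KT^2)}$ summand is the genuine mixing-time-inflated deviation.

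\textbf{Term (a).} This term is unchanged: $2\alpha\sqrt{MT}$, exactly as in Lemma~\ref{lemma:de_regret}, which does not depend on the reward structure. Summing over segments and applying Cauchy--Schwarz gives the stated bound.

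\textbf{Term (c).} I redo Lemma~\ref{lemma:prob_fa} with the Markov bound. Within a stationary stretch, the two half-window sums of arm $k$ each concentrate around $(w/2)\bar\mu_k^{(i)}$. With $b$ as in \eqref{eqn:b_gen}, each window test fires with probability at most $1/(KT^2)$. A union bound over at most $T$ times and $K$ arms then gives $\mathbb{P}(F_i\mid\cdot)\le 1/T$.

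For missed detection, I adapt Lemma~\ref{lemma:prob_delay}. The delay $h_i$ is the time until each arm has collected $w/2$ post-change samples through diminishing exploration. After that point the window straddles the change, so the expected gap between the half-window sums is at least $(w/2)\delta_{\min}$. With $w$ as in \eqref{eqn:w_gen}, the choice gives $(w/2)\delta_{\min}\ge b+\sqrt{144 wL\ln(2T)}+\text{bias}$. The test therefore fires except with probability $1/T$. Hence both sums in (c) are $O(M)$.

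\textbf{Main obstacle.} The hardest part is justifying the Markov Hoeffding bound on the subsampled reward stream. Arm $k$ keeps evolving while it is not pulled, so the observed samples are the chain sampled at irregular, policy-dependent times. My first attempt would note that between pulls the chain makes additional transitions of the same kernel, and that the pull times are determined by the exploration schedule (deterministic given $\tau_{i-1}$). Under the fully modular setting, the base solver's choices do not use the detector's samples, so conditioning on the schedule should preserve the chain structure. The sampled sequence is then a time-inhomogeneous product of powers of $P_k^{(i)}$, each at least as contracting as $P_k^{(i)}$ itself, so the mixing time of the sampled chain is at most $L$. A martingale (Doob/Poisson-equation) decomposition then yields the bound with constant $cL$. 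A further complication is a window straddling $\nu_i$: there the chain is in transient mixing after the kernel switch. This costs an $O(L)$ bias, which $w\gtrsim L/\delta_{\min}^2$ dominates.

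\textbf{Term (b).} Lemma~\ref{lemma:exp_delay} gives $h_i$ with the new $w=O(L\log T/\delta_{\min}^2)$. Writing $s_i=\nu_i-\tau_{i-1}$, we have $h_i=O\big(w(K/\alpha)\sqrt{s_i}+w^2K^2/\alpha^2\big)$. Summing over $i$ with Cauchy--Schwarz gives $O\big((wK/\alpha)\sqrt{MT}\big)$ plus lower-order terms, so together with (a) the total is $\alpha\sqrt{MT}+wK\sqrt{MT}/\alpha$. Balancing with $\alpha=\sqrt{wK}$, treating $\delta_{\min}$ as a constant, gives $O(\sqrt{wKMT})=O(\sqrt{MKLT\log T})$, with the $O(M)$ from (c) of lower order.
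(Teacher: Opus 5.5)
Your overall architecture matches the paper's. You instantiate Theorem~\ref{thm:general_extra}, reuse Lemma~\ref{lemma:de_regret} for (a), redo the false-alarm and detection lemmas with a Markov concentration bound for (c), and take $h_i$ from the samples-to-time conversion for (b). Your $\alpha\asymp\sqrt{wK}$ is the paper's $c\sqrt{KL\log(KT)}$ for fixed $\delta_{\min}$.

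Your concentration bookkeeping differs from the paper's but is fine. The paper writes each reward as $f_{k,\ell}(V_{k,\ell})+n_\ell$ and splits $b=b_n+b_d$. It controls the noise by McDiarmid; that, not an initial-state bias, is the role of the summand $\sqrt{(w/2)\ln(2KT^2)}$. It controls the state part by the imported Markov Chernoff--Hoeffding bound, where the non-stationary start enters multiplicatively through $\lVert\varphi\rVert_d$. Your alternative is one McDiarmid-type bound for the state--reward chain valid from any initial law, an additive $O(L)$ bias, and a plain union bound over windows. This is legitimate and even removes the $\lVert\varphi\rVert_d$ factor. It needs the spare margin $\sqrt{(w/2)\ln(2KT^2)}\ge\sqrt{288L}\ln(2KT^2)/\delta_{\min}$ to dominate the bias, which holds once $\ln(KT)/\delta_{\min}\gtrsim\sqrt{L}$. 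Two steps, however, fail as written.

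\textbf{The delay term.} Lemma~\ref{lemma:exp_delay} uses $h_i=\lceil w(\frac{K}{2\alpha}+1)\sqrt{s_i+1}+\frac{w^2}{4}(\frac{K}{2\alpha}+1)^2\rceil$. Your rewriting $h_i=O(w\frac{K}{\alpha}\sqrt{s_i}+w^2K^2/\alpha^2)$ drops the $+1$.

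With $\alpha=\sqrt{wK}$ you have $\frac{K}{2\alpha}=\frac12\sqrt{K/w}<1$ whenever $w>K$. That is essentially always the case, since $w\ge 576L\ln(2KT^2)/\delta_{\min}^2$. The dropped part then dominates: $w\sum_i\sqrt{s_i}$ can be as large as $w\sqrt{MT}$. This exceeds $\sqrt{MKLT\log T}$ by a factor of order $\delta_{\min}^{-2}\sqrt{L\ln(KT)/K}$. The paper's last step has the same defect: its displayed bound contains $w(\frac{K}{2\alpha}+1)\sqrt{M(T+M)}$, and it passes to the stated rate anyway.

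The repair is to redo Lemma~\ref{applemma:samples-time} without the crude step $(\sqrt{u}+a)^2+1\le(\sqrt{u}+a+1)^2$, where $a=\frac{K}{2\alpha}$. The $+1$ comes only from the ceiling. Since $\sqrt{(\sqrt{u}+a)^2+1}\le\sqrt{u}+a+\frac{1}{2(\sqrt{u}+a)}$, it adds only $O(1/\sqrt{u^{(j-1)}})$ per block. Hence $w/2$ further rounds after time $t_d$ take at most $\frac{wK}{2\alpha}\sqrt{t_d+1}+O\bigl(K+w^2(\frac{K}{2\alpha}+1)^2\bigr)$ steps. With this $h_i$ your balancing goes through and gives $O(\sqrt{wKMT})$ plus terms polylogarithmic in $T$.

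\textbf{The subsampled chain.} Your answer to the main obstacle presumes that the detector's sample times are fixed by the exploration schedule. They are not. Algorithm~\ref{alg:main_alg} appends every observation to $Z_{k,\cdot}$, including those generated by $\mathcal{B}.\textsc{Select}$. In the fully modular setting, $\mathcal{B}$ decides whether to pull arm $k$ from the states of arm $k$ it has already seen.

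With state-dependent gaps, contraction of each $P^{g}$ toward $d_k^{(i)}$ no longer gives concentration around $\bar\mu_k^{(i)}$. Take a sticky symmetric two-state chain with stay probability $p$ and state means $0$ and $1$. A solver that re-pulls right after seeing the good state, and stays away for a long time after the bad one, collects samples whose long-run fraction of good states is $1/(3-2p)$, not $1/2$. So a change in $\mathcal{B}$'s sampling pattern inside one window shifts the half-window averages by a constant with $\bar\mu_k^{(i)}$ unchanged. An example is the switch from a non-adaptive initial phase to exploitation. This fires the test as soon as the shift exceeds $2b/w<\delta_{\min}$.

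Your argument is sound only for gap sequences that do not depend on the chain, i.e.\ for the exploration stream. The clean fix is to feed $\mathcal{D}$ only the exploration samples, which are all that Lemma~\ref{applemma:samples-time} counts anyway. For that stream, use a Marton-coupling inequality for time-inhomogeneous chains (Paulin's) to get the $cLn$ variance proxy. A Poisson decomposition with Azuma only gives $cL^2n$, which would inflate $w$ by a factor $L$. The paper's proof does not resolve this either: it asserts that contiguous sampling is the worst case and applies the Markov bound to a contiguous walk.
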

\vspace{-10pt}
\textbf{Proof Outline of Corollary~\ref{cor:general_mucb}.}
The overall proof follows the same decomposition as in Corollary~\ref{cor:onestate_ucb_mucb}, namely by bounding the regret incurred from (i) diminishing exploration within stationary segments, (ii) regret caused by false alarms, and (iii) regret due to delays in detecting true changes. The bounds from Lemma~\ref{lemma:de_regret} apply directly here as well. 

For the false alarm probability, we now invoke Lemma~\ref{lemma:prob_fa}, which states that under Algorithm~\ref{alg:mucb_cd} with parameters chosen according to Equations~\ref{eqn:w_gen} and~\ref{eqn:b_gen}, the probability of a false alarm in each stationary period is at most
\begin{equation}
\mathbb{P}\!\left(F_i \,\middle|\, \overline F_{i-1} D_{i-1}\right) \;\le\; \frac{\epsilon\|\varphi\|_{d}+1}{T}.
\end{equation}
This again ensures that the contribution of false alarms to the overall regret is negligible in expectation. 

For the detection probability and delay, Lemma~\ref{lemma:prob_delay} shows that for each change, with the choice of parameters $w,b$ as in Equations~\ref{eqn:w_gen} and~\ref{eqn:b_gen} and the delay threshold $h_i$,
\begin{equation}
\mathbb{P}\!\left(\overline{D}_i \,\middle|\, \overline F_{i-1}D_{i-1}\right)\;\le\; \frac{\epsilon \|\varphi\|_{d}}{T},
\end{equation}
so that missed detections occur with vanishing probability. Moreover, the expected detection delay is bounded by $h_i$, which scales sublinearly with the segment length. These results play the same role as Lemmas~\ref{lemma:prob_delay} and~\ref{lemma:exp_delay} in the one-state case. Here, $\epsilon > 0$ is an arbitrarily small constant, and $\varphi$ denotes the initial state distribution of the Markov chains.

Therefore, the regret contributions from diminishing exploration, false alarms, and detection delays remain of the same order as before, and substituting these bounds into Theorem~\ref{thm:general_extra} yields
\begin{equation}
\Rex(T) \le \mathcal{O}\!\left(\sqrt{KMLT \log T}\right),
\end{equation}
which establishes Corollary~\ref{cor:general_mucb}.

From Theorem~\ref{thm:general_extra}, the additional cost introduced by 
diminishing exploration and change detection, captured by $\Rex$, 
is strictly controlled and does not alter the order of regret. 
To argue near-optimality, we turn to the \emph{one-state case}, 
which reduces to the classical PS-MAB setting. 
Here, our framework combined with UCB attains the known 
$\tilde {\mathcal{O}}(\sqrt{KMT})$ bound, confirming that the overhead is indeed 
compatible with near-optimal performance.

The regret bound depends on problem-specific parameters such as the mixing time of the underlying Markov chains, reward gaps, and detector parameters. These quantities influence constant factors but do not alter the asymptotic rate. In particular, larger mixing times or smaller mean shifts may increase detection difficulty, leading to longer delays, yet the overall regret order remains unchanged.

Motivated by this evidence, we extend the conclusion to the general PS-RMAB setting:

\begin{principle}[Transfer Principle]\label{princ:transfer}
If the base solver $\mathcal{B}$ achieves near-optimal regret in the 
stationary RMAB setting, then Algorithm~\ref{alg:main_alg} equipped with 
$\mathcal{B}$ and a suitable change detector $\mathcal{D}$ 
inherits this near-optimality in the PS-RMAB setting.
\end{principle}
\vspace{-10pt}
\begin{proof}
Since the stationary RMAB already admits a minimax lower bound of 
$\Omega(\sqrt{ST})$ \citep{ortner2012regret}, and our framework in the 
piecewise-stationary case introduces only an additional 
$\tilde{\mathcal O}(\sqrt{MKT})$ overhead, the overall regret rate 
remains $\tilde{\mathcal O}(\sqrt{T})$. 
Thus, the near-optimality of stationary solvers is preserved under 
piecewise-stationary extensions.
\end{proof}

%% file: 6-simulate.tex
\vspace{-15pt}
\section{Simulation Results}
\vspace{-5pt}
\label{sec:sim}

We evaluate our framework using \texttt{RestlessUCB} and \texttt{colorUCRL2} as base algorithms.  
For each case, we report both the \emph{excess regret} defined in our framework and the standard regret benchmarked against a value-iteration oracle baseline.  
The simulations adopt the \emph{partial modular} setting, where information is shared between the exploration module and the base algorithm, enabling more accurate and comprehensive estimation for arm selection; this may explain why, as shown in Figures~\ref{fig:extra_regret}, the observed excess regret can occasionally be smaller than that of the segment oracle.
The results are summarized in Figures~\ref{fig:extra_regret} and~\ref{fig:standard_regret}.

\textbf{Environment Configuration.} 
We simulate a PS-RMAB environment with a horizon of $T=100{,}000$ time steps, partitioned into $M=5$ stationary segments. 
The problem consists of $K=3$ arms, each modeled as a Markov chain with $S=3$ states. 
The stationary mean rewards of the arms are configured as 
$\bar{\mu}^{(i)}_k \in \{0.2,\,0.5,\,0.8\}$, 
assigned according to the rule $(i+k)\bmod 3 = 2,0,1$, respectively, for segment $i$ and arm $k$.

\textbf{Extra vs. Standard Regret.} 
As shown in Figure~\ref{fig:extra_regret}, the excess regret exhibits only minor variation across different choices of base algorithms.  
This indicates that the additional overhead induced by our mechanism is essentially independent of the base solver.  
Comparing with the standard regret in Figure~\ref{fig:standard_regret}, we observe that the overhead introduced by our framework is negligible relative to the dominant regret incurred by the base algorithms themselves.  
In particular, the \texttt{UE} curve represents the optimal exploration amount achievable when the number of segments is known in advance, while \texttt{DE} corresponds to our proposed diminishing exploration scheme without prior knowledge of the number of changes.

\begin{figure}[ht]
\centering
\begin{subfigure}{0.7\textwidth}
    \includegraphics[width=0.7\linewidth]{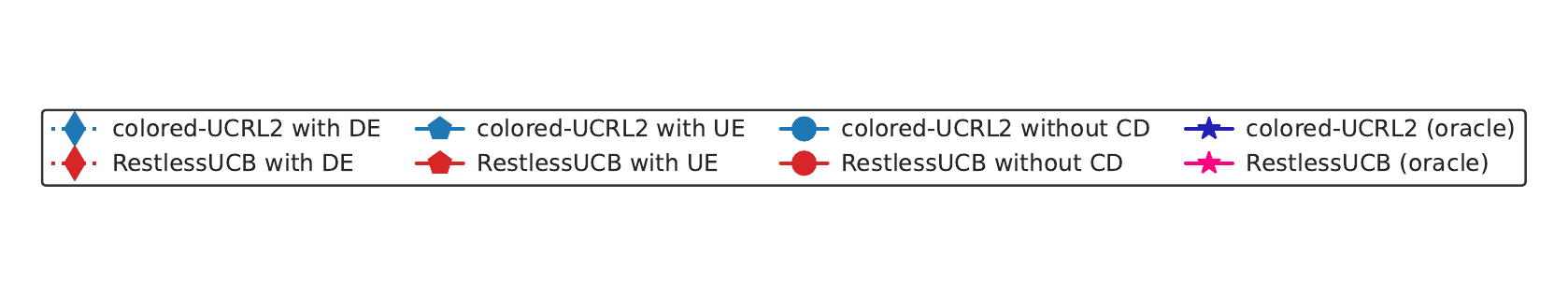}
    \vspace{-0.5cm}
\end{subfigure}
\begin{subfigure}{0.24\textwidth}
    \includegraphics[width=\textwidth]{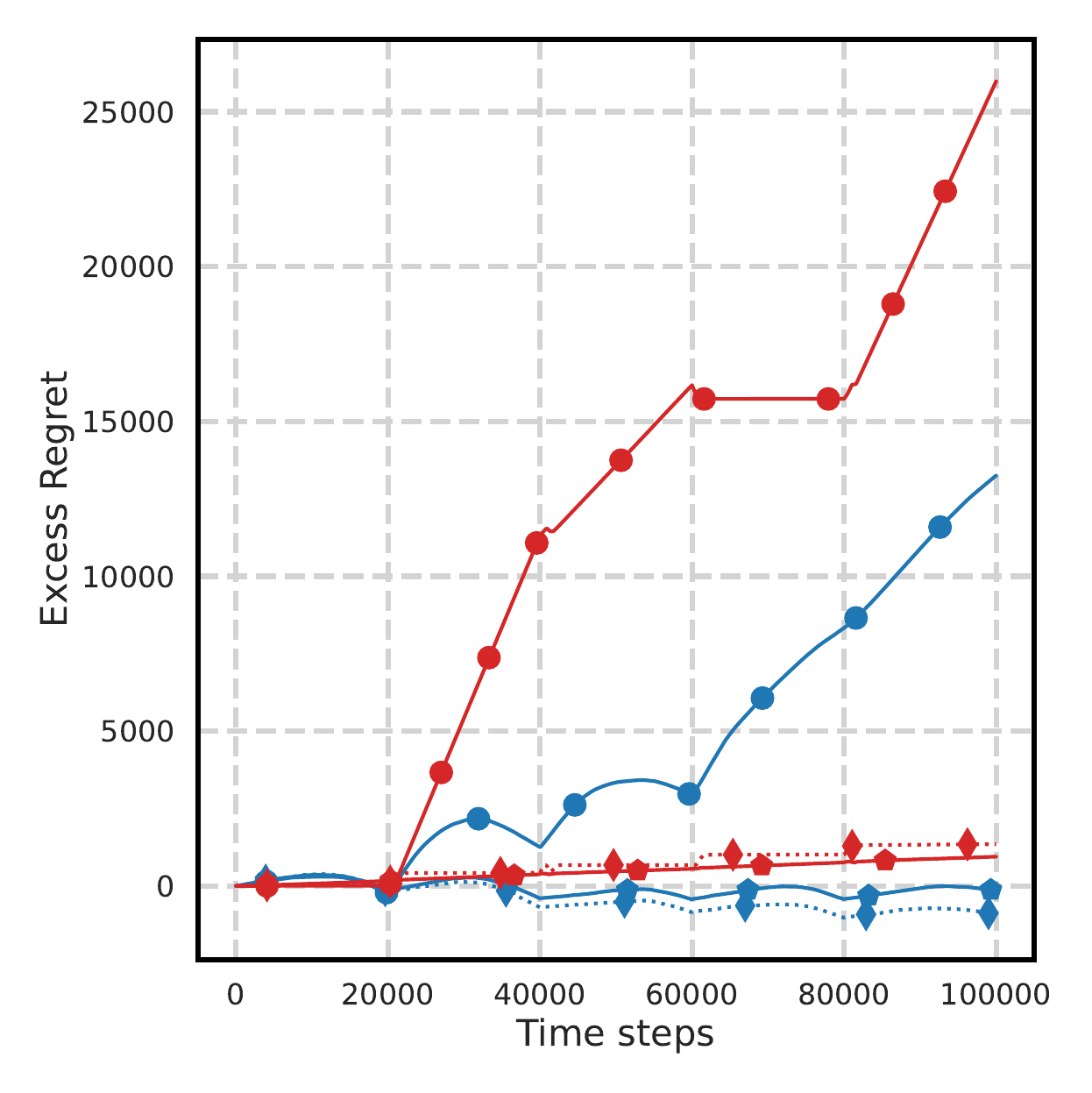}
    \vspace{-15pt}
    \caption{Excess Regret.}
    \label{fig:extra_regret} 
\end{subfigure}
\hspace{-6pt}
\begin{subfigure}{0.24\textwidth}
    \includegraphics[width=\textwidth]{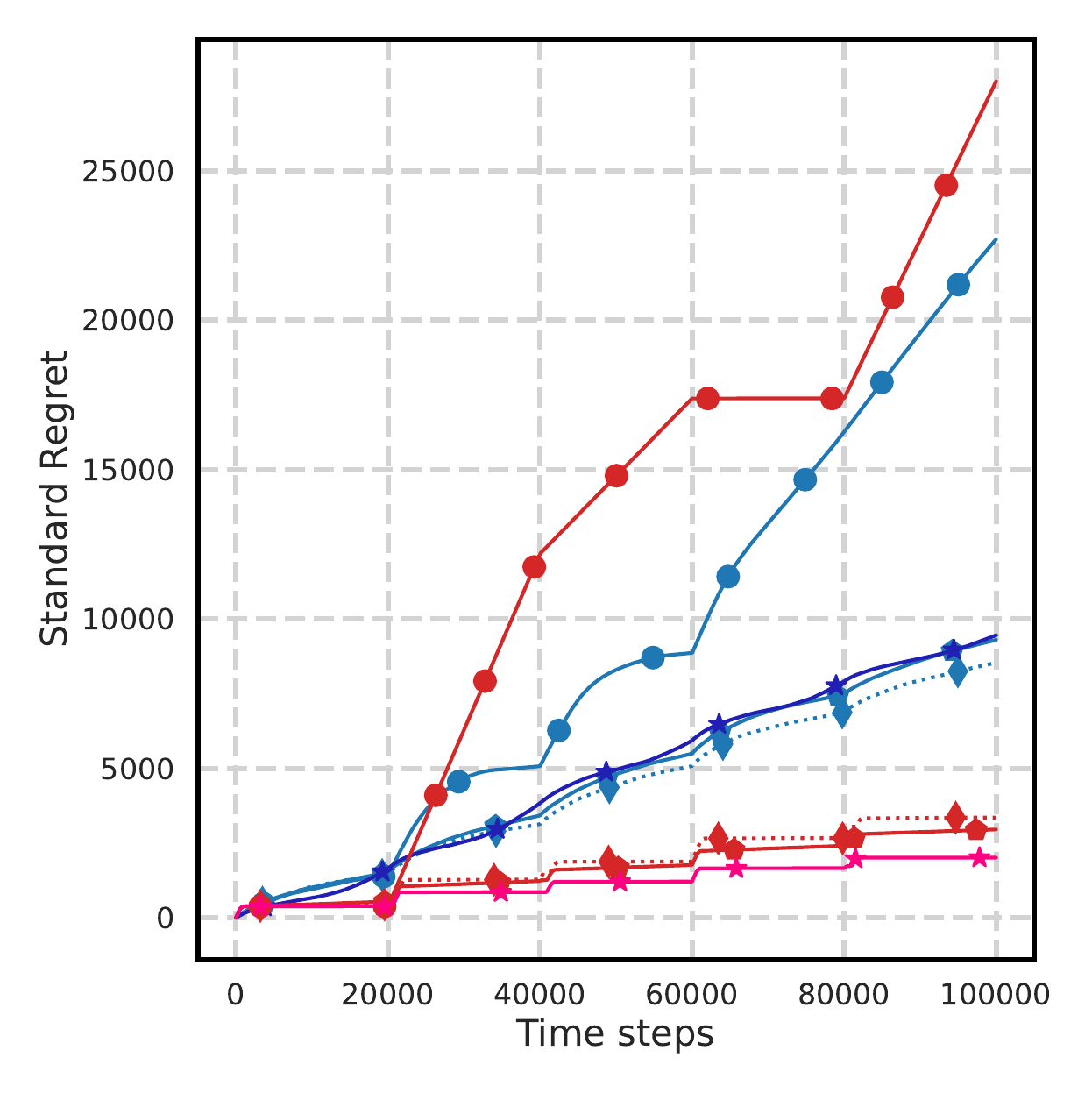}
    \vspace{-15pt}
    \caption{Standard Regret.}
    \label{fig:standard_regret}
\end{subfigure}
\caption{}\label{fig:3}
\end{figure}

\textbf{Significance of Change Detection.} 
One may wonder: since the regret of the base algorithms already dominates, is it sufficient to simply let them adapt on their own, without any change-detection or exploration mechanism?  
To examine this, we also simulate the scenario where no change-detection module is used.  
As shown in Figures~\ref{fig:extra_regret} and~\ref{fig:standard_regret}, the regret exhibits a much sharper growth when the environment changes are ignored, indicating that the base algorithms alone struggle to adapt promptly to non-stationarity.  
This illustrates the potential performance degradation caused by neglecting environmental shifts.

%% file: 7-conclude.tex
\vspace{-5pt}
\section{Conclusion}
\label{sec:conclusion}
We addressed the PS-RMAB problem by introducing a modular framework that integrates stationary solvers, change detection, and a novel diminishing exploration schedule. Central to our analysis is a refined notion of excess regret, which isolates the costs of exploration and detection delay. We proved that this overhead matches the minimax order known for piecewise-stationary bandits, and established a transfer principle showing that the near-optimality of stationary solvers carries over to the PS-RMAB setting. This provides both a unifying theoretical perspective and a practical recipe for extending stationary algorithms to dynamic environments.

%% file: 1-changealg.tex

\section{Change Detection Algorithm}\label{app:cd_alg}
{\bf Change Detector of M-UCB } \\
The following algorithm is the change detection algorithm for M-UCB \citep{cao2019nearly}
\begin{center}
    \begin{minipage}{0.8\linewidth}
        \begin{algorithm}[H]
            \caption{Change Detection of M-UCB: CD$\left(w,b,Z_{1},\ldots,Z_{w}\right)$}\label{alg:mucb_cd}
            \begin{algorithmic}[1]
                \REQUIRE An even integer $w$, $w$ observations $Z_{1},\ldots,Z_{w}$, and a prescribed threshold $b > 0$
                \IF{$\left\lvert\sum^{w}_{\ell=w/2+1}Z_{\ell}-\sum^{w/2}_{\ell=1}Z_{\ell}\right\rvert>b$}
                \STATE Return True
                \ELSE
                \STATE Return False
                \ENDIF
            \end{algorithmic}
        \end{algorithm}
    \end{minipage}
\end{center}

In this algorithm, one requires $w$ observations as input and check whether the difference between the sample average of the first half and that of the second half exceeds a prescribed threshold $b$ (line 1).

{\bf Integration with the change detector of M-UCB.} 

The following two assumptions are required to establish the analytical results when integrating M-UCB into our modularized framework.
\begin{assumption}\label{ass:minimum_gap}
    The algorithm knows a lower bound $\delta>0$ such that $\delta\leq\min_{i}\max_{k\in\mathcal{K}}\delta^{\left(i\right)}_{k}$.
\end{assumption}
Note that Assumption~\ref{ass:minimum_gap} is Assumption 1(b) of \cite{cao2019nearly}, which is required for the M-UCB detector to determine good $w$ and $b$ in regret analysis. It is worth noting that almost all schemes that actively detect changes share similar assumptions; however, different algorithms may impose distinct sets of assumptions. This assumption is mild since $\delta$ may be statistically derived from historical information. Furthermore, even if the lower bound does not hold, and we occasionally encounter changes with expected reward gaps smaller than the assumed $\delta$, such changes may be perceived as too minor to result in significant regret. 

\begin{assumption}\label{asm:seg_length}
    We assume
    \begin{equation}s_{i}= \Omega\left((\log{KT}+\sqrt{K\log{KT}})\sqrt{s_{i-1}}\right).
    \end{equation}
\end{assumption}

In particular, if $s_i = \Theta\left(\left(\log{KT}+\sqrt{K\log{KT}}\right)^{2(1+\epsilon)}\right)$ for every $i$, Assumption~\ref{asm:seg_length} holds.
This assumption essentially posits that the changes are not overly dense, a condition that holds in many practical scenarios. Simple math would then show that given $D_{i-1}$ is true, with this assumption and the proposed diminishing exploration, each arm will have at least $w/2$ observations before and after a change point. Again, we note that similar assumptions are imposed in other algorithms that actively detect changes, while different algorithms may impose different assumptions. This assumption is necessary with our proof technique, which requires every change to be successfully detected with high probability. 

%% file: 2-proof.tex
\section{Proof Detail}
\label{app:pf_detail}

In this appendix, we provide detailed proofs of the results presented in Section 4. Each proof is carefully elaborated to ensure clarity and rigor.

\subsection{Proof of Section 4}\label{app:pf_detail:not_extend}
In this subsection, we present the proofs of theorems in Section 4. In what follows, the first lemma bounds the regret accumulated during the diminishing exploration part of the algorithm. We denote by $R_{\mathrm{DE}}\left(\tau_{i-1}, \nu_{i}\right)$ as the regret caused by the exploration part of the algorithm and by $N_{\mathrm{DE},k}\left(\tau_{i-1}, \nu_{i}\right)$ the number of times that the arm $k$ is selected in the exploration phase from the previous alarm time to the next change point.

\begin{lemma}[Diminishing exploration regret]\label{applemma:de_regret}
    If the mean values of the arms remain the same during the time interval $[\tau_{i-1}, \nu_{i})$, then we have
    \begin{equation}
        N_{\mathrm{DE},k}\left(\tau_{i-1}, \nu_{i}\right)\leq \frac{2\alpha\sqrt{\nu_{i}-\tau_{i-1}}}{K}+\frac{3}{2},
    \end{equation}
    and
    \begin{equation}
        \E\left[R_{\mathrm{DE}}\left(\tau_{i-1}, \nu_{i}\right)\right]\leq 2\alpha\sqrt{\nu_{i}-\tau_{i-1}}+\frac{3}{2}K.
    \end{equation}
\end{lemma}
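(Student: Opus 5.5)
The plan is to count exploration sessions and then convert the count into regret. Within the interval $[\tau_{i-1},\nu_i)$ the schedule restarts at the alarm, so exploration sessions start at relative times $u^{(1)},u^{(2)},\ldots$ measured from $\tau_{i-1}$. Each session pulls every arm exactly once. Hence $N_{\mathrm{DE},k}(\tau_{i-1},\nu_i)$ equals the number $J$ of sessions whose start lies before $\nu_i-\tau_{i-1}$. Once $J$ is bounded, the regret bound follows because each exploration pull costs at most $1$ (rewards lie in $[0,1]$). Summing over $k$ then gives $\E[R_{\mathrm{DE}}]\le \sum_k N_{\mathrm{DE},k}\le 2\alpha\sqrt{\nu_i-\tau_{i-1}}+\tfrac32 K$. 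The stationarity hypothesis on the interval only guarantees that the gaps are well defined; we do not even need to use them beyond the bound $\Delta_k^{(i)}\le 1$.

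The key step is to show that the recursion $u^{(j)}=\lceil u^{(j-1)}+\frac{K}{\alpha}\sqrt{u^{(j-1)}}+\frac{K^2}{4\alpha^2}\rceil$ makes $\sqrt{u^{(j)}}$ grow at least linearly in $j$. Write $c=\frac{K}{2\alpha}$. The recursion gives
\[
u^{(j)}\ge \bigl(\sqrt{u^{(j-1)}}+c\bigr)^2,
\]
so $\sqrt{u^{(j)}}\ge \sqrt{u^{(j-1)}}+c$. By induction, $\sqrt{u^{(j)}}\ge \sqrt{u^{(1)}}+(j-1)c$.

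Next we handle the initial value. The first cursor is $u^{(1)}=\lceil(\alpha-K/4\alpha)^2\rceil$, or $1$ after a reset; in the reset case $\sqrt{u^{(1)}}\ge 1$. I would use the weakest version, $\sqrt{u^{(1)}}\ge 0$, or more carefully $\sqrt{u^{(1)}}\ge c/2$ when that helps absorb constants. In the worst case $\sqrt{u^{(j)}}\ge (j-1)c$. If the $J$-th session starts inside the interval, then $u^{(J)}\le \nu_i-\tau_{i-1}$. This gives $(J-1)c\le \sqrt{\nu_i-\tau_{i-1}}$, i.e.
\[
J\le \frac{2\alpha}{K}\sqrt{\nu_i-\tau_{i-1}}+1 .
\]

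The extra $\tfrac12$ in the claimed $\tfrac32$ is there to absorb a partially completed final session and the ceiling/offset conventions. In particular, it covers the fact that the first session begins at relative time $u^{(1)}$ rather than $0$, and it covers a session that is truncated at $\nu_i$. I would count such a session as at most one extra pull per arm, which bounds $N_{\mathrm{DE},k}$ by $J$ plus a fractional slack that stays within $\tfrac32$.

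The main obstacle is the bookkeeping of these boundary effects: the ceiling, the choice of initial cursor versus $u=1$ after a reset, and whether a partially executed last session counts. I would first attempt the clean inequality $\sqrt{u^{(j)}}\ge\sqrt{u^{(j-1)}}+c$ and see whether the $+1$ already suffices. If a constant is lost, I would use $\sqrt{u^{(1)}}\ge\tfrac12 c$, or alternatively round up, to recover the stated $\tfrac32$.
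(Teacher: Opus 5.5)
Your proposal is correct and follows essentially the same route as the paper. Both arguments telescope $\sqrt{u^{(j)}}\ge\sqrt{u^{(j-1)}}+\frac{K}{2\alpha}$ to a linear-in-$j$ lower bound, invert it for the index of the last session starting before $\nu_i$, and bound the regret by $K$ times that count. The only difference is bookkeeping: the paper keeps $\sqrt{u^{(1)}}\ge\alpha-\frac{K}{4\alpha}$, which is where its $\frac32$ comes from ($m\le\frac{2\alpha}{K}(\sqrt{\nu_i-\tau_{i-1}}-\alpha)+\frac32$), whereas your cruder $\sqrt{u^{(1)}}\ge 0$ already gives $+1$. No extra slack is needed, since a truncated last session is already counted in $J$, and your version also covers the $u\gets 1$ reset in Algorithm~\ref{alg:main_alg}.
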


\begin{proof}
    Recall that $u_{i}^{(j)}$ is the beginning of the $j$-th uniform exploration session in the $i$-th segment. In Algorithm \ref{alg:main_alg}, the initial time of the first exploration session after each $\tau_{i}$ is given by:
    \begin{equation}
        u_{i}^{(1)}=\left\lceil\left(\alpha-\frac{K}{4\alpha}\right)^{2}\right\rceil, \label{eqn:algscheme_initial}
    \end{equation}
    and subsequent times follow the recursive equation:
    \begin{equation}
        u_{i}^{(j)}=\left\lceil u_{i}^{(j-1)}+\frac{K}{\alpha}\sqrt{u_{i}^{(j-1)}}+\frac{K^{2}}{4\alpha^{2}}\right\rceil\geq u_{i}^{(j-1)}+\frac{K}{\alpha}\sqrt{u_{i}^{(j-1)}}+\frac{K^{2}}{4\alpha^{2}}. \label{eqn:algscheme}
    \end{equation}
   
    Based on \eqref{eqn:algscheme_initial} and \eqref{eqn:algscheme}, one could easily check that the sequence $u_{i}^{(n)}$ satisfies that for every natural number $n$,
    \begin{equation}\label{eqn:u_i_bound}
        u_{i}^{(n)} \geq \left(\frac{\left(2n-3\right)K}{4\alpha}+\alpha\right)^{2}.
    \end{equation}

    Let $u_{i}^{(m)}$ be the last exploration start time in time interval $[\tau_{i-1}, \nu_{i})$. Then, we have

    \begin{equation}
        \E\left[R_{\mathrm{DE}}\left(\nu_{i},\tau_{i-1}\right)\right]\leq mK. \label{eqn:R_DE}
    \end{equation}

    Additionally, we have:

    \begin{equation}
        \nu_{i}-\tau_{i-1}\geq u_{i}^{(m)} \geq \left(\frac{\left(2m-3\right)K}{4\alpha}+\alpha\right)^{2}\geq \left(\frac{2\E\left[R_{DE}\left(\nu_{i}-\tau_{i-1}\right)\right]-3K}{4\alpha}+\alpha\right)^{2}. \label{eqn:T_mK}
    \end{equation}

    Finally, based on the \eqref{eqn:R_DE} and \eqref{eqn:T_mK}, we can conclude that:              
    \begin{equation}
        \E\left[R_{\textrm{DE}}\left(\nu_{i},\tau_{i-1}\right)\right]\leq 2\alpha\sqrt{\nu_{i}-\tau_{i-1}}-2\alpha^2+\frac{3}{2}K
        \leq 2\alpha\sqrt{\nu_{i}-\tau_{i-1}}+\frac{3}{2}K,
    \end{equation}
    and
    \begin{equation}
        N_{\textrm{DE},k}\left(\nu_{i},\tau_{i-1}\right)\leq \frac{2\alpha\sqrt{\nu_{i}-\tau_{i-1}}}{K}+\frac{3}{2}.
    \end{equation}

\end{proof}

In the following lemma, we aim to explore how many time steps it takes for a given arm to reach a certain number of samples through diminishing exploration.

\begin{lemma}[Samples-time steps transform]\label{applemma:samples-time}
    When each arm has accumulated \(n\) samples, the required time is as follows: 
    If the counting of the \(n\) samples begins immediately after a reset, the required time is given by
    \begin{equation}\label{eqn:Treset}
        T_{reset}\leq \left(\alpha+\frac{\left(2n-3\right)K}{4\alpha}+n\right)^2+K.
    \end{equation}
    However, if there is a delay of \(t_{d}\) time steps after the reset before the counting begins, the required time is given by 
    \begin{equation}\label{eqn:Ttd}
        T_{t_{d}}\leq 2n\left(\frac{K}{2\alpha}+1\right)\sqrt{t_{d}+1}+n^{2}\left(\frac{K}{2\alpha}+1\right)^{2}.
    \end{equation}
    
\end{lemma}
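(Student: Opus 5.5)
Both bounds follow from the explicit schedule $u^{(j)}$ of the diminishing exploration, using the lower bound \eqref{eqn:u_i_bound} from the proof of Lemma~\ref{applemma:de_regret} and a matching upper bound. Each exploration session pulls every arm exactly once, so "each arm has $n$ samples" means the $n$-th session after the counting origin has been completed.

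\textbf{Reset case \eqref{eqn:Treset}.} After a reset, sessions start at $u^{(1)},u^{(2)},\dots$ measured from the reset time, and arm $k$ gets its $n$-th sample at time at most $u^{(n)}+K$. I would prove an upper bound complementary to \eqref{eqn:u_i_bound} by induction on $n$, namely
\[
u^{(n)} \le \Bigl(\alpha+\tfrac{(2n-3)K}{4\alpha}+n\Bigr)^2 .
\]
Write $g(n)=\alpha+\frac{(2n-3)K}{4\alpha}$, so that $g(1)^2=(\alpha-\frac{K}{4\alpha})^2$.

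\emph{Base case.} We have $u^{(1)}=\lceil g(1)^2\rceil\le g(1)^2+1\le (g(1)+1)^2$, valid when $g(1)\ge 0$; otherwise use a crude bound.

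\emph{Inductive step.} Assume $\sqrt{u^{(n-1)}}\le g(n-1)+(n-1)$. The recursion gives
\[
u^{(n)} \le \Bigl(\sqrt{u^{(n-1)}}+\tfrac{K}{2\alpha}\Bigr)^2+1 .
\]
Hence
\[
\sqrt{u^{(n)}}\le \sqrt{u^{(n-1)}}+\tfrac{K}{2\alpha}+1 \le g(n)+n ,
\]
using $\sqrt{x^2+1}\le x+1$ and $g(n)=g(n-1)+\frac{K}{2\alpha}$. Adding the $K$ steps of the last session gives \eqref{eqn:Treset}.

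\textbf{Delayed case \eqref{eqn:Ttd}.} Counting starts at offset $t_d$ after the reset. The first session starting after $t_d$ has index $j_0$ with $u^{(j_0-1)}\le t_d$. Set $c=\frac{K}{2\alpha}+1$ and track $x_j=\sqrt{u^{(j)}}$. The step bound above gives $x_{j}\le x_{j-1}+c$. Since $u^{(j_0-1)}\le t_d$, we get $x_{j_0}\le\sqrt{t_d}+c\le \sqrt{t_d+1}+c$. After $n$ more sessions, roughly $x_{j_0+n}\le \sqrt{t_d+1}+nc$ (up to bookkeeping of one index). The elapsed time since $t_d$ is then
\[
u^{(j_0+n-1)}+K-t_d \le \bigl(\sqrt{t_d+1}+nc\bigr)^2-t_d ,
\]
plus the final $K$. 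This equals
\[
2nc\sqrt{t_d+1}+n^2c^2+1 ,
\]
plus $K$.

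\textbf{Main obstacle.} The hard step is absorbing the additive $+1$ (and the trailing $K$ for the last session) into the stated form without any slack term. I would handle this by counting the $n$ sessions so that only $n-1$ full gaps plus the final session are needed. The saved $c$-term, $2c\sqrt{t_d+1}\ge K+2$, then dominates the extra $K+1$. I would also need to check the edge case where $t_d$ falls inside a session.
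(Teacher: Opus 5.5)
Your reset case is the paper's argument: the one-step bound $\sqrt{u^{(j)}}\le\sqrt{u^{(j-1)}}+\frac{K}{2\alpha}+1$, induction from $u^{(1)}\le(\alpha-\frac{K}{4\alpha}+1)^2$, then add $K$. Your caveat that the base step needs $\alpha-\frac{K}{4\alpha}\ge 0$ is real, and the paper relies on it without saying so.

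The gap is in the delayed case, exactly in the edge case you deferred. You read ``each arm has $n$ samples'' as ``$n$ complete sessions after $t_d$'', so your elapsed time is $u^{(j_0+n-1)}+K-1-t_d$. The proposed repair does not work. If $t_d=u^{(j_0-1)}$, i.e.\ counting starts just as a session starts, the first complete session after $t_d$ begins a full gap later. You then genuinely need $n$ full gaps \emph{plus} $K$, and no $c$-term is saved.

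Under your reading the inequality is in fact false. Take $K=100$ and $\alpha=5.1$, which give $u^{(1)}=1$, $u^{(2)}=117$, $u^{(3)}=426$. With $t_d=117$ and $n=1$, the next complete session ends at offset $525$, an elapsed time of $408$. The bound is only $2c\sqrt{t_d+1}+c^2\approx 351.4$ with $c=\frac{K}{2\alpha}+1$, even though your inequality $2c\sqrt{t_d+1}\ge K+2$ holds here.

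The missing idea is to count per arm, as the statement does, and to anchor at the last session start $u^{(j)}\le t_d$ rather than at $t_d$. An arm whose slot $u^{(j)}+k-1$ in session $j$ is at or before $t_d$ receives its $n$-th post-$t_d$ sample at $u^{(j+n)}+k-1\le u^{(j+n)}+t_d-u^{(j)}$. Every other arm is done by $u^{(j+n-1)}+K-1<u^{(j+n)}$. Hence the elapsed time is at most
$u^{(j+n)}-u^{(j)}\le(\sqrt{u^{(j)}}+nc)^2-u^{(j)}=2nc\sqrt{u^{(j)}}+n^2c^2\le 2nc\sqrt{t_d+1}+n^2c^2$.
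There is no trailing $K$ and no $+1$; your $+1$ came only from squaring $\sqrt{t_d+1}$ while subtracting $t_d$. In the example this gives $426-117=309$.

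The paper gets the same effect in its idealized case $u^{(x+1)}=t_d+1$ by measuring $u^{(x+n+1)}-u^{(x+1)}$. That runs from one session start to the start of the session after the $n$-th, so the last session's $K$ steps fall inside the final gap and the $+1$ cancels. The per-arm anchoring above is what makes the paper's one-sentence treatment of the non-ideal case rigorous, for $t_d\ge u^{(1)}$; for smaller $t_d$ only the reset bound is available.
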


\begin{proof}

    We can derive the following from Equation~\ref{eqn:algscheme} in the proof of Lemma~\ref{applemma:de_regret}:
    
    \begin{equation}\label{eqn:u_iter}
    u^{(j)} \leq u^{(j-1)} + \frac{k}{\alpha} \sqrt{u^{(j-1)}} + \frac{k^2}{4\alpha^2} + 1 
    \leq \left( \sqrt{u^{(j-1)}} + \frac{k}{2\alpha} \right)^2 + 1 
    \leq \left( \sqrt{u^{(j-1)}} + \frac{k}{2\alpha} + 1 \right)^2.
    \end{equation}
    
    First, let us consider the case where the counting of \(n\) samples begins immediately after the reset. From Equation~\ref{eqn:algscheme_initial}, we can derive the following:
    
    \begin{equation}
    u^{(1)} \leq \left( \alpha - \frac{K}{4\alpha} + 1 \right)^2.
    \end{equation}
    
    Using Equation~\ref{eqn:u_iter}, we can further derive:
    
    \begin{equation}
    u^{(2)} \leq \left( \alpha - \frac{K}{4\alpha} + 1 + \frac{K}{2\alpha} + 1 \right)^2.
    \end{equation}
    
    Finally, we obtain:
    
    \begin{equation}
    u^{(n)} \leq \left( \alpha - \frac{K}{4\alpha} + 1 + (n-1)\left( \frac{K}{2\alpha} + 1 \right) \right)^2.
    \end{equation}
    
    Here, \(u^{(n)}\) represents the starting time of the \(n\)-th exploration block. The total time required to guarantee that all \(K\) arms have been sampled \(n\) times is therefore:
    
    \begin{equation}
    T_{\text{reset}} = u^{(n)} + K \leq \left( \alpha + \frac{(2n-3)K}{4\alpha} + n \right)^2 + K.
    \end{equation}
    
    Next, we consider how long it takes for each arm to be sampled \(n\) times after \(t_d\) time steps following the reset. We first assume that, prior to \(t_d\), each arm has already been sampled \(x\) times. For simplicity, we assume an ideal case where the exploration block starts exactly at time \(t_d + 1\). Therefore, we have:
    
    \begin{equation}
    u^{(x+1)} = t_d + 1.
    \end{equation}
    
    Since, in reality, the exploration block start time may not exactly coincide with \(t_d + 1\), we account for the possibility that it could begin at a later time by considering the next exploration block's start time. This allows us to bound the non-ideal case.
    
    Following the same approach as in the first part of the proof, we eventually obtain:
    
    \begin{equation}
    u^{(x+n+1)} \leq \left( \sqrt{t_d + 1} + n\left(\frac{K}{2\alpha} + 1\right) \right)^2.
    \end{equation}
    
    Finally, we derive that the total time required for each arm to be sampled \(n\) times after \(t_d\) time steps is:
    
    \begin{equation}
    T_{t_d} = u^{(x+n+1)} - u^{(x+1)} \leq 2n\left( \frac{K}{2\alpha} + 1 \right) \sqrt{t_d + 1} + n^2 \left( \frac{K}{2\alpha} + 1 \right)^2.
    \end{equation}

\end{proof}

Define $R_{\mathrm{excess}}(r,s):=\E[\textstyle\sum_{t=r}^s r_t^{\mathrm{seg}}]-\textstyle\sum_{t=r}^s r_t^{\pi}$ be the regret accumulated during $r$ and $s$. In the next lemma, we provide an upper bound on the regret accumulated from the $\left(i-1\right)$-th alarm time to the end of $\left(i-1\right)$-th segment, given that the previous change was successfully detected.

\begin{lemma}[Excess Regret bound with stationary bandit]\label{applemma:excess_regret_stat} 
Consider a stationary bandit interval with $\nu_{i-1}<\tau_{i-1}<\nu_{i}$. Condition on the successful detection events $\overline{F}_{i-1}$ and $D_{i-1}$, the expected regret accumulated during $\left(\tau_{i-1},\nu_{i}\right)$ can be bounded by
\begin{equation}
    \E\left[R_{\mathrm{excess}}\left(\tau_{i-1},\nu_{i}\right)\middle|\overline{F}_{i-1}D_{i-1} \right]\leq 2\alpha\sqrt{s_{i}}+T\cdot \mathbb{P}\left(F_{i}\middle|\overline{F}_{i-1}D_{i-1}\right),\label{eqn:one_seg_regret}
\end{equation}
\end{lemma}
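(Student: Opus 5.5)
We split on whether a false alarm occurs inside the stationary interval $(\tau_{i-1},\nu_i)$. Throughout we condition on $\overline{F}_{i-1}D_{i-1}$ and write $s_i$ for the length of the $i$-th segment, so that $\nu_i-\tau_{i-1}\le s_i$. The two cases are the event $F_i$ (the detector fires before $\nu_i$) and its complement $\overline{F}_i$. Since per-step rewards lie in $[0,1]$, the excess regret accumulated over any stretch of the interval is at most its length, hence at most $T$. This gives
\[
\E\left[R_{\mathrm{excess}}(\tau_{i-1},\nu_i)\middle|\overline{F}_{i-1}D_{i-1}\right]\le \E\left[R_{\mathrm{excess}}(\tau_{i-1},\nu_i)\,\mathbb{1}\{\overline{F}_i\}\middle|\overline{F}_{i-1}D_{i-1}\right]+T\,\mathbb{P}\left(F_i\middle|\overline{F}_{i-1}D_{i-1}\right),
\]
and the second term is already the one in \eqref{eqn:one_seg_regret}.

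On $\overline{F}_i$ no reset occurs in $(\tau_{i-1},\nu_i)$. We then compare the learner with the segment oracle step by step.

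\textbf{Exploitation steps.} In the fully modular setting the base solver $\mathcal{B}$ sees only its own history $\mathcal{H}^{\mathcal{B}}$ and was restarted at $\tau_{i-1}$. We couple its run with the oracle's copy of $\mathcal{B}$, matching the $j$-th base-solver step of the learner with the $j$-th step of the oracle's instance. With this coupling, the base-solver contribution cancels in expectation, as in the argument sketched after Theorem~\ref{thm:general_extra}.

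\textbf{Exploration steps.} What remains is the regret from the forced exploration rounds. We bound it by applying Lemma~\ref{applemma:de_regret} with $\nu_i-\tau_{i-1}\le s_i$, which gives $2\alpha\sqrt{s_i}-2\alpha^2+\tfrac32K$. We would absorb the constant $\tfrac32K$ into $2\alpha^2$, which holds for $\alpha\ge\sqrt{3K}/2$ (the regime in which $\alpha$ is later tuned); otherwise we carry it as an additive constant. Multiplying by the indicator of $\overline{F}_i$ can only decrease this nonnegative bound.

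\textbf{Main obstacle.} The delicate step is the cancellation of the base-solver regret. Two issues arise.
\begin{itemize}
\item The oracle restarts $\mathcal{B}$ at $\nu_{i-1}$, whereas the learner restarts it at $\tau_{i-1}\ge\nu_{i-1}$. The mismatch on $[\nu_{i-1},\tau_{i-1}]$ is the detection delay, which belongs to the previous interval and is accounted for in term (b) of Theorem~\ref{thm:general_extra}. We therefore start the coupling at $\tau_{i-1}$.
\item Interleaved exploration shifts the base solver's clock. This is exactly why we index by base-solver steps rather than by wall-clock time.
\end{itemize}
The restless dynamics break an exact time coupling, because arm states continue to evolve during exploration. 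We handle this by comparing each step's reward against the per-step optimal long-run reward, i.e.\ by treating the exploration steps as extra charged steps. The cancellation then holds up to the exploration count already bounded by Lemma~\ref{applemma:de_regret}.
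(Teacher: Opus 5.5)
Your decomposition matches the paper's. You split on $F_i$, charge $T\,\mathbb{P}(F_i\mid\overline F_{i-1}D_{i-1})$ to the false-alarm branch, bound the forced exploration with Lemma~\ref{applemma:de_regret}, and cancel the base solver's regret against the segment oracle. Your handling of the dropped $\tfrac32K$ is correct and more careful than the paper's. The paper only asserts that last cancellation, and your justification breaks where you say the restart mismatch ``belongs to the previous interval'' and start the coupling at $\tau_{i-1}$.

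At time $\tau_{i-1}$ the oracle's copy of $\mathcal{B}$ is not fresh. It has already run for $d:=\tau_{i-1}-\nu_{i-1}$ steps of segment $i$, and it collects that head start as reward \emph{inside} $(\tau_{i-1},\nu_i]$. Concretely, in the one-state case, let $\mathrm{Reg}_{\mathcal B}(t)$ be the expected regret of $t$ consecutive steps of a fresh copy of $\mathcal{B}$ in segment $i$, and let $n\le\nu_i-\tau_{i-1}$ be the learner's number of base-solver steps. On the interval the oracle's regret is $\mathrm{Reg}_{\mathcal B}(s_i)-\mathrm{Reg}_{\mathcal B}(d)$ and the learner's is $\mathrm{Reg}_{\mathcal B}(n)+R_{\mathrm{DE}}$. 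For nondecreasing $\mathrm{Reg}_{\mathcal B}$, the excess is therefore only bounded by $R_{\mathrm{DE}}+\mathrm{Reg}_{\mathcal B}(d)$.

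This bound is attained. Take a solver that opens every run with $m\le\min(d,n)$ pulls of a fixed arm that is $\Delta$-suboptimal in segment $i$, then commits to the optimal arm. It costs the oracle nothing on $(\tau_{i-1},\nu_i]$ but costs the learner $m\Delta$. Since $d$ can be of order $h_{i-1}$, this is not controlled by $2\alpha\sqrt{s_i}$. So for a general $\mathcal{B}$ the bound cannot be established on $(\tau_{i-1},\nu_i)$ alone.

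The cancellation has to be done over the whole segment $(\nu_{i-1},\nu_i]$. On the delay window the excess is the learner's regret \emph{minus} $\mathrm{Reg}_{\mathcal B}(d)$. The proof of Theorem~\ref{appthm:general_extra} discards exactly this term when it bounds that window by $R(\nu_{i-1},\tau_{i-1})\le\tau_{i-1}-\nu_{i-1}$. Keeping it, together with $\mathrm{Reg}_{\mathcal B}(n)\le\mathrm{Reg}_{\mathcal B}(s_i)$, gives a segment excess of at most $(\tau_{i-1}-\nu_{i-1})+R_{\mathrm{DE}}$. Equivalently, the lemma needs an extra $\mathrm{Reg}_{\mathcal B}(\tau_{i-1}-\nu_{i-1})$ on its right-hand side.

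Two further steps are asserted rather than shown.

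First, the detector reads the base-step rewards, so $\overline F_i$ is correlated with them. A cancellation that holds in expectation does not survive multiplication by $\mathbf{1}\{\overline F_i\}$. The exploration count is deterministic on $\overline F_i$, so that part is fine. The usual remedy is to analyse, unconditionally, a phantom run that ignores alarms on the interval. It coincides with the learner on $\overline F_i$ and differs by at most $\nu_i-\tau_{i-1}\le T$ on $F_i$.

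Second, in the restless case your claim that the cancellation ``holds up to the exploration count'' is unsupported. Interleaved exploration changes the law of the states $\mathcal{B}$ sees at its own steps, because it observes the chains through gapped rather than consecutive transitions. Without an extra robustness or mixing assumption on $\mathcal{B}$, the effect on its later decisions is not bounded by the number of exploration pulls. Only in the one-state (i.i.d.) case is the interleaving harmless.
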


\begin{proof}
We begin by rewriting the definition of the excess regret over a stationary segment $(\tau_{i-1}, \nu_i]$ as
\[
R_{\mathrm{excess}}\!\left(\tau_{i-1},\nu_{i}\right)
= \Bigg(\mathbb{E}\Big[\textstyle\sum_{t=\tau_{i-1}}^{\nu_{i}} r_t^{\star}\Big] - \textstyle\sum_{t=\tau_{i-1}}^{\nu_{i}} r_t^{\pi}\Bigg)
 - \Bigg(\mathbb{E}\Big[\textstyle\sum_{t=\tau_{i-1}}^{\nu_{i}} r_t^{\star}\Big] - \mathbb{E}\Big[\textstyle\sum_{t=\tau_{i-1}}^{\nu_{i}} r_t^{\mathrm{seg}}\Big]\Bigg),
\]
where the first term corresponds to the standard regret and the second to the standard regret of the segment oracle.  
We denote the standard regret over this interval by $R(\tau_{i-1},\nu_{i})$ and analyze it first.

For every $i$, conditioning on the previous successful detection $\overline{F}_{i-1} D_{i-1}$, we decompose the regret depending on whether a false alarm occurs:
\begin{subequations}
\begin{align}
    \mathbb{E}\!\left[R\left(\tau_{i-1},\nu_{i}\right)\middle|\overline{F}_{i-1}D_{i-1} \right]
    &=\mathbb{E}\!\left[R\left(\tau_{i-1},\nu_{i}\right)\middle|F_{i}\overline{F}_{i-1}D_{i-1}\right]\mathbb{P}\!\left(F_{i}\middle|\overline{F}_{i-1}D_{i-1}\right) \\
    &\quad+\mathbb{E}\!\left[R\left(\tau_{i-1},\nu_{i}\right)\middle|\overline{F}_{i}\overline{F}_{i-1}D_{i-1}\right]\mathbb{P}\!\left(\overline{F}_{i}\middle|\overline{F}_{i-1}D_{i-1}\right)\\
    &\leq T\cdot \mathbb{P}\!\left(F_{i}\middle|\overline{F}_{i-1}D_{i-1}\right)+\mathbb{E}\!\left[R\left(\tau_{i-1},\nu_{i}\right)\middle|\overline{F}_{i}\overline{F}_{i-1}D_{i-1}\right]\\
    &\leq T\cdot \mathbb{P}\!\left(F_{i}\middle|\overline{F}_{i-1}D_{i-1}\right)+2\alpha\sqrt{s_{i}}+\sum_{i=1}^M C_i, \label{eqn:regret_seg}
\end{align}
\end{subequations}

Next, subtracting the segment–oracle standard regret on the same interval cancels out the base regret terms $\sum_{i=1}^M C_i$, leaving only the exploration and detection contributions. This yields the desired excess regret bound stated in Lemma~\ref{applemma:excess_regret_stat}.
    
\end{proof}

\begin{theorem}[Excess-regret bound]\label{appthm:general_extra}
For any base solver $\mathcal{B}$ and change detector $\mathcal{D}$, the excess regret of Algorithm~\ref{alg:main_alg} satisfies
\begin{multline} 
\Rex(T)\leq 2\alpha\sqrt{MT}
    +\sum^{M-1}_{i=1}\E\left[\tau_{i}-\nu_{i}\middle | D_{i}\overline{F}_{i}D_{i-1}\overline{F}_{i-1}\right]+T\sum^{M}_{i=1}\mathbb{P}\left(F_{i}\middle| \overline{F}_{i-1}D_{i-1}\right)+T\sum^{M-1}_{i=1}\mathbb{P}\left(\overline{D}_{i}\middle|\overline{F}_{i}\overline{F}_{i-1}D_{i-1}\right), \label{eqn:exregret_bound}
\end{multline}
\end{theorem}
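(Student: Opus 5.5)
The plan is to split the horizon at the detection times and true change points into the intervals $(\tau_{i-1},\nu_i]$ and $(\nu_i,\tau_i]$ for $i=1,\dots,M$, with $\tau_M:=T$ understood. Write $\Rex(T)=\sum_i \E[\Rex(\tau_{i-1},\nu_i)]+\sum_i \E[\Rex(\nu_i,\tau_i)]$. The analysis then proceeds by a recursive conditioning argument on the ``good'' event chain $G_i:=\overline F_iD_i$, starting from $G_0=F_0\cap D_0=\{\tau_0=0\}$, which holds trivially.

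\textbf{Conditioning step.} On each step, I would use the law of total expectation to split according to whether $G_{i-1}$ holds and the next events behave. If $F_i$ occurs given $G_{i-1}$, I charge the entire remaining horizon, at most $T$, since rewards lie in $[0,1]$. This gives the term $T\,\mathbb P(F_i\mid \overline F_{i-1}D_{i-1})$. Similarly, if $\overline F_i$ occurs but $D_i$ fails, I charge $T\,\mathbb P(\overline D_i\mid \overline F_i\overline F_{i-1}D_{i-1})$. Because the bad events are charged $T$ once and the argument stops there, a union-type accumulation over $i$ yields exactly term (c). No extra factors arise, since once a bad event occurs we do not recurse further.

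\textbf{Stationary pieces.} On the good chain, each stationary piece $(\tau_{i-1},\nu_i]$ is handled by Lemma~\ref{applemma:excess_regret_stat}, which gives excess regret at most $2\alpha\sqrt{s_i}$ plus the false-alarm charge already counted. The key point, taken from that lemma's proof, is that under the fully modular setting the base solver restarted at $\tau_{i-1}\ge\nu_{i-1}$ sees only its own history. Its non-exploration rounds therefore replicate the oracle's restarted run, so the base-solver regrets $C_i$ cancel, leaving only the exploration rounds counted by Lemma~\ref{applemma:de_regret}. Summing gives $\sum_i 2\alpha\sqrt{s_i}\le 2\alpha\sqrt{M\sum_i s_i}\le 2\alpha\sqrt{MT}$ by Cauchy--Schwarz, which is term (a). The additive $\tfrac32K$ constants are absorbed, or handled by rounding in the exploration bound.

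\textbf{Delay pieces and the main obstacle.} On each delay piece $(\nu_i,\tau_i]$ under $G_{i-1}G_i$, per-step excess regret is at most $1$, giving $\E[\tau_i-\nu_i\mid D_i\overline F_iD_{i-1}\overline F_{i-1}]$, which is term (b). The main obstacle is justifying the cancellation of the base-solver regret. Here the learner's base instance was restarted at $\tau_{i-1}$ rather than $\nu_{i-1}$, and is interleaved with exploration rounds, so it is not literally the oracle's trajectory. I would first try a coupling argument: run the oracle's instance on the same randomness and index the base solver by its own update count. Under the good events, the misalignment then only occurs within delay windows, which are already charged at rate $1$. The remaining discrepancy is absorbed into (b) and into the exploration count. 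If exact coupling fails because of restless state evolution, I would instead bound the discrepancy by $\tau_{i-1}-\nu_{i-1}$ per segment, which is again of order term (b).
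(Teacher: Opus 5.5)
Your skeleton is the paper's proof: recursive conditioning on $\overline F_iD_i$, a charge of $T$ for $F_i$ and for $\overline D_i$, rate-one charging of the delay windows $(\nu_i,\tau_i]$, Lemma~\ref{applemma:excess_regret_stat} on $(\tau_{i-1},\nu_i]$, and Cauchy--Schwarz on $\sum_i\sqrt{s_i}$. The paper handles the step you call the main obstacle only by citing that lemma, whose proof simply asserts that the base-solver terms $C_i$ cancel. So, relative to the paper, citing it finishes your argument. On its own terms, however, your claim that the non-exploration rounds ``replicate the oracle's restarted run'' is exactly the unproved step, and neither repair you propose yields the stated inequality. The fallback of bounding the stationary-piece discrepancy by $\tau_{i-1}-\nu_{i-1}$ pays the delay twice, because the window $(\nu_{i-1},\tau_{i-1}]$ has already been charged $\tau_{i-1}-\nu_{i-1}$ in term (b). You would end with term (b) doubled, and ``of the order of (b)'' is not the claimed bound. (Likewise, the $\tfrac32K$ per segment from Lemma~\ref{applemma:de_regret} is not removed by rounding; it is absorbed by the $-2\alpha^2$ in that lemma's proof only when $\alpha\ge\sqrt{3K}/2$.)

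Your coupling idea is the right one in the one-state case, but it has to be applied to a whole segment $(\nu_{i-1},\nu_i]$ rather than piece by piece. Suppose $\mathcal{B}$'s $j$-th pull of arm $k$ in segment $i$ returns the same reward in both runs, and $\mathcal{B}$ does not use the global clock $t$. Then the learner's $n_{\mathcal{B}}$ base rounds on $(\tau_{i-1},\nu_i]$ reproduce the oracle's first $n_{\mathcal{B}}$ rounds. Hence the excess over the segment is at most the oracle's reward on its last $s_i-n_{\mathcal{B}}=(\tau_{i-1}-\nu_{i-1})+N_{\mathrm{DE}}$ rounds, where $N_{\mathrm{DE}}$ is the number of exploration rounds there, and the delay is paid once. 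On $(\tau_{i-1},\nu_i]$ alone, the oracle is further along its learning curve than the freshly reset learner, so that piece by itself can exceed its exploration cost by up to $\tau_{i-1}-\nu_{i-1}$; the per-piece bound $2\alpha\sqrt{s_i}$ is not what the coupling gives.

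In the genuinely restless case, which is where you invoke the fallback, the discrepancy is not controlled by the delay at all. First, the restart at $\tau_{i-1}$ begins from a different joint-state law than at $\nu_{i-1}$. Second, every exploration block lets all chains make $K$ transitions that $\mathcal{B}$ does not observe in the fully modular setting, so across a block its consecutive observations of arm $k$ are linked by $(P_k^{(i)})^{K+1}$ instead of $P_k^{(i)}$. This second effect occurs even when $\tau_{i-1}=\nu_{i-1}$. A black-box $\mathcal{B}$ (say, one that indexes its model by its own step count) can be desynchronized after every block, and nothing in terms (a)--(c) accounts for the resulting loss. Closing the step needs an extra hypothesis on $\mathcal{B}$, such as robustness to interruptions at a mixing-time cost per block, and that would add a term the stated bound does not contain.
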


\begin{proof}
    Recall that $R_{\mathrm{excess}}(r,s):=\E[\textstyle\sum_{t=1}^T r_t^{\mathrm{seg}}]-\textstyle\sum_{t=1}^T r_t^{\pi}$, then  $\Rex\left(T\right)=\E\left[R_{\mathrm{excess}}\left(1,T\right)\right]$. We have
    \begin{subequations}
        \label{10}
        \begin{align}
            \Rex\left(T\right)&=\E\left[R_{\mathrm{excess}}\left(1,T\right)\right]\\
            &=\E\left[R_{\mathrm{excess}}\left(1,T\right)\middle|\overline{F}_{0}D_{0}\right] \label{eqn:exR_init}\\
            &\leq \E\left[R_{\mathrm{excess}}\left(1,\nu_{1}\right)\middle|\overline{F}_{1}\overline{F}_{0}D_{0}\right]+\E\left[R_{\mathrm{excess}}\left(\nu_{1},T\right)\middle|\overline{F}_{1}\overline{F}_{0}D_{0}\right]+ T\cdot \mathbb{P}\left(F_{1}\middle| \overline{F}_{0}D_{0}\right)\label{eqn:exR_conE1}\\
            &\leq  2\alpha\sqrt{\left(\nu_{1}-\nu_{0}\right)}+\E\left[R\left(\nu_{1},T\right)\middle| \overline{F}_{1}\overline{F}_{0}D_{0}\right]+T\cdot \mathbb{P}\left(F_{1}\middle| \overline{F}_{0}D_{0}\right),\label{eqn:exR_conE2}
        \end{align}
    \end{subequations}
    where \eqref{eqn:exR_init} holds because $\tau_{0}=0$, \eqref{eqn:exR_conE1} is due to the law of total expectation and some trivial bounds, and \eqref{eqn:exR_conE2} follows from Lemmas~\ref{applemma:regret_stat}. The third term in \eqref{eqn:exR_conE2} is then further bounded as follows:

    \begin{subequations}
        \begin{align}
            \hspace{-20pt}\E\left[R_{\mathrm{excess}}\left(\nu_{1},T\right)\middle| \overline{F}_{1}\overline{F}_{0}D_{0}\right] 
            &\leq\E\left[R_{\mathrm{excess}}\left(\nu_{1},T\right)\middle| D_{1}\overline{F}_{1}\overline{F}_{0}D_{0}\right]+T\cdot \left(1-\mathbb{P}\left(D_{1}\middle| \overline{F}_{1}\overline{F}_{0}D_{0}\right)\right)\label{eqn:exR_conF1}\\
            &\hspace{-60pt}\leq \E\left[R_{\mathrm{excess}}\left(\nu_{1},T\right)\middle| D_{1}\overline{F}_{1}\overline{F}_{0}D_{0}\right] + T\cdot\mathbb{P}\left(\overline{D}_{1}\middle| \overline{F}_{1}\overline{F}_{0}D_{0}\right)\label{eqn:exR_conF2}\\     
            &\hspace{-60pt}= \E\left[R_{\mathrm{excess}}\left(\tau_{1},T\right)\middle| D_{1}\overline{F}_{1}\overline{F}_{0}D_{0}\right]+\E\left[R\left(\nu_{1},\tau_{1}\right)\middle| D_{1}\overline{F}_{1}\overline{F}_{0}D_{0}\right]+T\cdot\mathbb{P}\left(\overline{D}_{1}\middle| \overline{F}_{1}\overline{F}_{0}D_{0}\right)\label{eqn:exreg_split1}\\
            &\hspace{-60pt}\leq\E\left[R_{\mathrm{excess}}\left(\tau_{1},T\right)\middle| \overline{F}_{1}D_{1}\right] + \E\left[\tau_{1}-\nu_{1}\middle| \overline{F}_{1}D_{1}\overline{F}_{0}D_{0}\right]\label{eqn:reg_split2}+T\cdot\mathbb{P}\left(\overline{D}_{1}\middle| \overline{F}_{1}\overline{F}_{0}D_{0}\right)\\
            &\hspace{-60pt}\leq\E\left[R_{\mathrm{excess}}\left(\tau_{1},T\right)\middle| \overline{F}_{1}D_{1}\right] + \E\left[\tau_{1}-\nu_{1}\middle| \overline{F}_{1}D_{1}\right]+T\cdot\mathbb{P}\left(\overline{D}_{1}\middle| \overline{F}_{1}\overline{F}_{0}D_{0}\right)\label{eqn:exreg_split3},
        \end{align}
    \end{subequations}
    where \eqref{eqn:exR_conF1} applies the law of total expectation and some trivial bounds. 
    From here, we can set up the following recursion:

    \begin{subequations}
        \begin{align}
            &\E\left[R_{\mathrm{excess}}\left(1,T\right)\right] =\E\left[R_{\mathrm{excess}}\left(1,T\right)\middle|\overline{F}_{0}D_{0}\right] \\
            & \leq \E\left[R_{\mathrm{excess}}\left(\tau_{1},T\right)\middle|\overline{F}_{1}D_{1}\right]+2\alpha\sqrt{s_{1}-1}+\E\left[\tau_{1}-\nu_{1}\middle| \overline{F}_{1}D_{1}\right]\\
            &\hspace{+60pt}+T\cdot\mathbb{P}\left(F_{1}\middle| \overline{F}_{0}D_{0}\right)+T\cdot\mathbb{P}\left(\overline{D}_{1}\middle| \overline{F}_{1}\overline{F}_{0}D_{0}\right) \\
            & \leq \E\left[R_{\mathrm{excess}}\left(\tau_{2},T\right)\middle|\overline{F}_{2}D_{2}\right]+2\alpha\sum^{2}_{i=1}\sqrt{s_{i}-1}\\ 
            &+\sum^{2}_{i=1}\E\left[\tau_{i}-\nu_{i}\middle| \overline{F}_{i-1}D_{i-1}\right]+T\sum^{2}_{i=1}\mathbb{P}\left(F_{i}\middle| \overline{F}_{i-1}D_{i-1}\right)+T\sum^{2}_{i=1}\mathbb{P}\left(\overline{D}_{i}\middle| \overline{F}_{i}\overline{F}_{i-1}D_{i-1}\right) \\
            &\hspace{+150pt}\vdots\nonumber\\
            & \leq 2\alpha\sum^{M}_{i=1}\sqrt{s_{i}}+\sum^{M-1}_{i=1}\E\left[\tau_{i}-\nu_{i}\middle| \overline{F}_{i-1}D_{i-1}\right]\\
            &\hspace{+60pt}+T\sum^{M}_{i=1}\mathbb{P}\left(F_{i}\middle| \overline{F}_{i-1}D_{i-1}\right)+T\sum^{M-1}_{i=1}\mathbb{P}\left(\overline{D}_{i}\middle| \overline{F}_{i}\overline{F}_{i-1}D_{i-1}\right)\\
            & \leq 2\alpha\sqrt{MT}+\sum^{M-1}_{i=1}\E\left[\tau_{i}-\nu_{i}\middle| \overline{F}_{i-1}D_{i-1}\right]\label{eqn:exR_segment}\\
            &\hspace{+60pt}+T\sum^{M}_{i=1}\mathbb{P}\left(F_{i}\middle| \overline{F}_{i-1}D_{i-1}\right)+T\sum^{M-1}_{i=1}\mathbb{P}\left(\overline{D}_{i}\middle| \overline{F}_{i}\overline{F}_{i-1}D_{i-1}\right)
        \end{align}
    \end{subequations}
    where \eqref{eqn:exR_segment} follows from the Cauchy–Schwarz inequality
    \begin{subequations}
        \begin{align}
            \left(\sum^{M}_{i=1}\sqrt{s_{i}}\right)^{2}\leq \left(\sum^{M}_{i=1}s_{i}\right)\left(\sum^{M}_{i=1}1\right)= M\sum^{M}_{i=1}s_{i}= MT,
        \end{align}
    \end{subequations}
    and the fact that $\sum^{M}_{i=1}s_{i}=T$. This completes the proof.
\end{proof}

\subsection{Proof of One-State Special Case}

With the general regret bound in Theorem~\ref{appthm:general_extra}, a regret bound of the change detector with the proposed diminishing exploration can be obtained by bounding $\mathbb{P}\left(F_{i}\middle|\overline{F}_{i-1}D_{i-1}\right)$, $\mathbb{P}\left(D_{i}\middle|\overline{F}_{i}\overline{F}_{i-1}D_{i-1}\right)$, and $\E\left[\tau_{i}-\nu_{i}\middle| \overline{F}_{i}D_{i}\overline{F}_{i-1}D_{i-1}\right]$. In what follows, we adopt window-based change detectors (Algorithm~\ref{app:cd_alg}) as the basis for our subsequent analysis.

First, in Lemma~\ref{applemma:prob_fa}, we show that the probability of false alarm is very small; thereby, its contribution to the regret is negligible. 

\begin{lemma}[Probability of false alarm]\label{applemma:prob_fa} 
Under Algorithm~\ref{alg:main_alg} with parameter in Equations~\ref{eqn:w} and \ref{eqn:b}, we have
\begin{equation}
    \mathbb{P}\left(F_{i}\middle|\overline{F}_{i-1}D_{i-1}\right)\leq wK\left(1-\left(1-\exp\left(-2b^{2}/w\right)\right)^{\left\lfloor T/w \right\rfloor }\right)\leq \frac{1}{T}.
\end{equation}
\end{lemma}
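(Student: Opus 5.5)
The plan is to follow the false-alarm argument of M-UCB \citep{cao2019nearly}, adapted to the restart structure of Algorithm~\ref{alg:main_alg}. Condition on $\overline{F}_{i-1}D_{i-1}$. Then the $(i-1)$-th alarm satisfies $\nu_{i-1}\le\tau_{i-1}$, so after this alarm every arm's observation stream $Z_{k,1},Z_{k,2},\dots$ is generated by the single segment $i$ until $\nu_i$. In the one-state case these samples are i.i.d. with mean $\bar\mu_k^{(i)}$. The event $F_i$ requires that Algorithm~\ref{alg:mucb_cd} returns True at some time before $\nu_i$. That happens only if, for some arm $k$ and some sample index $n\ge w$, the window $Z_{k,n-w+1},\dots,Z_{k,n}$ has a half-difference $\bigl|\sum_{\text{second half}}-\sum_{\text{first half}}\bigr|$ exceeding $b$. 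I would therefore bound $\mathbb{P}(F_i\mid\overline{F}_{i-1}D_{i-1})$ by the probability of this union over $k\in\gK$ and over window positions. There are at most $K$ arms and at most $T$ window end-positions per arm.

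\textbf{Single-window bound.} For a fixed window under no change, the statistic is a sum of $w$ independent centred terms $\pm Z_\ell$, each ranging in an interval of length $1$. Hoeffding's inequality then gives $\mathbb{P}(|\cdot|>b)\le 2\exp(-2b^2/w)$. A one-sided version has the constant without the factor $2$, which is what produces the $\exp(-2b^2/w)$ appearing in the lemma.

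\textbf{Reproducing the stated intermediate form.} To get the expression $wK\bigl(1-(1-e^{-2b^2/w})^{\lfloor T/w\rfloor}\bigr)$, I would split the window positions by offset $r\in\{0,\dots,w-1\}$ modulo $w$. For a fixed arm and offset, the windows ending at $r+w, r+2w,\dots$ are disjoint and hence independent, and there are at most $\lfloor T/w\rfloor$ of them. So the probability that none of them fires is at least $(1-e^{-2b^2/w})^{\lfloor T/w\rfloor}$, and the probability that at least one fires is at most $1-(1-e^{-2b^2/w})^{\lfloor T/w\rfloor}$. A union bound over the $w$ offsets and the $K$ arms then yields the first inequality.

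\textbf{Second inequality.} Use $1-(1-x)^m\le mx$ to bound the expression by $wK\lfloor T/w\rfloor e^{-2b^2/w}\le KT e^{-2b^2/w}$. With $b=\sqrt{w\log(2KT^2/2)}=\sqrt{w\log(KT^2)}$ from \eqref{eqn:b}, we have $2b^2/w=2\log(KT^2)$, so the bound is at most $KT/(KT^2)^2\le 1/T$. This margin comfortably absorbs a factor $2$ if two-sided Hoeffding is used.

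\textbf{Main obstacle.} The delicate point is justifying independence and exchangeability of the samples within windows. Exploration times are deterministic given $\tau_{i-1}$, but base-solver pulls are adaptively chosen, so sample indices are random stopping times. I would handle this with the standard reward-table argument: pre-draw an i.i.d. sequence of rewards for each arm and segment, and let the $n$-th pull of arm $k$ read the $n$-th entry. Because detection operates on per-arm sample indices, the false-alarm event is contained in a union of events on fixed index windows of these i.i.d. tables, and the conditioning on $\overline{F}_{i-1}D_{i-1}$ (a past event) does not affect them.
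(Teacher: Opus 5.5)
Your proposal is correct and follows the paper's proof essentially step for step: a union bound over arms, splitting the window positions into $w$ offset classes of disjoint windows to get $w\bigl(1-(1-p)^{\lfloor T/w\rfloor}\bigr)$ per arm, a Hoeffding/McDiarmid single-window bound, and then $1-(1-x)^{m}\le mx$ with the chosen $b$. Your per-arm sample-index windows and pre-drawn reward tables also make rigorous the independence that the paper only asserts, provided the tables are drawn afresh after each alarm, so that they are independent of the segment-$i$ samples that triggered $\tau_{i-1}$.

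The one slip is your remark that a one-sided bound removes the factor $2$. The event $|S_{k,t}|>b$ is two-sided, and the paper's own proof carries $2\exp(-2b^{2}/w)$ inside the bracket. So the factor-free middle expression in the statement is a typo, and, as you note, the final $1/T$ bound is unaffected.
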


\begin{proof}
    Suppose that at time $t$, we have gathered $w$ samples of arm $k\in\gK$, namely $Y_{k,1}, Y_{k,2},\ldots, Y_{k,w}$, for change detection in line 17 of \Algref{alg:main_alg}, and we define
    \begin{equation}
        S_{k,t}= \sum^{w}_{\ell=w/2 +1}Y_{k,\ell}-\sum^{w/2}_{\ell=1}Y_{k,\ell}.            \label{eqn:s_def}
    \end{equation}
    Note that $S_{k,t}=0$ when there is insufficient (less than $w$) samples to trigger the change detection algorithm. By definition, we have
    \begin{equation}
        \tau_{k,i}=\inf \{t\geq \tau_{i-1}+w:\left\lvert S_{k,t}\right\rvert >b\}.
    \end{equation}
    Given that the events $D_{i-1}$ and $\bar{F}_{i-1}$ hold, we define $\tau_{k,i}$ as the first detection time of the $k$-th arm after $\nu_{i}$. Clearly, $\tau_{i}=\min _{k\in\gK} \left\{\tau_{k,i}\right\}$ as \Algref{alg:main_alg} would reset every time a change is detected. Using the union bound, we have
    \begin{subequations}
        \begin{align}
            \mathbb{P}\left(F_{i}\middle| \overline{F}_{i-1}D_{i-1}\right)=&\mathbb{P}\left(\max_{k\in\mathcal{K}} \sum_{t=\tau_{i-1}+1}^{\nu_{i}}\1_{\left\{A_{t}=k\right\}}\geq w,F_{i}\middle| \overline{F}_{i-1}D_{i-1}\right)\\
            &+\mathbb{P}\left(\max_{k\in\mathcal{K}} \sum_{t=\tau_{i-1}+1}^{\nu_{i}}\1_{\left\{A_{t}=k\right\}}<w,F_{i}\middle| \overline{F}_{i-1}D_{i-1}\right) \label{eqn:false_al_p_b}\\
            =&\mathbb{P}\left(F_{i}\middle| \overline{F}_{i-1}D_{i-1},\max_{k\in\mathcal{K}} \sum_{t=\tau_{i-1}+1}^{\nu_{i}}\1_{\left\{A_{t}=k\right\}}\geq w\right)\label{eqn:false_al_p_c}\\
            &\cdot \mathbb{P}\left(\max_{k\in\mathcal{K}} \sum_{t=\tau_{i-1}+1}^{\nu_{i}}\1_{\left\{A_{t}=k\right\}}\geq w\middle| \overline{F}_{i-1}D_{i-1}\right)\label{eqn:false_al_p_d}\\
            \leq &\mathbb{P}\left(F_{i}\middle|\overline{F}_{i-1}D_{i-1},\max_{k\in\mathcal{K}} \sum_{t=\tau_{i-1}+1}^{\nu_{i}}\1_{\left\{A_{t}=k\right\}}\geq w\right)\label{eqn:false_al_p_e}\\
            \leq & \sum^{K}_{k=1}\mathbb{P}\left(\tau_{k,i}\leq \nu_{i}\middle|\overline{F}_{i-1}D_{i-1}, \max_{k^{\prime}\in\mathcal{K}} \sum_{t=\tau_{i-1}+1}^{\nu_{i}}\1_{\left\{A_{t}=k^{\prime}\right\}}\geq w\right)\label{eqn:max_1}\\
            \leq & \sum^{K}_{k=1}\mathbb{P}\left(\tau_{k,i}\leq \nu_{i}\middle|\overline{F}_{i-1}D_{i-1}, \sum_{t=\tau_{i-1}+1}^{\nu_{i}}\1_{\left\{A_{t}=k\right\}}\geq w\right), \label{eqn:false_al_p_f}
        \end{align}
    \end{subequations}
    where the term in \eqref{eqn:false_al_p_b} is clearly equal to $0$ as there will be no false alarm if we do not even have sufficiently many observations to trigger the alarm as suggested by Algorithm~\ref{alg:mucb_cd}. \Eqref{eqn:false_al_p_c} and \eqref{eqn:false_al_p_d} hold by the definition of conditional probability, \eqref{eqn:false_al_p_e} is due to the fact that the term in \eqref{eqn:false_al_p_d} is at most one, and \eqref{eqn:max_1} follows from the union bound. In \eqref{eqn:false_al_p_f}, if $k\neq k^{\prime}$, we cannot  guarantee that $\sum_{t=\tau_{i-1}+1}^{\nu_{i}}\1_{\left\{A_{t}=k^{\prime}\right\}}\geq w$. Hence, some $k$ might cause the probability in the \eqref{eqn:max_1} to be zeros. 
 
    For any $0\leq j\leq w-1$, define the stopping time
    \begin{equation}
        \tau_{k,i}^{(j)}:=\inf \{t=\tau_{i-1} + j+nw,n\in\mathbb{Z}^{+}:\left\lvert S_{k,t}\right\rvert >b\}.
    \end{equation}
    Clearly, $\tau_{k,i}=\min\{\tau_{k,i}^{(0)},\ldots,\tau_{k,i}^{(w-1)}\}$. Let us define, for any $0\leq j\leq w-1$,
    \begin{equation}
        \xi_{k,i}^{(j)}=\frac{\left(\tau_{k,i}^{(j)}-j-\tau_{i-1}\right)}{w}.
    \end{equation}
    Note that condition on the events $D_{i-1}$ and $\bar{F}_{i-1}$, $\xi_{k, i}^{(j)}$ is a geometric random variable with parameter $p := \mathbb{P }(\left\lvert S_{k,t}\right\rvert>b)$, because when fixing $j$, there is no overlap between the samples in the current window and the next.
    \begin{multline}
        \mathbb{P}\left(\tau_{k,i}^{(j)}=\tau_{i-1}+nw+j\middle|\overline{F}_{i-1}D_{i-1},\sum_{t=\tau_{i-1}+1}^{\nu_{i}}\1_{\left\{A_{t}=k\right\}}\geq w\right)\\=
            \mathbb{P}\left(\xi_{k,i} = n\middle|\overline{F}_{i-1}D_{i-1},\sum_{t=\tau_{i-1}+1}^{\nu_{i}}\1_{\left\{A_{t}=k\right\}}\geq w\right)
            =p(1-p)^{n-1}.
    \end{multline}

    Here, the inclusion of subsequent events as conditions should not impact the results, as when entering the change detection algorithm, those events have already occurred.
    Moreover, by union bound, we have that for any $k\in\gK$,
    \begin{subequations}
        \begin{align}
            \mathbb{P}\left(\tau_{k,i}\leq \nu_{i}\middle|\overline{F}_{i-1}D_{i-1},\sum_{t=\tau_{i-1}+1}^{\nu_{i}}\1_{\left\{A_{t}=k\right\}}\geq w\right)&\leq w\left( 1-\left(1-p\right)^{\left\lfloor \left(\nu_{i}-\tau_{i-1}\right)/w \right\rfloor}  \right)\\ 
            &\leq w\left( 1-(1-p)^{\left\lfloor T/w \right\rfloor}  \right).  \label{eqn:tau_ki}
        \end{align}
    \end{subequations}
    We further use the McDiarmid's inequality and the union bound to show that

    \begin{subequations}
        \begin{align}
            p&=\mathbb{P}\left(\left\lvert S_{k,t} \right\rvert>b\right)=\mathbb{P}\left(S_{k,t}>b\right)+\mathbb{P}\left(S_{k,t}<-b\right)\\
            &\leq 2\cdot \exp\left(-\frac{2b^{2}}{w}\right). \label{eqn:pp}
        \end{align}
    \end{subequations}
    Using the result in \eqref{eqn:tau_ki} and \eqref{eqn:pp} into \eqref{eqn:false_al_p_f}, 
    \begin{subequations}
        \begin{align}
            &\mathbb{P}\left(F_{i}\middle|\overline{F}_{i-1}D_{i-1}\right)\leq \sum^{K}_{k=1}w\left( 1-\left(1-2 \exp\left(-\frac{2b^{2}}{w}\right)\right)^{\left\lfloor T/w \right\rfloor}  \right) \\
            &= wK\left( 1-\left(1-2 \exp\left(-\frac{2b^{2}}{w}\right)\right)^{\left\lfloor T/w \right\rfloor}  \right).
        \end{align}
    \end{subequations}
    Moreover, applying $\left(1-x\right)^{a}>1-ax$ for any $a>1$ and $0<x<1$ and plugging the choice of $b = \sqrt{w\log\left(2KT^{2} \right)/2}$ as in \eqref{eqn:b} shows the second inequality.
\end{proof}

Lemma~\ref{applemma:samples-time} ensures that, with high probability, the detection delay is confined within a tolerable interval. 
That is, each arm is sampled \(w/2\) times, and using equation~\ref{eqn:Ttd} from lemma~\ref{applemma:samples-time}, we select \(h_{i}\) as
\begin{equation}
h_{i} = \left\lceil w\left(\frac{K}{2\alpha}+1\right)\sqrt{s_{i}+1}+\frac{w^{2}}{4}\left(\frac{K}{2\alpha}+1\right)^{2} \right\rceil.
\end{equation}

\begin{lemma}[Probability of successful detection]\label{applemma:prob_delay} 
Consider a piecewise-stationary bandit environment. For any $\boldsymbol{\mu}^{(i)},\boldsymbol{\mu}^{(i+1)}\in\left[0,1\right]^{K}$ with parameters chosen in \eqref{eqn:w} and \eqref{eqn:b} 
and
\begin{equation}
h_{i} = \left\lceil w\left(\frac{K}{2\alpha}+1\right)\sqrt{s_{i}+1}+\frac{w^{2}}{4}\left(\frac{K}{2\alpha}+1\right)^{2} \right\rceil,
\end{equation}
for some $k\in\gK, i\geq 1$ and $c>0$, under the \Algref{alg:main_alg}, we have 
\begin{equation}
    \mathbb{P}\left(D_{i}\middle|\overline{F}_{i}\overline{F}_{i-1}D_{i-1}\right)\geq 1-\frac{1}{T}.
\end{equation}
\end{lemma}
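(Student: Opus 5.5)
Conditioned on $\overline{F}_i\overline{F}_{i-1}D_{i-1}$, no alarm occurs before $\nu_i$, so it suffices to show that with probability at least $1-1/T$ some arm raises an alarm in $[\nu_i,\nu_i+h_i]$. The plan follows the M-UCB argument of \citet{cao2019nearly}, with sampling times translated through Lemma~\ref{applemma:samples-time}. Fix an arm $k^\star$ with $\delta^{(i)}_{k^\star}=\delta^{(i)}\ge\delta$, using Assumption~\ref{ass:minimum_gap}. It suffices to show that the detector on arm $k^\star$ alone fires by time $\nu_i+h_i$ with probability at least $1-1/T$; since $\tau_i\le\tau_{k^\star,i}$, this gives the claim.

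\textbf{Step 1 (enough samples on both sides).} Since $\tau_{i-1}\le \nu_{i-1}+h_{i-1}$ under $D_{i-1}$, Assumption~\ref{asm:seg_length} guarantees that arm $k^\star$ has at least $w/2$ pre-change samples in $(\tau_{i-1},\nu_i]$. I will check this using the first bound \eqref{eqn:Treset} of Lemma~\ref{applemma:samples-time} with $n=w/2$.

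Next, apply the second bound \eqref{eqn:Ttd} with delay $t_d=\nu_i-\tau_{i-1}\le s_i$ and $n=w/2$. This shows that within
\[
w\left(\tfrac{K}{2\alpha}+1\right)\sqrt{s_i+1}+\tfrac{w^2}{4}\left(\tfrac{K}{2\alpha}+1\right)^2\le h_i
\]
steps after $\nu_i$, every arm, and in particular $k^\star$, receives $w/2$ post-change samples from the forced exploration alone. Hence at the time $t^\star\le \nu_i+h_i$ when the $(w/2)$-th post-change sample of $k^\star$ arrives, the detector window for $k^\star$ consists exactly of $w/2$ samples drawn with mean $\mu^{(i)}_{k^\star}$ followed by $w/2$ samples drawn with mean $\mu^{(i+1)}_{k^\star}$. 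Base-solver pulls can only add samples and shift the window earlier, which I handle by taking the first time the window straddles the change in this exact proportion.

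\textbf{Step 2 (concentration).} On that window, $\E[S_{k^\star,t^\star}]$ has absolute value $\frac{w}{2}\delta^{(i)}\ge \frac{w}{2}\delta$. No alarm means $|S_{k^\star,t^\star}|\le b$, which requires a deviation of at least $\frac{w\delta}{2}-b$ from the mean. By Hoeffding/McDiarmid (each sample has range $1$, $w$ samples) this probability is at most
\[
2\exp\!\left(-\tfrac{2(w\delta/2-b)^2}{w}\right).
\]
Plugging in \eqref{eqn:w} and \eqref{eqn:b} gives $\frac{w\delta}{2}-b\ge\sqrt{w\log(2T)/2}$, i.e.\ $\sqrt{w}\,\delta/2=\sqrt{\log(2KT^2)}+\sqrt{\log 2T}$, so the bound is at most $1/T$. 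This proves $\mathbb{P}(D_i\mid\overline{F}_i\overline{F}_{i-1}D_{i-1})\ge 1-1/T$.

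\textbf{Main obstacle.} The delicate part is Step 1. One must justify that the window seen by the detector at some time $\le\nu_i+h_i$ is exactly balanced, with $w/2$ samples on each side, despite interleaved base-solver pulls of arm $k^\star$. The fix I would try first is that any intermediate window after $\nu_i$ whose split is unbalanced can only help or be ignored. We only need the existence of one balanced window. Because samples arrive one at a time, the split index moves by one per sample, so it passes through the balanced position. The moment it does so occurs no later than the $(w/2)$-th post-change sample, which forced exploration delivers within $h_i$ steps. A secondary subtlety is that conditioning on $\overline{F}_i$ does not bias the post-change samples. Those samples are fresh, so the concentration bound applies conditionally.
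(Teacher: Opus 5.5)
Your proposal is correct and is essentially the paper's proof. The paper likewise lower-bounds $\mathbb{P}(D_i\mid\overline{F}_i\overline{F}_{i-1}D_{i-1})$ by the probability that the test on an arm with shift at least $\delta$ fires on the balanced window with $w/2$ samples on each side of $\nu_i$; its maximum over the split index $j$, attained at $j=w/2$, is your split-index argument. It then takes $h_i$ from the second bound of Lemma~\ref{applemma:samples-time} with $n=w/2$ and $t_d\le s_i$, and concludes with McDiarmid's inequality under the stated $w$ and $b$.

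Your exponent $2(w\delta/2-b)^2/w$ is the correct McDiarmid constant; the paper drops the factor $2$. With $w\delta/2-b\ge\sqrt{w\log(2T)}$, an i.i.d.\ window then misses with one-sided probability at most $(2T)^{-2}$. That slack is what you would use for a point both you and the paper leave informal: conditioning on $\overline{F}_i$ and the base solver's adaptive pulls can bias the \emph{pre}-change half of the window. It can be absorbed via $\mathbb{P}(A\mid B)\le\mathbb{P}(A)/\mathbb{P}(B)$ and a union bound over window end positions.
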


\begin{proof}
    \begin{subequations}
        \begin{align}
            \mathbb{P}\left(D_{i}\middle|\overline{F}_{i}\overline{F}_{i-1}D_{i-1}\right)
            = &\mathbb{P}\left(\tau_{i}\leq \nu_{i}+h_{i}\middle|\overline{F}_{i}\overline{F}_{i-1}D_{i-1}\right)\\
            \geq &\max_{t \in \left\{ \nu_{i}+1,\ldots,\nu_{i}+h_{i}\right\}}\mathbb{P}\left(S_{\tilde{k},t}>b\middle| \overline{F}_{i}\overline{F}_{i-1}D_{i-1}\right)\label{eqn:alarm_pb3}\\
            \geq &\max_{j \in \left\{ 0,\ldots,w/2\right\}} \left(1-2\exp\left(-\frac{( j\left\lvert\delta^{(i)}_{\tilde{k}}\right\rvert-b)^{2}}{w}\right)\right)\label{eqn:alarm_pb4}\\
            = & 1-2\exp\left(-\frac{(w|\delta^{(i)}_{\tilde{k}}|/2-b)^{2}}{w}\right)\label{eqn:alarm_pb5}\\
            \geq & 1-2\exp\left(-\frac{wc^{2}}{4}\right)\label{eqn:alarm_pb6}.
        \end{align}
    \end{subequations}
    where $S_{\tilde{k},t}$ is defined in \eqref{eqn:s_def}, \eqref{eqn:alarm_pb4} follows from the McDiarmid’s inequality, and \eqref{eqn:alarm_pb5} is due to the fact that the maximum value is attained when $j=w/2$. Last, \eqref{eqn:alarm_pb6} is true for any choice of $w,b$ and $c$ such that $\delta^{(i)}_{\tilde{k}}\geq 2b/w+c$ holds. We thus set $w$ and $b$ as in \eqref{eqn:w} and \eqref{eqn:b}, respectively, and choose $c=2\sqrt{\log\left(2T\right)/w}$, which leads to $\mathbb{P}\left(D_{i}\middle|\overline{F}_{i}\overline{F}_{i-1}D_{i-1}\right)\geq 1-1/T$.
\end{proof}
    
Lemma~\ref{applemma:exp_delay} further bounds the expected detection delay in the situation where the change detection algorithm successfully detects the change within the desired interval.

\begin{lemma}[Expected detection delay]\label{applemma:exp_delay} 
Consider a piecewise-stationary bandit environment. For any $\boldsymbol{\mu}^{(i)},\boldsymbol{\mu}^{(i+1)}\in\left[0,1\right]^{K}$ with parameters chosen in \eqref{eqn:w} and \eqref{eqn:b} 
and
\begin{equation}
h_{i} = \left\lceil w\left(\frac{K}{2\alpha}+1\right)\sqrt{s_{i}+1}+\frac{w^{2}}{4}\left(\frac{K}{2\alpha}+1\right)^{2} \right\rceil,
\end{equation}
for some $k\in\gK, i\geq 1$ and $c>0$, under the \Algref{alg:main_alg}, we have
\begin{equation}
    \E\left[\tau_{i}-\nu_{i}\middle| \overline{F}_{i}D_{i}\overline{F}_{i-1}D_{i-1}\right]\leq h_{i}.
\end{equation}
\end{lemma}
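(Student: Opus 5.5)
The bound follows almost directly from the definition of the conditioning event, so I would argue pointwise on that event rather than through any distributional analysis of the detector. By definition, $D_i=\{\nu_i\le\tau_i\le\nu_i+h_i\}$. On the event $\overline{F}_iD_i\overline{F}_{i-1}D_{i-1}$ we therefore have $0\le\tau_i-\nu_i\le h_i$ almost surely. Taking conditional expectation and using monotonicity gives $\E[\tau_i-\nu_i\mid \overline{F}_iD_i\overline{F}_{i-1}D_{i-1}]\le h_i$, provided the conditioning event has positive probability.

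The substantive part is to show that this conditioning is legitimate and that the specific choice of $h_i$ is consistent with the rest of the analysis. First, Lemma~\ref{applemma:prob_fa} and Lemma~\ref{applemma:prob_delay} give $\mathbb{P}(\overline{F}_i\mid\overline{F}_{i-1}D_{i-1})\ge 1-1/T$ and $\mathbb{P}(D_i\mid\overline{F}_i\overline{F}_{i-1}D_{i-1})\ge1-1/T$. Chaining these shows that the event has positive probability for $T\ge2$, so the conditional expectation is well defined.

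Second, I would justify the form of $h_i$ as the time needed to collect $w/2$ post-change samples of every arm through diminishing exploration alone. I would apply the second bound of Lemma~\ref{applemma:samples-time} with $n=w/2$. The delay offset $t_d=\nu_i-\tau_{i-1}$ is at most $s_i$ on $\overline{F}_{i-1}D_{i-1}$, up to the $h_{i-1}$ slack absorbed by Assumption~\ref{asm:seg_length}. Monotonicity of the bound in $t_d$ then gives
\[
T_{t_d}\le w\left(\frac{K}{2\alpha}+1\right)\sqrt{s_i+1}+\frac{w^2}{4}\left(\frac{K}{2\alpha}+1\right)^2\le h_i.
\]
So within $h_i$ steps after $\nu_i$, each arm's detection window contains at least $w/2$ post-change samples. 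This is the configuration used in \eqref{eqn:alarm_pb4}--\eqref{eqn:alarm_pb5} of Lemma~\ref{applemma:prob_delay}, which confirms that $D_i$ is the high-probability event rather than a vacuous one.

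The main obstacle is not the inequality itself but this bookkeeping. One must check that $t_d$, measured from the last reset $\tau_{i-1}$ rather than from $\nu_{i-1}$, is bounded by $s_i$, so that $\sqrt{s_i+1}$ and not $\sqrt{T}$ appears. I would handle this by noting that $\tau_{i-1}\ge\nu_{i-1}$ on $\overline{F}_{i-1}$, which gives $\nu_i-\tau_{i-1}\le\nu_i-\nu_{i-1}=s_i$. If a tighter statement were wanted, one could additionally bound the conditional expectation by summing tail probabilities $\mathbb{P}(\tau_i-\nu_i>t)$ using geometric decay across disjoint windows. However, $h_i$ suffices for Theorem~\ref{appthm:general_extra}, so the pointwise argument is the one I would present.
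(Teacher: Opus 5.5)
Your argument is correct and is essentially the paper's. The paper writes $\E[\tau_i-\nu_i\mid\overline{F}_iD_i\overline{F}_{i-1}D_{i-1}]=\sum_{j=1}^{h_i}\mathbb{P}(\tau_i\ge\nu_i+j\mid\overline{F}_iD_i\overline{F}_{i-1}D_{i-1})\le h_i$, which is the tail-sum form of your observation that $D_i$ caps the delay at $h_i$ pointwise. The extra bookkeeping on where $h_i$ comes from (via Lemma~\ref{applemma:samples-time}) is not needed, since the bound holds for whatever $h_i$ defines $D_i$; your remark that the conditioning event needs positive probability is a point the paper leaves implicit.
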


\begin{proof}
    For any $1\leq i\leq M$, we have
    \begin{subequations}
        \begin{align}
            \E\left[\tau_{i}-\nu_{i}\middle| \overline{F}_{i}D_{i}\overline{F}_{i-1}D_{i-1}\right]  
            &=\sum^{h_{i}}_{j=1}\mathbb{P}\left(\tau_{i}\geq\nu_{i}+j\middle| \overline{F}_{i}D_{i}\overline{F}_{i-1}D_{i-1}\right)
            \leq h_{i}.
        \end{align}
    \end{subequations}
\end{proof}

Plugging the bounds in Lemmas ~\ref{applemma:prob_fa},~\ref{applemma:prob_delay} and~\ref{applemma:exp_delay} into Theorem~\ref{appthm:general_extra} shows the following regret bound in Corollary~\ref{appcor:onestate_ucb_mucb}.

\begin{corollary}\label{appcor:onestate_ucb_mucb}
Combining Algorithm~\ref{alg:main_alg} (diminishing exploration with resets) 
with the base solver $\mathcal{B}$ and Algorithm~\ref{alg:mucb_cd} with parameters $(w,b)$ given in Equations~\ref{eqn:w} and~\ref{eqn:b}, 
the excess regret is upper-bounded as
\begin{equation}
     \Rex(T)\leq 2\alpha\sqrt{MT}+w\left(\frac{K}{2\alpha}+1\right)\sqrt{M\left(T+M\right)}+\frac{w^{2}M}{4}\left(\frac{K}{2\alpha}+1\right)^{2}+2M.
\end{equation}\label{eqn:regret_bound_MUCB_p}By setting $\alpha = c\sqrt{K\log{\left(KT\right)}}$ for some constant $c$, the expected regret is upper-bounded by $\mathcal{O}(\sqrt{KMT\log{T}})$.
\end{corollary}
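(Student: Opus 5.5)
The corollary follows by substituting the detector-specific lemmas into the general decomposition of Theorem~\ref{appthm:general_extra} and then optimizing $\alpha$. Term (a) is already $2\alpha\sqrt{MT}$, which appears verbatim in the claimed bound. The other terms are handled as follows.

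\textbf{False alarms.} Lemma~\ref{applemma:prob_fa} gives $\mathbb{P}(F_i\mid\overline{F}_{i-1}D_{i-1})\le 1/T$ for every $i$. Summing over at most $M$ indices contributes at most $M$ to term (c).

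\textbf{Missed detections.} Lemma~\ref{applemma:prob_delay} gives $\mathbb{P}(\overline{D}_i\mid\overline{F}_i\overline{F}_{i-1}D_{i-1})\le 1/T$. This contributes at most another $M$, so term (c) totals at most $2M$.

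\textbf{Detection delay.} Lemma~\ref{applemma:exp_delay} bounds each conditional expected delay by $h_i$. Hence term (b) is at most
\[
\sum_{i=1}^{M-1}h_i\le w\Big(\frac{K}{2\alpha}+1\Big)\sum_{i=1}^{M}\sqrt{s_i+1}+M\Big(\frac{w^2}{4}\Big(\frac{K}{2\alpha}+1\Big)^2+1\Big),
\]
where the extra $+1$ per index absorbs the ceiling. Cauchy--Schwarz, as in the proof of Theorem~\ref{appthm:general_extra}, gives $\sum_i\sqrt{s_i+1}\le\sqrt{M\sum_i(s_i+1)}=\sqrt{M(T+M)}$. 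This produces the second and third terms of the displayed bound. The ceiling constants $M$ combine with the $2M$ from (c); the stated constant $2M$ may need to be read as $O(M)$ or the ceiling absorbed more carefully.

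\textbf{Choice of $\alpha$.} Recall from \eqref{eqn:w} that $w=\Theta(\log(KT)/\delta^2)$, treating $\delta$ as a constant. Setting $\alpha=c\sqrt{K\log(KT)}$ makes the first term $O(\sqrt{KMT\log(KT)})$. Since $K/(2\alpha)=\Theta(\sqrt{K/\log(KT)})$, the delay term is
\[
w\Big(\frac{K}{2\alpha}+1\Big)\sqrt{M(T+M)}=O\big(\sqrt{K\log(KT)}\,\sqrt{MT}\big)+O\big(\log(KT)\sqrt{MT}\big),
\]
using $M\le T$, and both pieces are $O(\sqrt{KMT\log T})$. The quadratic term $\frac{w^2M}{4}(\frac{K}{2\alpha}+1)^2=O(MK\log(KT))$ is polylogarithmic in $T$ times $MK$. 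This is dominated by $\sqrt{KMT\log T}$ whenever $MK\log T\lesssim T$, which holds under the segment-length Assumption~\ref{asm:seg_length}. Finally, $\log(KT)=O(\log T)$ for $K\le T$.

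\textbf{Main obstacle.} The delicate step is justifying that the conditions behind Lemmas~\ref{applemma:prob_delay}--\ref{applemma:exp_delay} hold. Each arm needs $w/2$ pre-change samples and $w/2$ post-change samples within $h_i$ steps, and $h_i$ must fit inside the next segment so that $D_i$ is well defined. I would use Lemma~\ref{applemma:samples-time} with $n=w/2$ and $t_d\le s_i$ to get the $h_i$ formula. I would then check that Assumption~\ref{asm:seg_length} ensures $s_{i+1}\ge h_i$ and that the quadratic term is lower order. I would first try to verify this by comparing $h_i=O(w\sqrt{K}\sqrt{s_i}+w^2K)$ against $\Omega((\log KT+\sqrt{K\log KT})\sqrt{s_i})$, treating $\delta$ as fixed.
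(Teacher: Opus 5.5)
Your route is the paper's own: plug Lemmas~\ref{applemma:prob_fa}, \ref{applemma:prob_delay} and \ref{applemma:exp_delay} into Theorem~\ref{appthm:general_extra}, then apply Cauchy--Schwarz to $\sum_i\sqrt{s_i+1}$. Your caveat about the ceilings is fair: term (c) is at most $2M-1$, and the ceilings in $h_i$ add at most $M-1$ more, which does not affect the rate.

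The genuine gap is in your last step. You claim that the piece $O(\log(KT)\sqrt{MT})$ of the delay term is $O(\sqrt{KMT\log T})$. That requires $\log(KT)\lesssim\sqrt{K\log T}$, i.e.\ essentially $K\gtrsim\log T$. For fixed $K$ and $T\to\infty$ the ratio grows like $\sqrt{\log T/K}$. It cannot be tuned away either. The summand $w\sqrt{M(T+M)}$ comes from the ``$+1$'' in $\frac{K}{2\alpha}+1$ and does not involve $\alpha$. Since $w\ge 16\log(2T)/\delta^2$, it is at least $16\delta^{-2}\sqrt{MT}\log(2T)$. So the displayed inequality only yields $O\big(\sqrt{MT}\,(\sqrt{K\log T}+\log T)\big)$. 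That is $\tilde O(\sqrt{KMT})$, but not the claimed $O(\sqrt{KMT\log T})$ unless $K\gtrsim\log T$. For the same reason the quadratic term is $\Theta\big(M(\log^2(KT)+K\log(KT))\big)$, not $O(MK\log(KT))$. The paper asserts the final rate without this check, so here you have inherited its gap rather than closed it.

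To actually obtain $O(\sqrt{KMT\log T})$, you need to sharpen Lemma~\ref{applemma:samples-time}. The offending $+1$ is the ceiling in the exploration schedule. The paper absorbs it as $u^{(j)}\le(\sqrt{u^{(j-1)}}+\frac{K}{2\alpha}+1)^2$, so it ends up multiplying $\sqrt{t_d}$. Instead, sum the gaps $u^{(j)}-u^{(j-1)}\le\frac{K}{\alpha}\sqrt{u^{(j-1)}}+\frac{K^2}{4\alpha^2}+1$ directly over the $w/2$ exploration rounds after $t_d$, using only that $\sqrt{u}$ grows by at most $\frac{K}{2\alpha}+1$ per round. Then the only $\sqrt{s_i}$-dependent part of $h_i$ is $\frac{wK}{2\alpha}\sqrt{s_i+1}$. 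The ceilings contribute only additive terms of order $w^2\frac{K}{2\alpha}(\frac{K}{2\alpha}+1)+w+K$. With $\alpha=c\sqrt{K\log(KT)}$, the leading part sums to $O(\sqrt{KMT\log(KT)})$. The remainder is lower order once $T\gtrsim M(K\log T+\log^2 T)$, which is roughly what Assumption~\ref{asm:seg_length} provides for segments of comparable length.

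Relatedly, in your final paragraph the estimate $h_i=O(w\sqrt{K}\sqrt{s_i})$ is too crude to compare with Assumption~\ref{asm:seg_length}. The exact factor $w(\frac{K}{2\alpha}+1)=\Theta\big(\delta^{-2}(\sqrt{K\log(KT)}+\log(KT))\big)$ is what matches that assumption's form; its $\log KT$ summand corresponds to this same $+1$. By contrast, $w\sqrt{K}$ overshoots by a factor of order $\min(\sqrt{K},\sqrt{\log(KT)})$.
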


\subsubsection{Proof of standard regret in one-state case}

Define $R\left(r,s\right) :=\sum^{s}_{t=r}\max_{k\in\gK}\E\left[X_{k,t}\right]-X_{A_{t},t}$ be the regret accumulated during $r$ and $s$. In the next lemma, we provide an upper bound on the regret accumulated from the $\left(i-1\right)$-th alarm time to the end of $\left(i-1\right)$-th segment, given that the previous change was successfully detected.

\begin{lemma}[Regret bound with stationary bandit]\label{applemma:regret_stat} 
Consider a stationary bandit interval with $\nu_{i-1}<\tau_{i-1}<\nu_{i}$. Condition on the successful detection events $\overline{F}_{i-1}$ and $D_{i-1}$, the expected regret accumulated during $\left(\tau_{i-1},\nu_{i}\right)$ can be bounded by
\begin{equation}
    \E\left[R\left(\tau_{i-1},\nu_{i}\right)\middle|\overline{F}_{i-1}D_{i-1} \right]\leq \tilde{C}_{i}+2\alpha\sqrt{s_{i}}+T\cdot \mathbb{P}\left(F_{i}\middle|\overline{F}_{i-1}D_{i-1}\right),\label{eqn:one_seg_regret}
\end{equation}
where $\tilde{C}_{i}=8\sum_{\Delta^{\left(i\right)}_{k}>0}\frac{\log T}{\Delta^{\left(i\right)}_{k}}+\left(\frac{5}{2}+\frac{\pi^{2}}{3}+K\right)\sum^{K}_{k=1}\Delta^{\left(i\right)}_{k}$.
\end{lemma}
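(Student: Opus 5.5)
The plan is to split the regret on $(\tau_{i-1},\nu_i)$ according to whether a false alarm $F_i$ occurs, mirroring the proof of Lemma~\ref{applemma:excess_regret_stat}. By the law of total expectation, conditioned on $\overline{F}_{i-1}D_{i-1}$,
\[
\E\left[R\left(\tau_{i-1},\nu_i\right)\middle|\overline{F}_{i-1}D_{i-1}\right]\le T\cdot\mathbb{P}\left(F_i\middle|\overline{F}_{i-1}D_{i-1}\right)+\E\left[R\left(\tau_{i-1},\nu_i\right)\middle|\overline{F}_i\overline{F}_{i-1}D_{i-1}\right].
\]
The first term uses only that per-step regret is at most $1$ and the interval has length at most $T$. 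On the event $\overline{F}_i\overline{F}_{i-1}D_{i-1}$, no reset happens in $(\tau_{i-1},\nu_i)$. Moreover, $\tau_{i-1}\ge\nu_{i-1}$, so all arm means equal $\bar{\boldsymbol{\mu}}^{(i)}$ throughout the interval. The process is therefore a single stationary run consisting of UCB restarted at $\tau_{i-1}$ and interleaved with the diminishing exploration rounds.

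I would split the remaining regret into exploration and UCB parts. Each round $t$ is either forced ($A_t$ set by lines 3--4) or chosen by $\mathcal{B}$. For the forced rounds, Lemma~\ref{applemma:de_regret} applies directly, since the means are constant on $[\tau_{i-1},\nu_i)$. It gives at most $2\alpha\sqrt{\nu_i-\tau_{i-1}}+\tfrac32K\le2\alpha\sqrt{s_i}+\tfrac32K$, using $\nu_i-\tau_{i-1}\le\nu_i-\nu_{i-1}=s_i$. The additive $\tfrac32 K$ (more precisely $\tfrac32\sum_k\Delta_k^{(i)}$, using $N_{\mathrm{DE},k}$) will be absorbed into the $K\sum_k\Delta^{(i)}_k$ part of $\tilde C_i$. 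Multiplying the bound on $N_{\mathrm{DE},k}$ by $\Delta_k^{(i)}\le1$ gives the needed form.

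For the rounds chosen by the base solver, I work in the fully modular setting, so UCB's indices depend only on its own samples $\mathcal{H}^{\mathcal{B}}$. Viewed in its own clock, UCB then sees an i.i.d.\ stationary bandit, because the one-state rewards are i.i.d.\ given the arm. I would rerun the standard Auer et al.\ argument for the number of suboptimal pulls: $\E[N_k]\le 8\log T/(\Delta_k^{(i)})^2+1+\pi^2/3$. Multiplying by $\Delta^{(i)}_k$ gives $8\log T/\Delta_k^{(i)}+(1+\pi^2/3)\Delta_k^{(i)}$. The extra constant $\tfrac32+K$ per $\Delta_k$ covers the initialization pulls and the rounding slack in the exploration count. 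Summing gives $\tilde C_i$.

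The main obstacle is justifying that conditioning on $\overline{F}_i$ does not bias UCB's reward samples. The event $\overline{F}_i$ depends on the detector's statistics, which share samples with UCB, so the i.i.d.\ structure used by UCB is not obviously preserved. I would handle this by bounding $\E[R\,\1\{\overline{F}_i\}]\le\E[R]$, using nonnegativity of the pseudo-regret of a run with no forced reset. I would then divide by $\mathbb{P}(\overline{F}_i\mid\cdot)\ge1-1/T$ via Lemma~\ref{applemma:prob_fa}. The resulting factor $(1-1/T)^{-1}$ only inflates constants, and I would absorb it into the $T\cdot\mathbb{P}(F_i\mid\cdot)$ slack if needed.
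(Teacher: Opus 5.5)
Your argument is the paper's: the same total-expectation split on $F_i$, with the trivial bound $T$ on the false-alarm branch. It also uses the same per-arm split of pulls into forced-exploration pulls (Lemma~\ref{applemma:de_regret}) and base-solver pulls (the Auer et al.\ count), with the same constants. In the fully modular setting you apply Auer's bound to UCB's own counter, whereas the paper imposes the threshold $l$ on the total count $N_k$; yours is, if anything, the cleaner reading.

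Your conditioning remark identifies a step the paper skips: it runs the UCB analysis directly under the measure conditioned on $\overline{F}_i$, where UCB's samples need not be i.i.d. However, your way of closing this does not give the stated bound. Dividing by $\mathbb{P}(\overline{F}_i\mid\overline{F}_{i-1}D_{i-1})\ge 1-1/T$ leaves a factor $(1-1/T)^{-1}$ on $\tilde C_i+2\alpha\sqrt{s_i}$. The surplus cannot be absorbed into $T\cdot\mathbb{P}(F_i\mid\overline{F}_{i-1}D_{i-1})$, because that term has no lower bound and may be $0$. The division also needlessly ties the lemma to Lemma~\ref{applemma:prob_fa}.

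No division is needed. In the first step, keep $\E[R\,\1\{\overline{F}_i\}\mid\overline{F}_{i-1}D_{i-1}]$ rather than replacing it by $\E[R\mid\overline{F}_i\overline{F}_{i-1}D_{i-1}]$. Bound it by $\E[R'\mid\overline{F}_{i-1}D_{i-1}]$, where $R'\ge 0$ is the pseudo-regret on $(\tau_{i-1},\nu_i]$ of the coupled run that ignores the detector. It must be this coupled run, not the actual one, or the bound is circular; the two runs take identical actions on $\overline{F}_i$.

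Pass from realized to pseudo-regret before splitting. This is legitimate for the left-hand side because $\overline{F}_{i-1}D_{i-1}$ is determined at time $\tau_{i-1}$. In the coupled run the exploration schedule is deterministic and UCB sees fresh i.i.d.\ samples in its own clock. Your two bounds then apply verbatim and give exactly $\tilde C_i+2\alpha\sqrt{s_i}$.
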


\begin{proof}
    For every $i$, we have
    \begin{subequations}
        \begin{align}
            \E\left[R\left(\tau_{i-1},\nu_{i}\right)\middle|\overline{F}_{i-1}D_{i-1} \right] &=\E\left[R\left(\tau_{i-1},\nu_{i}\right)\middle|F_{i}\overline{F}_{i-1}D_{i-1}\right]\mathbb{P}\left(F_{i}\middle|\overline{F}_{i-1}D_{i-1}\right)\\
            &\hspace{-20pt}+\E\left[R\left(\tau_{i-1},\nu_{i}\right)\middle|\overline{F}_{i}\overline{F}_{i-1}D_{i-1}\right]\mathbb{P}\left(\overline{F}_{i}\middle|\overline{F}_{i-1}D_{i-1}\right)\\
            &\hspace{-20pt}\leq T\cdot \mathbb{P}\left(F_{i}\middle|\overline{F}_{i-1}D_{i-1}\right)+\E\left[R\left(\tau_{i-1},\nu_{i}\right)\middle|\overline{F}_{i}\overline{F}_{i-1}D_{i-1}\right]. \label{eqn:regret_seg}
        \end{align}
    \end{subequations}
    Now, define $N_{k}\left(t_{1},t_{2}\right) :=\sum^{t_{2}}_{t=t_{1}}\1_{\left\{ A_{t}=k \right\}}$ to be the number of times that arm $k$ is selected by \Algref{alg:main_alg} from $t_{1}$ to $t_{2}$. Note that  
    \begin{equation}
        \E\left[R\left(\tau_{i-1},\nu_{i}\right)\middle|\overline{F}_{i}\overline{F}_{i-1}D_{i-1}\right]=\sum_{\Delta^{\left(i\right)}_{k}>0}\Delta^{\left(i\right)}_{k}\cdot \E\left[N_{k}\left(\tau_{i-1},\nu_{i}\right)\middle|\overline{F}_{i}\overline{F}_{i-1}D_{i-1}\right].  \label{eqn:regret_decomposition}
    \end{equation}
    To bound the second term of \eqref{eqn:regret_seg}, we further bound $N_{k}\left(\tau_{i-1},\nu_{i}\right)$ as follows,
    \begin{subequations}
        \begin{align}
            N_{k}\left(\tau_{i-1},\nu_{i}\right)=&\sum^{\nu_{i}}_{t=\tau_{i-1}+1}\1_{\left\{ A_{t}=k, \tau_{i}>\nu_{i}, N_{k}\left(\tau_{i-1},\nu_{i}\right)<l\right\}}+\sum^{\nu_{i}}_{t=\tau_{i-1}+1}\1_{\left\{ A_{t}=k, \tau_{i}>\nu_{i}, N_{k}\left(\tau_{i-1},\nu_{i}\right)\geq l\right\}}\\
            \leq &l+N_{DE,k}\left(\nu_{i}-\tau_{i-1}\right)+\sum^{\nu_{i}}_{t=\tau_{i-1}+1}\1_{\left\{ k=\argmax_{k\in\gK} \mathrm{UCB}_{k\in\gK}, \tau_{i}>\nu_{i}, N_{k}\left(\tau_{i-1},\nu_{i}\right)\geq l\right\}}\\
            \leq &l+\frac{2\alpha\sqrt{\nu_{i}-\tau_{i-1}}}{K}+\frac{3}{2}+\sum^{\nu_{i}}_{t=\tau_{i-1}+1}\1_{\left\{ k=\argmax_{k\in\gK} \mathrm{UCB}_{k\in\gK}, \tau_{i}>\nu_{i}, N_{k}\left(\tau_{i-1},\nu_{i}\right)\geq l\right\}}\label{eqn:N4}\\
            \leq &l+\frac{2\alpha\sqrt{s_{i}}}{K}+\frac{3}{2}+\sum^{\nu_{i}}_{t=\tau_{i-1}+1}\1_{\left\{ k=\argmax_{k\in\gK} \mathrm{UCB}_{k\in\gK}, \tau_{i}>\nu_{i}, N_{k}\left(\tau_{i-1},\nu_{i}\right)\geq l\right\}},
        \end{align}
    \end{subequations}
    \Eqref{eqn:N4} follows from Lemma \ref{applemma:de_regret}.
    Setting $l=\left\lceil 8\log T/\left(\Delta^{(i)}_{k}\right)^{2} \right\rceil $, and following the same steps as in the proof of Theorem 1 of \cite{auer2002finite}, we arrive at
    \begin{equation}
        \E\left[N_{k}\left(\tau_{i-1},\nu_{i}\right)\middle|\overline{F}_{i}\overline{F}_{i-1}D_{i-1}\right]\leq \frac{2\alpha\sqrt{s_{i}}}{K}+\frac{8\log T}{\left(\Delta^{(i)}_{k}\right)^{2}}+\frac{5}{2}+\frac{\pi^{2}}{3}+K.\label{eqn:N_one_seg}
    \end{equation}
    Putting everything together completes the proof.
\end{proof}

The term $\tilde{C}_{i}$ in equation~\ref{eqn:one_seg_regret} involves  $\Delta_k^{(i)}$. In what follows, we establish an upper bound of $\tilde{C_{i}}$ that avoids this dependence.
\begin{lemma}[Regret bound without involving $\Delta_k^{(i)}$]\label{applemma:rewrite_C}
    Consider a stationary bandit interval with $\nu_{i-1}<\tau_{i-1}<\nu_{i}$. Condition on the successful detection events $\overline{F}_{i-1}$ and $D_{i-1}$, and let $\Delta>0$, the expected regret accumulated during $\left(\tau_{i-1},\nu_{i}\right)$ can be bounded by
    \begin{equation}
        \E\left[R\left(\tau_{i-1},\nu_{i}\right)\middle|\overline{F}_{i-1}D_{i-1} \right]\leq 2\alpha\sqrt{s_{i}}+\frac{8K\log{T}}{\Delta}+\Delta\cdot s_{i}+\left(\frac{5}{2}+\frac{\pi^{2}}{3}+K\right)K+T\cdot \mathbb{P}\left(F_{i}\middle|\overline{F}_{i-1}D_{i-1}\right).\label{eqn:one_seg_regret_no_gap}
    \end{equation}
    Moreover, by selecting $\Delta=\sqrt{MK\log{T}/T}$, we have
    \begin{equation}
        \E\left[R\left(\tau_{i-1},\nu_{i}\right)\middle|\overline{F}_{i-1}D_{i-1}\right]
        \leq 2\alpha\sqrt{s_{i}}+8\sqrt{\frac{KT\log{T}}{M}}+\sqrt{\frac{MK\log{T}}{T}} s_{i}+\left(\frac{5}{2}+\frac{\pi^{2}}{3}+K\right)K+T\cdot \mathbb{P}\left(F_{i}\middle|\overline{F}_{i-1}D_{i-1}\right).
    \end{equation}
\end{lemma}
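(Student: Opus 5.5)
Our plan is to begin from Lemma~\ref{applemma:regret_stat}, which already gives
\[
\E\left[R\left(\tau_{i-1},\nu_{i}\right)\middle|\overline{F}_{i-1}D_{i-1}\right]\leq T\cdot\mathbb{P}\left(F_{i}\middle|\overline{F}_{i-1}D_{i-1}\right)+\sum_{\Delta^{(i)}_k>0}\Delta^{(i)}_k\,\E\left[N_k\left(\tau_{i-1},\nu_i\right)\middle|\overline{F}_{i}\overline{F}_{i-1}D_{i-1}\right],
\]
and to handle the gap-dependent part with the usual threshold argument. We split the arms according to a free parameter $\Delta>0$: the \emph{small-gap} arms with $\Delta^{(i)}_k\le\Delta$ and the \emph{large-gap} arms with $\Delta^{(i)}_k>\Delta$. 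The forced-exploration contribution is kept separate throughout. By Lemma~\ref{applemma:de_regret}, the exploration pulls add at most $\sum_k \Delta^{(i)}_k(2\alpha\sqrt{s_i}/K+3/2)\le 2\alpha\sqrt{s_i}+\tfrac32 K$, using $\Delta^{(i)}_k\le 1$. This is exactly the term $2\alpha\sqrt{s_i}$ in the claim, and the $\tfrac32K$ is absorbed into $(\tfrac52+\tfrac{\pi^2}{3}+K)K$.

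\textbf{Small-gap arms.} For arms with $\Delta^{(i)}_k\le\Delta$, we bound the remaining (non-exploration) pulls only by the fact that the total number of pulls in $(\tau_{i-1},\nu_i]$ is at most $s_i$. Their combined regret is therefore at most $\Delta\cdot\sum_k N_k\le\Delta s_i$.

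\textbf{Large-gap arms.} For arms with $\Delta^{(i)}_k>\Delta$, we use the UCB count bound \eqref{eqn:N_one_seg} (with its exploration part removed, since that part was accounted for above). It gives $\Delta^{(i)}_k\cdot 8\log T/(\Delta^{(i)}_k)^2=8\log T/\Delta^{(i)}_k<8\log T/\Delta$ per arm, plus the constant $(\tfrac52+\tfrac{\pi^2}{3}+K)\Delta^{(i)}_k\le \tfrac52+\tfrac{\pi^2}{3}+K$. Summing over at most $K$ arms yields $8K\log T/\Delta+(\tfrac52+\tfrac{\pi^2}{3}+K)K$. Adding the small-gap, large-gap, exploration and false-alarm pieces gives \eqref{eqn:one_seg_regret_no_gap}.

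\textbf{Main obstacle.} The only delicate point is the bookkeeping of the constants: we must make sure the exploration pulls are not double-counted between Lemma~\ref{applemma:de_regret} and \eqref{eqn:N_one_seg}, and that the leftover $\tfrac32K$ fits inside the stated constant. If it does not fit, we would simply note that the constant in \eqref{eqn:N_one_seg} already contains the $\tfrac32$ (its $\tfrac52=1+\tfrac32$), and that multiplying by $\Delta^{(i)}_k\le1$ keeps it within $(\tfrac52+\tfrac{\pi^2}{3}+K)K$.

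\textbf{Choice of $\Delta$.} Substituting $\Delta=\sqrt{MK\log T/T}$ gives
\[
\frac{8K\log T}{\Delta}=8K\log T\sqrt{\frac{T}{MK\log T}}=8\sqrt{\frac{KT\log T}{M}}, \qquad \Delta s_i=\sqrt{\frac{MK\log T}{T}}\,s_i,
\]
which is the second display of the lemma. This choice is made so that, after summing over the $M$ segments with $\sum_i s_i=T$, the two terms balance at $\sqrt{MKT\log T}$ each.
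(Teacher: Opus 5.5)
Your proof is correct and follows the paper's argument: you split the arms at the threshold $\Delta$, charge large-gap arms $8\log T/\Delta$ plus constants via the UCB count bound \eqref{eqn:N_one_seg}, charge small-gap arms $\Delta s_i$ using the total pull count, and add the false-alarm term. The only difference is bookkeeping. You pull out the forced-exploration pulls for all arms via Lemma~\ref{applemma:de_regret}, whereas the paper keeps them inside \eqref{eqn:N_one_seg} for large-gap arms and lets $\Delta s_i$ absorb them for small-gap arms. Your route works provided the $\tfrac32$ is stripped from the large-gap constant, leaving $1+\tfrac{\pi^2}{3}+K$, exactly as your ``main obstacle'' paragraph notes.
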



\begin{proof}
    We rewite the equation~\ref{eqn:regret_decomposition} as follows,
    \begin{subequations}
        \begin{align}
            \E\left[R\left(\tau_{i-1},\nu_{i}\right)\middle|\overline{F}_{i}\overline{F}_{i-1}D_{i-1}\right]&=\sum_{\Delta^{\left(i\right)}_{k}>0}\Delta^{\left(i\right)}_{k}\cdot \E\left[N_{k}\left(\tau_{i-1},\nu_{i}\right)\middle|\overline{F}_{i}\overline{F}_{i-1}D_{i-1}\right]\\
            &=\sum_{\Delta^{\left(i\right)}_{k}\geq\Delta}\Delta^{\left(i\right)}_{k}\cdot \E\left[N_{k}\left(\tau_{i-1},\nu_{i}\right)\middle|\overline{F}_{i}\overline{F}_{i-1}D_{i-1}\right]\label{eqn:decomp1}\\
            +\sum_{\Delta^{\left(i\right)}_{k}<\Delta}\Delta^{\left(i\right)}_{k}\cdot \E\left[N_{k}\left(\tau_{i-1},\nu_{i}\right)\middle|\overline{F}_{i}\overline{F}_{i-1}D_{i-1}\right]\hspace{-8.5cm}\label{eqn:decomp2}\\
            &\leq \sum_{\Delta^{\left(i\right)}_{k}\geq\Delta}\left[\frac{2\alpha\sqrt{s_{i}}}{K}\cdot\Delta^{(i)}_{k}+\frac{8\log{T}}{\Delta^{\left(i\right)}_{k}}+\left(\frac{5}{2}+\frac{\pi^{2}}{3}+K\right)\cdot\Delta^{(i)}_{k}\right]+\Delta\cdot s_{i}\label{eqn:decomp3}\\
            &\leq 2\alpha\sqrt{s_{i}}+\sum_{\Delta^{(i)}_{k}\geq \Delta}\frac{8\log{T}}{\Delta^{\left(i\right)}_{k}}+\left(\frac{5}{2}+\frac{\pi^{2}}{3}+K\right)\sum_{\Delta^{(i)}_{k}\geq \Delta}\Delta^{(i)}_{k}+\Delta\cdot s_{i}\\
            &\leq 2\alpha\sqrt{s_{i}}+\frac{8K\log{T}}{\Delta}+\left(\frac{5}{2}+\frac{\pi^{2}}{3}+K\right)K+\Delta\cdot s_{i}. \label{eqn:to_take_Delta}
        \end{align}
    \end{subequations}
    Equations~\ref{eqn:decomp1} to~\ref{eqn:decomp3} are derived using equation~\ref{eqn:N_one_seg}, while the transition from~\ref{eqn:decomp2} to~\ref{eqn:decomp3} leverages the inequalities $\Delta > \Delta^{(i)}_k$ and $s_{i} > \mathbb{E}\left[N_{k}\left(\tau_{i-1},\nu_{i}\right)\middle|\overline{F}_{i}\overline{F}_{i-1}D_{i-1}\right]$. 
    
    The second part of the lemma follows straightforwardly by plugging $\Delta=\sqrt{MK\log{T}/T}$ into then the equation~\ref{eqn:to_take_Delta}.
\end{proof}

Theorem \ref{appthm:std_regret_mucb} can then be proved by recursively applying Lemma~\ref{applemma:regret_stat}.
\begin{theorem}[Standard Regret bound of M-UCB]\label{appthm:std_regret_mucb}
    Combining Algorithms~\ref{alg:main_alg} and~\ref{alg:mucb_cd} with the parameters in Equation~\ref{eqn:w}, and Equation~\ref{eqn:b} achieves the expected regret upper bound as follows:
    \begin{multline} 
     \E\left[R\left(1,T\right)\right]\leq \underbrace{9\sqrt{MKT\log{T}}+\left(\frac{5}{2}+\frac{\pi^{2}}{3}+K\right)MK}_{(a)}+\underbrace{2\alpha\sqrt{MT}}_{(b)}
    +\underbrace{w\left(\frac{K}{2\alpha}+1\right)\sqrt{M\left(T+M\right)}}_{(c)}\\
    +\underbrace{\frac{w^{2}M}{4}\left(\frac{K}{2\alpha}+1\right)^{2}}_{(c)}+\underbrace{2M}_{(d)}.
    \end{multline}\label{eqn:regret_bound_MUCB_p}
    By setting $\alpha = c\sqrt{K\log{\left(KT\right)}}$ for some constant $c$, the expected regret is upper-bounded by $\mathcal{O}(\sqrt{KMT\log{T}})$.
\end{theorem}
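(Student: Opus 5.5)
We prove Theorem~\ref{appthm:std_regret_mucb} by mirroring the recursion in the proof of Theorem~\ref{appthm:general_extra}, now tracking the standard regret $R$ rather than $R_{\mathrm{excess}}$, so that the base-solver terms no longer cancel and must be bounded explicitly. We begin from $\E[R(1,T)]=\E[R(1,T)\mid\overline{F}_0D_0]$, valid because $\tau_0=0$. Conditioning on whether $F_1$ occurs, we split off $T\cdot\mathbb{P}(F_1\mid\overline{F}_0D_0)$. We then bound the stationary stretch $(\tau_0,\nu_1]$ with Lemma~\ref{applemma:rewrite_C}. The remaining regret from $\nu_1$ onward is handled as follows. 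We condition on $D_1$, paying $T\cdot\mathbb{P}(\overline{D}_1\mid\overline{F}_1\overline{F}_0D_0)$ on the complement. On $D_1$ we split the interval at $\tau_1$, bounding the regret on $(\nu_1,\tau_1]$ trivially by $\tau_1-\nu_1$ since rewards lie in $[0,1]$. What is left is $\E[R(\tau_1,T)\mid\overline{F}_1D_1]$, to which the same step applies. Iterating over $i=1,\dots,M$ gives
\begin{multline*}
\E[R(1,T)]\le \sum_{i=1}^M\Big(2\alpha\sqrt{s_i}+8\sqrt{\tfrac{KT\log T}{M}}+\sqrt{\tfrac{MK\log T}{T}}\,s_i+\big(\tfrac52+\tfrac{\pi^2}{3}+K\big)K\Big)\\
+\sum_{i=1}^{M-1}\E\big[\tau_i-\nu_i\mid\overline{F}_iD_i\overline{F}_{i-1}D_{i-1}\big]
+T\sum_{i=1}^{M}\mathbb{P}(F_i\mid\cdot)+T\sum_{i=1}^{M-1}\mathbb{P}(\overline{D}_i\mid\cdot).
\end{multline*}

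Next we evaluate each group of terms. For term (a), summing $8\sqrt{KT\log T/M}$ over $M$ segments gives $8\sqrt{MKT\log T}$, and using $\sum_i s_i=T$ the sum of $\sqrt{MK\log T/T}\,s_i$ gives $\sqrt{MKT\log T}$. Together these give $9\sqrt{MKT\log T}$, plus the additive $(\frac52+\frac{\pi^2}{3}+K)MK$. For term (b), Cauchy--Schwarz gives $\sum_i 2\alpha\sqrt{s_i}\le 2\alpha\sqrt{MT}$. For term (c), Lemma~\ref{applemma:exp_delay} bounds each delay by $h_i$. Summing $h_i$ and applying Cauchy--Schwarz to $\sum_i\sqrt{s_i+1}\le\sqrt{M(T+M)}$ yields $w(\frac{K}{2\alpha}+1)\sqrt{M(T+M)}+\frac{w^2M}{4}(\frac{K}{2\alpha}+1)^2$; the ceilings are absorbed into the constants. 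For term (d), Lemmas~\ref{applemma:prob_fa} and~\ref{applemma:prob_delay} give probabilities at most $1/T$ each, so their contribution is at most $2M$.

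Finally we substitute $\alpha=c\sqrt{K\log(KT)}$ and $w=\Theta(\log T/\delta^2)$. Term (b) becomes $O(\sqrt{KMT\log T})$. In term (c), $K/\alpha=O(\sqrt{K/\log(KT)})$, so the first piece is $O(\sqrt{MT}\cdot w(1+\sqrt{K/\log T}))$, which is $O(\sqrt{KMT\log T})$ with $\delta$ treated as a constant. The $w^2$ piece is lower order in $T$.

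The main obstacle is the validity of the recursive conditioning. The lemmas are stated conditional on $\overline{F}_{i-1}D_{i-1}$, yet the recursion also uses conditioning on $\overline{F}_i\overline{F}_{i-1}D_{i-1}$ and discards earlier history. This step requires arguing that after a reset at $\tau_{i-1}\in[\nu_{i-1},\nu_{i-1}+h_{i-1}]$ the algorithm restarts freshly. It also requires Assumption~\ref{asm:seg_length}, so that the remaining segment $(\tau_{i-1},\nu_i]$ is long enough for each arm to collect $w/2$ samples on each side of $\nu_i$. We would make this Markov-restart argument explicit first and then invoke Lemma~\ref{applemma:rewrite_C} with $s_i$ replaced by the at-most-$s_i$ length $\nu_i-\tau_{i-1}$.
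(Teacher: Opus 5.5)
Up to the explicit bound (a)--(d), your argument is the paper's proof. It uses the same recursion over $F_i$ and $D_i$, the trivial bound $\tau_i-\nu_i$ on the delay stretch, Lemma~\ref{applemma:exp_delay} with Cauchy--Schwarz for (c), and the $1/T$ probabilities for (d). Invoking Lemma~\ref{applemma:rewrite_C} segment by segment is the correct reading of the paper's step ``$\sum_i\tilde C_i\le 9\sqrt{MKT\log T}+\cdots$''. Taken literally that step is false, since $\tilde C_i$ contains $\sum_k 8\log T/\Delta_k^{(i)}$ and is unbounded as gaps shrink. Your explicit treatment of the restart/conditioning step and of Assumption~\ref{asm:seg_length} is also more careful than the paper, which takes both for granted.

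The step that fails is the final order claim for term (c). With $w=\Theta(\delta^{-2}\log(KT))$ and $\alpha=c\sqrt{K\log(KT)}$,
\[
w\Big(\frac{K}{2\alpha}+1\Big)\sqrt{M(T+M)}=\Theta\Big(\delta^{-2}\big(\sqrt{K\log(KT)}+\log(KT)\big)\sqrt{MT}\Big).
\]
The piece $w\sqrt{M(T+M)}=\Theta(\delta^{-2}\sqrt{MT}\log(KT))$ comes from the ``$+1$''. It exceeds $\sqrt{KMT\log T}$ by a factor of order $\sqrt{\log(KT)/K}$, so it is not $O(\sqrt{KMT\log T})$ for fixed $K$ as $T\to\infty$. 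No choice of $\alpha$ helps, because this piece does not depend on $\alpha$. The displayed bound therefore only yields $O(\sqrt{KMT\log T}+\sqrt{MT}\log T)=\tilde O(\sqrt{KMT})$, or the stated order when $K\gtrsim\log(KT)$. The paper's proof ends with ``plugging in completes the proof'' and makes the same unverified leap, so you are not departing from it. Still, the stated order needs one more ingredient.

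The ``$+1$'' is an artifact of the crude step $(\sqrt u+a)^2+1\le(\sqrt u+a+1)^2$, with $a=\frac{K}{2\alpha}$, in Lemma~\ref{applemma:samples-time}. Use $\sqrt{(\sqrt u+a)^2+1}\le\sqrt u+a+\frac{1}{2\sqrt u}$ instead. Then each exploration block starting after the delay $t_d$ advances $\sqrt u$ by at most $a+\frac{1}{2\sqrt{t_d}}$, so the rounding costs $O(w)$ per change rather than $O(w\sqrt{s_i})$. Term (c) then becomes $O(\delta^{-2}\sqrt{KMT\log(KT)})$ plus lower-order pieces of order $Mw$ and $Mw^2(\frac{K}{2\alpha}+1)^2$, which are lower order in $T$ as you say.
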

\begin{proof}
    Recall that $R\left(r,s\right)=\sum^{s}_{t=r}\max_{k\in\gK}\E\left[X_{k,t}\right]-X_{A_{t},t}$, then  $\gR\left(T\right)=\E\left[R\left(1,T\right)\right]$. We have
    \begin{subequations}
        \label{10}
        \begin{align}
            \gR\left(T\right)&=\E\left[R\left(1,T\right)\right]\\
            &=\E\left[R\left(1,T\right)\middle|\overline{F}_{0}D_{0}\right] \label{eqn:R_init}\\
            &\leq \E\left[R\left(1,\nu_{1}\right)\middle|\overline{F}_{1}\overline{F}_{0}D_{0}\right]+\E\left[R\left(\nu_{1},T\right)\middle|\overline{F}_{1}\overline{F}_{0}D_{0}\right]+ T\cdot \mathbb{P}\left(F_{1}\middle| \overline{F}_{0}D_{0}\right)\label{eqn:R_conE1}\\
            &\leq  \tilde{C}_{1}+2\alpha\sqrt{\left(\nu_{1}-\nu_{0}\right)}+\E\left[R\left(\nu_{1},T\right)\middle| \overline{F}_{1}\overline{F}_{0}D_{0}\right]+T\cdot \mathbb{P}\left(F_{1}\middle| \overline{F}_{0}D_{0}\right),\label{eqn:R_conE2}
        \end{align}
    \end{subequations}
    where \eqref{eqn:R_init} holds because $\tau_{0}=0$, \eqref{eqn:R_conE1} is due to the law of total expectation and some trivial bounds, and
    \eqref{eqn:R_conE2} follows from Lemmas~\ref{applemma:regret_stat}. The third term in \eqref{eqn:R_conE2} is then further bounded as follows:
    \begin{subequations}
        \begin{align}
            \hspace{-20pt}\E\left[R\left(\nu_{1},T\right)\middle| \overline{F}_{1}\overline{F}_{0}D_{0}\right] 
            &\leq\E\left[R\left(\nu_{1},T\right)\middle| D_{1}\overline{F}_{1}\overline{F}_{0}D_{0}\right]+T\cdot \left(1-\mathbb{P}\left(D_{1}\middle| \overline{F}_{1}\overline{F}_{0}D_{0}\right)\right)\label{eqn:R_conF1}\\
            &\hspace{-60pt}\leq \E\left[R\left(\nu_{1},T\right)\middle| D_{1}\overline{F}_{1}\overline{F}_{0}D_{0}\right] + T\cdot\mathbb{P}\left(\overline{D}_{1}\middle| \overline{F}_{1}\overline{F}_{0}D_{0}\right)\label{eqn:R_conF2}\\     
            &\hspace{-60pt}= \E\left[R\left(\tau_{1},T\right)\middle| D_{1}\overline{F}_{1}\overline{F}_{0}D_{0}\right]+\E\left[R\left(\nu_{1},\tau_{1}\right)\middle| D_{1}\overline{F}_{1}\overline{F}_{0}D_{0}\right]+T\cdot\mathbb{P}\left(\overline{D}_{1}\middle| \overline{F}_{1}\overline{F}_{0}D_{0}\right)\label{eqn:reg_split1}\\
            &\hspace{-60pt}\leq\E\left[R\left(\tau_{1},T\right)\middle| \overline{F}_{1}D_{1}\right] + \E\left[\tau_{1}-\nu_{1}\middle| \overline{F}_{1}D_{1}\overline{F}_{0}D_{0}\right]\label{eqn:reg_split2}+T\cdot\mathbb{P}\left(\overline{D}_{1}\middle| \overline{F}_{1}\overline{F}_{0}D_{0}\right)\\
            &\hspace{-60pt}\leq\E\left[R\left(\tau_{1},T\right)\middle| \overline{F}_{1}D_{1}\right] + \E\left[\tau_{1}-\nu_{1}\middle| \overline{F}_{1}D_{1}\right]+T\cdot\mathbb{P}\left(\overline{D}_{1}\middle| \overline{F}_{1}\overline{F}_{0}D_{0}\right)\label{eqn:reg_split3},
        \end{align}
    \end{subequations}
    where \eqref{eqn:R_conF1} applies the law of total expectation and some trivial bounds.%
    From here, we can set up the following recursion:
    \begin{subequations}
        \begin{align}
            &\E\left[R\left(1,T\right)\right] =\E\left[R\left(1,T\right)\middle|\overline{F}_{0}D_{0}\right] \\
            & \leq \E\left[R\left(\tau_{1},T\right)\middle|\overline{F}_{1}D_{1}\right]+\tilde{C}_{1}+2\alpha\sqrt{s_{1}-1}+\E\left[\tau_{1}-\nu_{1}\middle| \overline{F}_{1}D_{1}\right]\\
            &\hspace{+60pt}+T\cdot\mathbb{P}\left(F_{1}\middle| \overline{F}_{0}D_{0}\right)+T\cdot\mathbb{P}\left(\overline{D}_{1}\middle| \overline{F}_{1}\overline{F}_{0}D_{0}\right) \\
            & \leq \E\left[R\left(\tau_{2},T\right)\middle|\overline{F}_{2}D_{2}\right]+\sum^{2}_{i=1}\tilde{C}_{i}+2\alpha\sum^{2}_{i=1}\sqrt{s_{i}-1}\\ 
            &+\sum^{2}_{i=1}\E\left[\tau_{i}-\nu_{i}\middle| \overline{F}_{i-1}D_{i-1}\right]+T\sum^{2}_{i=1}\mathbb{P}\left(F_{i}\middle| \overline{F}_{i-1}D_{i-1}\right)+T\sum^{2}_{i=1}\mathbb{P}\left(\overline{D}_{i}\middle| \overline{F}_{i}\overline{F}_{i-1}D_{i-1}\right) \\
            &\hspace{+150pt}\vdots\nonumber\\
            & \leq \sum^{M}_{i=1}\tilde{C}_{i}+2\alpha\sum^{M}_{i=1}\sqrt{s_{i}}+\sum^{M-1}_{i=1}\E\left[\tau_{i}-\nu_{i}\middle| \overline{F}_{i-1}D_{i-1}\right]\\
            &\hspace{+60pt}+T\sum^{M}_{i=1}\mathbb{P}\left(F_{i}\middle| \overline{F}_{i-1}D_{i-1}\right)+T\sum^{M-1}_{i=1}\mathbb{P}\left(\overline{D}_{i}\middle| \overline{F}_{i}\overline{F}_{i-1}D_{i-1}\right)\\
            & \leq \sum^{M}_{i=1}\tilde{C}_{i}+2\alpha\sqrt{MT}+\sum^{M-1}_{i=1}\E\left[\tau_{i}-\nu_{i}\middle| \overline{F}_{i-1}D_{i-1}\right]\label{eqn:R_segment}\\
            &\hspace{+60pt}+T\sum^{M}_{i=1}\mathbb{P}\left(F_{i}\middle| \overline{F}_{i-1}D_{i-1}\right)+T\sum^{M-1}_{i=1}\mathbb{P}\left(\overline{D}_{i}\middle| \overline{F}_{i}\overline{F}_{i-1}D_{i-1}\right)
        \end{align}
    \end{subequations}
    where \eqref{eqn:R_segment} follows from the Cauchy–Schwarz inequality
    \begin{subequations}
        \begin{align}
            \left(\sum^{M}_{i=1}\sqrt{s_{i}}\right)^{2}\leq \left(\sum^{M}_{i=1}s_{i}\right)\left(\sum^{M}_{i=1}1\right)= M\sum^{M}_{i=1}s_{i}= MT,
        \end{align}
    \end{subequations}
    Finally, by lemma~\ref{applemma:rewrite_C}, we know that 
    \begin{subequations}
        \begin{align}
            \sum^{M}_{i}\tilde{C}_{i}&\leq \sum^{M}_{i=1}\left[8\sqrt{\frac{KT\log{T}}{M}}+\left(\frac{5}{2}+\frac{\pi^{2}}{3}+K\right)K+\sqrt{\frac{MK\log{T}}{T}}\cdot s_{i}\right]\\
            &= 8\sqrt{MKT\log{T}}+\left(\frac{5}{2}+\frac{\pi^{2}}{3}+K\right)MK+\sqrt{\frac{MK\log{T}}{T}}\cdot\sum^{M}_{i=1}s_{i}. \label{eqn:to choose Delta}\\
            &=8\sqrt{MKT\log{T}}+\left(\frac{5}{2}+\frac{\pi^{2}}{3}+K\right)MK+\sqrt{\frac{MK\log{T}}{T}}\cdot T.\\
            &=8\sqrt{MKT\log{T}}+\left(\frac{5}{2}+\frac{\pi^{2}}{3}+K\right)MK+\sqrt{MKT\log{T}}\\
            &=9\sqrt{MKT\log{T}}+\left(\frac{5}{2}+\frac{\pi^{2}}{3}+K\right)MK\label{eqn:sum_C}.
        \end{align}
    \end{subequations}
    Then, plugging equation~\ref{eqn:sum_C} into equation~\ref{eqn:R_segment} completes the proof of Theorem~\ref{appthm:std_regret_mucb}.
\end{proof}

\subsection{Proof of General Case}

To handle the temporal dependence induced by the Markovian dynamics, we leverage a concentration inequality for ergodic Markov chains.
Assumption~\ref{appassump:ergodic} ensures the existence of unique stationary distributions for each arm in each segment, allowing us to apply the following Hoeffding-type bound for Markovian rewards~\cite{Hoeffding_bound_Markovian_reward}, which plays a key role in bounding the probabilities of false alarms $\mathbb{P}\left(F_{i}\middle|\overline{F}_{i-1}D_{i-1}\right)$ and successful detections $\mathbb{P}\left(D_{i}\middle|\overline{F}_{i}\overline{F}_{i-1}D_{i-1}\right)$.

\begin{assumption}\label{appassump:ergodic}
All Markov chains $\mathcal{M}_k^{(i)} = (P_k^{(i)},R_k^{(i)})$ are ergodic. 
This ensures the existence of a unique steady-state distribution $d_k^{(i)}$ for 
each arm $k$ in segment $i$, satisfying 
$d_k^{(i)} = d_k^{(i)} P_k^{(i)}$ with $\sum_{s \in \mathcal{S}} d_k^{(i)}(s) = 1$. 
We then define the steady-state arm mean as 
$\bar{\mu}_k^{(i)} := d_k^{(i)} R_k^{(i)}$.
\end{assumption}

\begin{lemma}[Hoeffding Bound for Markovian Reward \cite{Hoeffding_bound_Markovian_reward}]
    \label{applem:Hoeffding bound for Markov Chain}
    Let $\mathcal{M}$ be an ergodic Markov chain with state space $\mathcal{S}$ and stationary distribution $d$. Let $L=T(\epsilon)$ be its $\varepsilon$-mixing time for $\varepsilon \leq 1/8$. Let $(V_1, \cdots, V_t)$ denote a $t$-step random walk on $\mathcal{M}$ starting from an initial distribution $\phi$. Let $f_\ell: \mathcal{S} \mapsto [0,1]$ be a weight function at $\ell$ time step such that expected weighted $\mathbb{E}_{v\leftarrow d}=\mu$ for all $\ell$. Define total weight walk $X\triangleq \sum^{t}_{\ell=1} f_\ell(V_\ell)$. There exists some constant $c$ (which is independent to $d, \delta, \epsilon$) and let $0\leq \delta\leq 1$ such that
    \begin{equation}
        \begin{aligned}
            \mathbb{P}[X\geq\left(1+\delta\right)\mu t]&\leq c {\lVert\phi\rVert}_{d} \exp{\left(-\delta^2\mu t/ \left(72L\right)\right)}\\
            \mathbb{P}[X\leq\left(1-\delta\right)\mu t]&\leq c {\lVert\phi\rVert}_{d} \exp{\left(-\delta^2\mu t/ \left(72L\right)\right)}.
        \end{aligned}
    \end{equation}
\end{lemma}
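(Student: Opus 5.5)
Since this is an off-the-shelf concentration result, the plan is to reproduce the argument of \cite{Hoeffding_bound_Markovian_reward}: a Chernoff-type bound for $X$ along the walk, obtained by bounding its moment generating function with operators acting on functions over $\mathcal{S}$, in the $d$-weighted inner product $\langle g,h\rangle_d=\sum_s d(s)g(s)h(s)$. I treat the two tails symmetrically (apply the argument to $1-f_\ell$ for the lower tail) and describe only the upper tail. For $r>0$, Markov's inequality gives
\[
\mathbb{P}[X\ge (1+\delta)\mu t]\le e^{-r(1+\delta)\mu t}\,\E_{\phi}\bigl[e^{rX}\bigr],
\]
so everything reduces to bounding $\E_\phi[e^{rX}]$.

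\textbf{Step 1 (change of measure to the stationary start).} Write $\E_\phi[g(V_1,\ldots,V_t)]=\E_d\bigl[\tfrac{\phi(V_1)}{d(V_1)}\,g\bigr]$. Cauchy--Schwarz, combined with a bound on the relevant operator norm in $L^2(d)$, then pulls out the factor ${\lVert\phi\rVert}_d=\bigl(\sum_s \phi(s)^2/d(s)\bigr)^{1/2}$. This factor is exactly the prefactor in the lemma, so from this point on we may work with the chain started from stationarity.

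\textbf{Step 2 (decoupling via the mixing time).} Split the indices into $L$ residue classes modulo $L$, and write $X=\sum_{j=0}^{L-1}X^{(j)}$ with $X^{(j)}=\sum_{\ell\equiv j \ (\mathrm{mod}\ L)} f_\ell(V_\ell)$. By convexity of the exponential (Jensen with weights $1/L$),
\[
e^{rX}\le \frac1L\sum_j e^{rLX^{(j)}} .
\]
Each $X^{(j)}$ is a sum along a walk of the $L$-step chain $P^L$, with about $t/L$ terms. Along each such walk, the moment generating function is a product of operators $D_\ell P^L$, where $D_\ell=\mathrm{diag}(e^{rLf_\ell})$. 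Decompose $P^L=\Pi+E$, where $\Pi g=\langle g,\mathbf 1\rangle_d\mathbf 1$ is the projection onto constants. The central claim to establish is that the $\varepsilon$-mixing property with $\varepsilon\le 1/8$ forces $\lVert E\rVert_{d}\le 1/2$ (or a similar absolute constant) on the subspace orthogonal to constants.

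\textbf{Step 3 (Chernoff recursion).} Track the component of the vector along $\mathbf 1$ and the component orthogonal to it through the operator product. On the $\mathbf 1$-component, the dominant term per step is
\[
\langle D_\ell\mathbf 1,\mathbf 1\rangle_d = \E_d\bigl[e^{rLf_\ell}\bigr]\le 1+(e^{rL}-1)\mu ,
\]
using $f_\ell\in[0,1]$ and $\E_d f_\ell=\mu$. The cross terms are controlled by the contraction $\lVert E\rVert_d\le 1/2$, which turns the accumulated error into a geometric series. This yields
\[
\E_d\bigl[e^{rLX^{(j)}}\bigr]\le c'\exp\!\bigl(C(e^{rL}-1)\mu t/L\bigr)
\]
for absolute constants $c'$ and $C$. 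Choosing $rL\asymp\delta$ (valid because $\delta\le1$) and combining with the Markov step gives the exponent $-\delta^2\mu t/(72L)$, with the absolute constant $c$ absorbing $c'$ and the Step 1 factor.

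\textbf{Main obstacle.} I expect the hardest part to be Step 2's claim that $\lVert P^L-\Pi\rVert_d$ is a constant below $1$. The difficulty is that the mixing time only controls total-variation distance, and the chain need not be reversible. My first attempt would be to bound the operator norm via $L^1\to L^1$ contraction (which is controlled by total variation, giving factor $2\varepsilon$) and $L^\infty\to L^\infty$ contraction, and then interpolate by Riesz--Thorin to obtain an $L^2(d)$ bound of order $\sqrt{2\varepsilon}\le 1/2$. Getting the specific constant $72$ then only requires bookkeeping of these constants.
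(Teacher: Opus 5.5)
The paper gives no proof of this lemma; it is quoted from the cited reference. Your outline (a Chernoff bound through an operator recursion in $L^2(d)$, splitting the walk into $L$-step subwalks, and a contraction estimate for $P^L-\Pi$) is the standard route to that result. As written, though, two steps do not give the stated bound.

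First, the lower tail does not follow from running the upper-tail argument on $g_\ell=1-f_\ell$. Since $g_\ell$ has mean $1-\mu$, the event $\{X\le(1-\delta)\mu t\}$ becomes $\{\sum_\ell g_\ell(V_\ell)\ge(1+\delta')(1-\mu)t\}$ with $\delta'=\delta\mu/(1-\mu)$. The upper-tail bound then yields the exponent $\delta'^2(1-\mu)t/(72L)=\delta^2\mu^2t/\bigl(72L(1-\mu)\bigr)$.
\begin{itemize}
\item For $\mu<1/2$ this falls short of $\delta^2\mu t/(72L)$ by the factor $\mu/(1-\mu)$, which is unbounded as $\mu\to0$.
\item For $\mu>1/2$ one can have $\delta'>1$, which is outside the range your argument handles.
\end{itemize}
Multiplicative bounds are not symmetric under $f\mapsto1-f$. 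You have to rerun the recursion with a negative exponent, using $e^{-sf}\le1-(1-e^{-s})f$, so that the parallel factor is at most $1-(1-e^{-s})\mu$.

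Second, an intermediate bound of the form $\E_d[e^{rLX^{(j)}}]\le c'\exp\bigl(C(e^{rL}-1)\mu t/L\bigr)$ with an absolute constant $C$ is useless unless $C=1$. With $s=rL$, the Markov step gives the exponent $\frac{\mu t}{L}\bigl[C(e^s-1)-s(1+\delta)\bigr]\ge\frac{\mu t}{L}(C-1-\delta)s$. This is positive for every $s>0$ once $\delta<C-1$, so no choice $s\asymp\delta$ produces a $\delta^2$ exponent.

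The point of the recursion is that the cross terms are second order:
\begin{itemize}
\item $\lVert e^{sf_\ell}-1\rVert_{L^2(d)}\le(e^s-1)\sqrt\mu$ controls both off-diagonal blocks.
\item The perpendicular block is at most $\lambda e^s$, where $\lambda=\lVert P^L-\Pi\rVert$ on mean-zero functions.
\end{itemize}
Hence the $2\times2$ recursion grows at rate at most $1+(e^s-1)\mu+\lambda(e^s-1)^2\mu/(1-\lambda e^s)$, provided $\lambda e^s<1$ (so $s<\ln2$ if $\lambda\le1/2$). Only this form lets $s\asymp\delta$ produce $-\Theta(\delta^2\mu t/L)$.

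A smaller point on your ``main obstacle'': Riesz--Thorin applied to $P^L-\Pi$ on the whole space gives only $2\sqrt{\varepsilon}\le1/\sqrt2$. The endpoint not controlled by the mixing of $P$ is governed by the mixing of the time reversal $P^*$, which can be slow, and that endpoint is only bounded by $2$. This is still below $1$, so the argument survives with different constants. To recover $\sqrt{2\varepsilon}\le1/2$ as you intended, use $\lVert P^L-\Pi\rVert^2=\lVert(P^*)^LP^L-\Pi\rVert$. The reversible chain $(P^*)^LP^L$ is within $\varepsilon$ of $d$ in total variation from every start. So both endpoint norms of this self-adjoint operator are at most $2\varepsilon$, and interpolation gives $\lVert(P^*)^LP^L-\Pi\rVert\le 2\varepsilon$.
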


Using Lemma~\ref{applem:Hoeffding bound for Markov Chain}, we now control the probability of false alarms within each stationary segment. 
The following lemma shows that, with appropriate choices of the detection window and threshold parameters, 
the probability of triggering a false alarm is negligible.

\begin{lemma}[Probability of false alarm]
    \label{applem:FA}
    Under Algorithm~\ref{alg:mucb_cd} with the parameter choices specified in Equations~\ref{eqn:w_gen} and~\ref{eqn:b_gen}, the probability of a false alarm in each stationary segment satisfies
    \begin{equation}    
        \mathbb{P}\!\left(F_i \,\middle|\, \overline{F}_{i-1} D_{i-1}\right)
        \;\leq\; \frac{\epsilon \lVert \varphi \rVert_{d} + 1}{T},
    \end{equation}
    where $\epsilon > 0$ is an arbitrarily small constant, $\varphi$ is the initial state distribution, and $\lVert \varphi \rVert_{d}$ is the corresponding $d$-norm.
\end{lemma}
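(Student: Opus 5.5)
I will follow the proof of Lemma~\ref{applemma:prob_fa} and change only the concentration step. The Markovian version of Lemma~\ref{applem:Hoeffding bound for Markov Chain} replaces McDiarmid's inequality.

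\textbf{Reduction to single-window alarm probabilities.} Conditioned on $\overline{F}_{i-1}D_{i-1}$, a false alarm needs some arm $k$ to collect at least $w$ samples in $(\tau_{i-1},\nu_i]$ and to trigger $|S_{k,t}|>b$ there. The union bound over arms, as in \eqref{eqn:false_al_p_f}, reduces the problem to bounding $\mathbb{P}(\tau_{k,i}\le\nu_i)$ for each $k$. I then split the candidate test times into $w$ residue classes $j$, with times $t=\tau_{i-1}+j+nw$. A union bound over these at most $T$ test times gives
\[
\mathbb{P}(F_i\mid\overline{F}_{i-1}D_{i-1})\le K T\,\max_t p_t,\qquad p_t=\mathbb{P}(|S_{k,t}|>b).
\]
This is cruder than the geometric argument, but it avoids needing independence between windows, which fails for Markov chains. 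It therefore suffices to show $p_t\le (\epsilon\|\varphi\|_d+1)/(KT^2)$.

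\textbf{Bounding one window.} I write $S_{k,t}=(X_2-\tfrac w2\bar\mu_k)-(X_1-\tfrac w2\bar\mu_k)$, where $X_1$ and $X_2$ are the sums of the first and second $w/2$ samples. Then $\{|S_{k,t}|>b\}$ is contained in $\{|X_1-\tfrac w2\bar\mu_k|>b/2\}\cup\{|X_2-\tfrac w2\bar\mu_k|>b/2\}$. Two error sources must be separated.

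First, the samples of arm $k$ form a subsequence of its chain observed at random (algorithm-dependent) times. I would argue that the observed states still form a walk on $\mathcal{M}_k^{(i)}$, because the chain moves regardless of pulls. Its step kernel is a power of $P_k^{(i)}$, which is still ergodic with stationary law $d_k^{(i)}$ and mixing time at most $L$. Lemma~\ref{applem:Hoeffding bound for Markov Chain} then applies with the time-varying weight functions $f_\ell$ permitted there.

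Second, I split $b=\sqrt{\tfrac w2\ln(2KT^2)}+\sqrt{144wL\ln(2KT^2)}$ into two pieces. The first piece absorbs the fluctuation of the realized rewards around the per-state means $\mu_{k,s}$; these are conditionally independent given the states, so ordinary Hoeffding gives probability at most $1/(KT^2)$. The second piece absorbs the fluctuation of $\sum_\ell\mu_{k,V_\ell}$ around $\tfrac w2\bar\mu_k$. For it I would convert the multiplicative Markov bound into an additive one by taking $\delta\mu t$ of order $\sqrt{144wL\ln(2KT^2)}$, using $\mu\le1$. This yields a tail of order $c\|\varphi\|_d\exp(-\ln(2KT^2)\cdot\mathrm{const})$. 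Raising the exponent constant as needed, this becomes $\epsilon\|\varphi\|_d/(KT^2)$ with $\epsilon$ arbitrarily small, which is how the $\epsilon\|\varphi\|_d$ term appears. Summing over $T$ windows and $K$ arms gives $(\epsilon\|\varphi\|_d+1)/T$, up to the factor $2$ from the two-sided bound, which the $2$ inside $\ln(2KT^2)$ absorbs.

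\textbf{Main obstacle.} The hardest step is matching the constants when converting Lemma~\ref{applem:Hoeffding bound for Markov Chain} (multiplicative, with $\delta\le1$ and exponent $\delta^2\mu t/(72L)$) into the additive deviation $\sqrt{144wL\ln(2KT^2)}$. When $\mu$ is small the multiplicative form is weak. I would first handle this by shifting $f_\ell\mapsto(1+f_\ell)/2$, which forces $\mu\ge1/2$ so that an additive deviation $x$ corresponds to $\delta\approx 2x/t$. I would then check that $144L$ is large enough for $t=w/2$. A secondary delicacy is that the initial distribution of each half-window depends on the past; I would bound it by the worst case $\|\varphi\|_d$, treating it as the paper's stated constant.
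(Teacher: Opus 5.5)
\textbf{Verdict: genuine gap in the per-window bound.} Your outer reduction is fine. The union bound over arms and over at most $T$ test times per arm reaches the same per-window target $p_t\le(\epsilon\lVert\varphi\rVert_d+1)/(KT^2)$ as the paper, whose last step $w\lfloor T/w\rfloor p\le Tp$ is exactly this union bound. It also avoids the paper's claim that disjoint windows of a Markov chain are independent.

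The problem is that the threshold in Equation~\ref{eqn:b_gen} has no slack. Replacing $\{|S_{k,t}|>b\}$ by two half-window deviations of size $b/2$ loses a constant factor in every exponent.
\begin{itemize}
\item \emph{Noise part.} If you handle the noise per half, as your decomposition suggests, Hoeffding at threshold $b_n/2$ on $w/2$ terms gives only $2\exp(-b_n^2/w)=2(2KT^2)^{-1/2}$.
\item \emph{State part.} Your shift $g=(1+f)/2$ needs a deviation $b_d/4$ of $\sum g$ over $t=w/2$ steps, with stationary mean $\mu_g=(1+\bar\mu_k)/2\in[\tfrac12,1]$. So $\delta=b_d/(2\mu_g w)$ and
\[
\frac{\delta^2\mu_g t}{72L}=\frac{b_d^2}{576\,\mu_g\,wL}=\frac{\ln(2KT^2)}{4\mu_g},
\]
which can be as small as $\tfrac14\ln(2KT^2)$.
\end{itemize}
The per-window bound is then only of order $(KT^2)^{-1/2}$ from the noise and up to $(KT^2)^{-1/4}$ from the states. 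After multiplying by the $KT$ tests it does not vanish, and it can grow like $\sqrt{T}$. So the check you planned, that $144L$ suffices for $t=w/2$, fails.

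You also cannot raise the exponent constant to obtain an arbitrarily small $\epsilon$, because $b$ is fixed by the statement. In the paper's proof, $\epsilon\lVert\varphi\rVert_d$ is simply the prefactor $c\lVert\varphi\rVert_d$ of the Markov Hoeffding lemma, written with $\epsilon$ in place of $c$.

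Your diagnosis of the multiplicative form is also backwards. Small $\mu$ does not weaken it: for a fixed additive deviation $x$, the exponent $x^2/(72L\mu t)$ improves as $\mu$ decreases, subject only to $\delta\le 1$. What costs you the factor is pushing $\mu_g$ toward $1$ and halving the deviation.

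\textbf{The missing idea: treat the whole window as one sum.} This is how the paper calibrates its constants.
\begin{itemize}
\item \emph{Noise.} Apply Hoeffding to the signed sum $\sum_{\ell>w/2}n_\ell-\sum_{\ell\le w/2}n_\ell$ of all $w$ terms at threshold $b_n=\sqrt{\tfrac w2\ln(2KT^2)}$. This gives exactly $2\exp(-2b_n^2/w)=1/(KT^2)$.
\item \emph{States.} Use the signed shift $f'_\ell=(1-f_{k,\ell})/2$ for $\ell\le w/2$ and $f'_\ell=(1+f_{k,\ell})/2$ for $\ell>w/2$. Then $\sum_{\ell=1}^{w}f'_\ell(V_{k,\ell})=\tfrac w2+\tfrac12 S^f$, where $S^f:=\sum_{\ell>w/2}f_{k,\ell}(V_{k,\ell})-\sum_{\ell\le w/2}f_{k,\ell}(V_{k,\ell})$.
\end{itemize}
Hence $|S^f|>b_d$ if and only if $|\sum_\ell f'_\ell-\tfrac w2|>b_d/2$. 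A single application of the lemma with $t=w$, $\mu=\tfrac12$ and $\delta=b_d/w\le\tfrac12$ gives exponent $b_d^2/(144wL)=\ln(2KT^2)$ for $b_d=\sqrt{144wL\ln(2KT^2)}$. This is exactly where the $144$ comes from. Strictly, the two halves of $f'$ have stationary means $(1\mp\bar\mu_k)/2$, whereas the lemma as stated asks for a common mean; the paper applies it with the averaged mean $\tfrac12$.

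\textbf{A secondary issue.} Your claim that the observed states of arm $k$ form a chain whose kernel is a fixed power of $P_k^{(i)}$ does not hold. The gaps between pulls are random and chosen adaptively by the base solver, so the sampled sequence is neither time-homogeneous nor free of selection effects. The paper does not prove this step either; it simply takes consecutive sampling as the worst case.
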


\begin{proof}
We formalize this analysis by defining an equivalent function $S_{k,t}$ and establishing the stopping conditions necessary for detecting such events. The function $S_{k,t}$ serves as a proxy for the behavior of Algorithm~\ref{alg:mucb_cd} and is defined as follows:
\begin{equation}
    S_{k,t}=\sum_{\ell=w/2+1}^{w}Z_{k,\ell}-\sum_{\ell=1}^{w/2}Z_{k,\ell}, \quad\text{for any}\quad k\in K, t\geq w,
\end{equation}
where $w$ is window size.  

To formalize the conditions for triggering a change detection, we define the stopping time: $\lvert S_{k,t}\rvert\geq b$, which is equalivent to the change alerts in Algorithm~\ref{alg:mucb_cd}. To facilitate analysis, we divide the stopping time events into $w$ subsequences indexed by $j$ with each subsequence corresponding to a distinct offset. The stopping time for the $j$-th subsequence is given by:
\begin{equation}
    \tau_{k, i}^{(j)}:=\inf \left\{t=\tau_{i-1}+j+n w, n \in \mathbb{Z}^{+}:\left|S_{k, t}\right|>b\right\},
\end{equation}
where $b$ is the detection threshold. By dividing the stopping time events into these $w$ subsequences, we ensure that the samples within each subsequences are independent over $w$-time steps. This independence is achieved because the samples are aggregated over disjoint time intervals defined by the periodic offset $j$. Such independence is crucial for applying concentration inequalities later in the analysis.
Clearly, $\tau_{k,i}=\min \left\{\tau_{k,i}^{(0)}, \ldots, \tau_{k,i}^{(w-1)}\right\}$. Let us define $\xi_{k,i}^{(j)}$
\begin{equation}
    \xi_{k,i}^{(j)}=\frac{\left(\tau_{k,i}^{(j)}-j-\tau_{k,i-1}\right)}{w}, \quad \text{for any}\quad 0 \leq j \leq w-1. 
\end{equation}

Note that condition on the events $D_{i-1}$ and $F_{i-1}, \xi_{k,i}^{(j)}$ is a geometric random variable with parameter $p:=\mathbb{P}\left(\left|S_{k, t}\right|>b\right)$, because when fixing $j$, there is no overlap between the samples in the current window and the next.
\begin{subequations}
    \begin{align}
        \mathbb{P}\left(\tau_{k,i}^{(j)}=\tau_{i-1}+n w+j \middle| \bar{F}_{i-1} D_{i-1},\sum_{t=\tau_i+1}^{\nu_i}\1_{A_t=k}\geq w\right)\nonumber
        = \mathbb{P}\left(\xi_{k,i}=n \middle| \bar{F}_{i-1} D_{i-1}, \sum_{t=\tau_i+1}^{\nu_i}\1_{A_t=k}\geq w\right)\nonumber=p(1-p)^{n-1}.
    \end{align}
\end{subequations}

Here, the inclusion of subsequent events as conditions should not impact the results, as when entering the change detection algorithm, those events have already occurred. Moreover, by union bound, we have that for any $i \in \mathbb{N}$,

\begin{subequations}
    \begin{align}
        \mathbb{P}\left(\tau_{k,i} \leq \nu_i \mid \bar{F}_{i-1} D_{i-1}\right) 
        & \leq w\left(1-(1-p)^{\left\lfloor\left(\nu_i-\tau_i-1\right) / w\right\rfloor}\right)\\
        & \leq w\left(1-(1-p)^{\lfloor T / w\rfloor}\right).
    \end{align}    
\end{subequations}

We further upper bound the probability \( p_k \) for each arm \( k \) by decomposing 
the reward into two parts:  
(i) a \emph{state-dependent mean component}, determined by the Markovian state of the arm; and  
(ii) an \emph{independent noise term} \( n_{t} \in [-1,1] \) with zero mean. 

The key idea is to separately bound the effects of the \emph{distributional fluctuations} 
induced by the Markov chain and the \emph{reward noise}.  
Accordingly, we split the detection threshold into two components, \( b = b_d + b_n \), 
where \( b_d \) handles deviations due to distributional shifts and \( b_n \) accounts for noise fluctuations. 

Let \( V_{k,t} \) denote the state of arm \( k \) at time \( t \), and let 
\( f_{k,t}: \mathcal{S} \mapsto [0,1] \) be the corresponding reward mean function.  
Then the probability of the detection statistic exceeding the threshold can be bounded as follows:
\begin{subequations}
    \begin{align}
        p
        &= \mathbb{P}\left[\left\lvert S_{k,\ell}\right\rvert \geq b \right]\label{eqn:p1}\\
        &= \mathbb{P}\left[\left\lvert 
        \sum_{\ell = w/2+1}^{w} \left(f_{k,\ell}\left(V_{k,\ell}\right) + n_{\ell}\right)
        - \sum_{\ell = 1}^{w/2} \left(f_{k,\ell}\left(V_{k,\ell}\right) + n_{\ell}\right)
        \right\rvert \geq b_d + b_n \right] \label{eqn:p2}\\
        &\leq 1 - 
        \mathbb{P}\left[\left\lvert 
        \sum_{\ell = w/2+1}^{w} f_{k,\ell}(V_{k,\ell}) 
        - \sum_{\ell = 1}^{w/2} f_{k,\ell}(V_{k,\ell})
        \right\rvert \leq b_d \right]
        \cdot
        \mathbb{P}\left[\left\lvert 
        \sum_{\ell = 1}^{w} n_{\ell} 
        \right\rvert \leq b_n \right]. \label{eqn:p}
    \end{align}
\end{subequations}

To apply the concentration inequality, we focus on the case that induces the largest deviation from the steady-state distribution. This occurs when the rewards within the detection window are collected consecutively in time, which leads to the strongest temporal dependence among samples. We then derive the concentration probability of this distributional deviation. 
To fit the requirement of Lemma~\ref{applem:Hoeffding bound for Markov Chain}, 
we shift the original reward function $f_\ell$ to a new function $f^{\prime}_{k,\ell}$ such that $f_{k,\ell}^{\prime}:\mathcal{S}\mapsto[0,1]$:
\begin{align}
        f^{\prime}_{k,\ell}=
        \begin{cases}
        \frac{1-f_{k,\ell}}{2} & 1\leq \ell \leq \frac{w}{2}, \\
        \frac{1+f_{k,\ell}}{2} & \frac{w}{2}+1\leq \ell \leq w. \label{eqn:f_map}
        \end{cases}
\end{align}

We first focus on the first probability term in Equation~\ref{eqn:p}.
By substituting Equation~\ref{eqn:f_map} into it, we obtain

\begin{subequations}
    \begin{align}
        &\mathbb{P}\left[\left\lvert 
        \sum_{\ell = w/2+1}^{w} f_{k,\ell}(V_{k,\ell}) 
        - \sum_{\ell = 1}^{w/2} f_{k,\ell}(V_{k,\ell})
        \right\rvert \leq b_d \right]\label{eqn:bound_f1} \\
        &= \mathbb{P}\left[\left\lvert 
        2\sum_{\ell = w/2+1}^{w} f^{\prime}_{k,\ell}(V_{k,\ell}) 
        + 2\sum_{\ell = 1}^{w/2} f^{\prime}_{k,\ell}(V_{k,\ell})
        - w
        \right\rvert \leq b_d \right]\label{eqn:bound_f2} \\
        &= \mathbb{P}\left[\left\lvert 
        \sum_{\ell = 1}^{w} f^{\prime}_{k,\ell}(V_{k,\ell}) 
        - \frac{w}{2}
        \right\rvert \leq \frac{b_d}{2} \right]\label{eqn:bound_f3} \\
        &= 1 - 
        \mathbb{P}\left[\sum_{\ell = 1}^{w} f^{\prime}_{k,\ell}(V_{k,\ell}) 
        - \frac{w}{2} > \frac{b_d}{2} \right]
        - 
        \mathbb{P}\left[\sum_{\ell = 1}^{w} f^{\prime}_{k,\ell}(V_{k,\ell}) 
        - \frac{w}{2} < -\frac{b_d}{2} \right]. \label{eqn:bound_f4}\\
        &= 1 - 
        \mathbb{P}\left[\sum_{\ell = 1}^{w} f^{\prime}_{k,\ell}(V_{k,\ell}) 
        > \frac{w}{2} + \frac{b_d}{2} \right]
        - 
        \mathbb{P}\left[\sum_{\ell = 1}^{w} f^{\prime}_{k,\ell}(V_{k,\ell}) 
        < \frac{w}{2} - \frac{b_d}{2} \right]. \label{eqn:bound_f5}
    \end{align}
\end{subequations}
Specifically, Equation~\ref{eqn:bound_f2} follows from the definition of $f^{\prime}_{k,\ell}$ in Equation~\ref{eqn:f_map}.
Next, by simple algebraic rearrangement, we obtain Equation~\ref{eqn:bound_f3}.
Equation~\ref{eqn:bound_f4} is derived by decomposing the absolute value event into two separate one-sided probability terms, and Equation~\ref{eqn:bound_f5} then follows from further rearranging these terms.

By applying Lemma~\ref{applem:Hoeffding bound for Markov Chain} to the two probability terms above, we obtain
\begin{subequations}
    \begin{align}
        &\mathbb{P}\left[\sum_{\ell = 1}^{w} f^{\prime}_{k,\ell}(V_{k,\ell}) 
        > \frac{w}{2} + \frac{b_d}{2} \right] \\
        &\leq \epsilon \lVert \varphi \rVert_{d} \exp\left(-\frac{b_d^2}{144 L_{k}}\right), \\
        &\mathbb{P}\left[\sum_{\ell = 1}^{w} f^{\prime}_{k,\ell}(V_{k,\ell}) 
        < \frac {w}{2} - \frac{b_d}{2} \right] \\
        &\leq \epsilon \lVert \varphi \rVert_{d} \exp\left(-\frac{b_d^2}{144 L_{k}}\right).
    \end{align}
\end{subequations}
Combining the above two inequalities, we have
\begin{equation}
    \mathbb{P}\left[\left\lvert 
    \sum_{\ell = w/2+1}^{w} f_{k,\ell}(V_{k,\ell}) 
    - \sum_{\ell = 1}^{w/2} f_{k,\ell}(V_{k,\ell})
    \right\rvert \leq b_d \right]
    \geq 1 - 2 \epsilon \lVert \varphi \rVert_{d} \exp\left(-\frac{b_d^2}{144 L_{k}}\right),
\end{equation}
where \( L_{k} \) is the mixing time of arm \( k \). 

If we set \( b_d = \sqrt{144 L \log\left(2KT^2\lVert \varphi \rVert_{d}^{-1}\right)} \), where \( L = \max_{k} L_{k} \), then we have
\begin{equation}
    \mathbb{P}\left[\left\lvert 
    \sum_{\ell = w/2+1}^{w} f_{k,\ell}(V_{k,\ell}) 
    - \sum_{\ell = 1}^{w/2} f_{k,\ell}(V_{k,\ell})
    \right\rvert \leq b_d \right]
    \geq 1 - \frac{\epsilon \lVert \varphi \rVert_{d}}{KT^2}.
\end{equation}
Next, we bound the second probability term in Equation~\ref{eqn:p}. Since the noise terms \( n_{\ell} \) are independent and bounded within \([-1,1]\), we can apply McDiarmid's inequality to obtain:
\begin{equation}
    \mathbb{P}\left[\left\lvert 
    \sum_{\ell = 1}^{w} n_{\ell} 
    \right\rvert \leq b_n \right]
    \geq 1 - \mathbb{P}\left[\left\lvert 
    \sum_{\ell = 1}^{w} n_{\ell} 
    \right\rvert > b_n \right]
    \geq 1 - \mathbb{P}\left[\sum_{\ell = 1}^{w} n_{\ell} > b_n \right] - \mathbb{P}\left[\sum_{\ell = 1}^{w} n_{\ell} < -b_n \right]
    \geq 1 - 2\exp\left(-\frac{2b_n^2}{w}\right).
\end{equation}

By setting $b_{n}=\sqrt{w\log\left(2KT^2\right)/2}$, we have
\begin{equation}
    \mathbb{P}\left[\left\lvert 
    \sum_{\ell = 1}^{w} n_{\ell} 
    \right\rvert \leq b_n \right]
    \geq 1 - \frac{1}{KT^2}.  \label{eqn:n_bound}
\end{equation}
Combining the bounds for both probability terms, we obtain
\begin{equation}
    p
    \leq 1 - \left(1 - \frac{\epsilon \lVert \varphi \rVert_{d}}{KT^2}\right) \left(1 - \frac{1}{KT^2}\right)
    \leq \frac{\epsilon \lVert \varphi \rVert_{d} + 1}{KT^2}.
\end{equation}
Finally, we can bound the probability of false alarm as follows:
\begin{subequations}
    \begin{align}
        \mathbb{P}\left(F_i \middle| \overline{F}_{i-1} D_{i-1}\right) &\leq \sum_{k=1}^{K} \mathbb{P}\left(\tau_{k,i} \leq \nu_i \mid \overline{F}_{i-1} D_{i-1}\right)\label{eqn:false_alarm_union_1}\\
        &\leq \sum_{k=1}^{K} w\left(1-\left(1-\frac{\epsilon \lVert \varphi \rVert_{d} + 1}{KT^2}\right)^{\lfloor T / w\rfloor}\right)\label{eqn:false_alarm_union_2}\\
        &\leq \sum_{k=1}^{K} w \cdot \frac{\epsilon \lVert \varphi \rVert_{d} + 1}{KT^2} \cdot \lfloor T / w\rfloor\label{eqn:false_alarm_union_3}\\
        &\leq \frac{\epsilon \lVert \varphi \rVert_{d} + 1}{T}.\label{eqn:false_alarm_union_4}
    \end{align}
\end{subequations}
Here, \eqref{eqn:false_alarm_union_1} follows from the union bound; \eqref{eqn:false_alarm_union_2} follows from the earlier derived bound on \( p \); \eqref{eqn:false_alarm_union_3} uses the inequality \( (1 - x)^n \geq 1 - nx \) for \( x \in [0,1] \) and \( n \geq 1 \); and \eqref{eqn:false_alarm_union_4} simplifies the expression to yield the final result.
\end{proof}

We next use the same concentration tool to establish that, with high probability, each true change is successfully detected within a sublinear delay. 

\begin{lemma}[Probability of successful detection]
    \label{applem:SD}
    Under Algorithm~\ref{alg:mucb_cd} with the parameter choices specified in \eqref{eqn:w_gen} and \eqref{eqn:b_gen}, and with the delay threshold
    \begin{equation}
        h_{i} = \left\lceil 
        w\left(\frac{K}{2\alpha}+1\right)\sqrt{s_i+1}
        +\frac{w^2}{4}\left(\frac{K}{2\alpha}+1\right)^2
        \right\rceil,
    \end{equation}
    for any $i \ge 1$, the probability of successful detection satisfies
    \begin{equation}
        \mathbb{P}\!\left(D_i \,\middle|\, \overline{F}_{i}\overline{F}_{i-1} D_{i-1}\right)
        \;\ge\; 1 - \frac{\epsilon \lVert \varphi \rVert_{d}}{T},
    \end{equation}
    where $\epsilon > 0$ is an arbitrarily small constant, $\varphi$ is the initial state distribution, and $d$ is the stationary distribution.
\end{lemma}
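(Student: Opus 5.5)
The plan mirrors the one-state proof of Lemma~\ref{applemma:prob_delay}, with the Markov-chain concentration of Lemma~\ref{applem:Hoeffding bound for Markov Chain} replacing McDiarmid's inequality for the state-driven part of the statistic.

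\textbf{Step 1 (sampling).} Conditioned on $\overline{F}_{i}\overline{F}_{i-1}D_{i-1}$, no reset happens in $(\tau_{i-1},\nu_i)$. Assumption~\ref{asm:seg_length} then guarantees that every arm has at least $w/2$ pre-change samples by $\nu_i$. Next apply Lemma~\ref{applemma:samples-time} with $n=w/2$ and $t_d\le s_i$. This gives that every arm, in particular the arm $\tilde k$ with $\delta^{(i)}_{\tilde k}\ge\delta_{\min}$, collects $w/2$ post-change samples within
\[
w\Bigl(\tfrac{K}{2\alpha}+1\Bigr)\sqrt{s_i+1}+\tfrac{w^2}{4}\Bigl(\tfrac{K}{2\alpha}+1\Bigr)^2\le h_i
\]
steps after $\nu_i$. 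Hence there is a time $t^\ast\le\nu_i+h_i$ whose window for arm $\tilde k$ has its first half entirely before $\nu_i$ and its second half entirely after. As in \eqref{eqn:alarm_pb3}, $\mathbb{P}(D_i\mid\cdot)\ge\mathbb{P}(|S_{\tilde k,t^\ast}|>b)$.

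\textbf{Step 2 (decomposition).} Write each reward as $f_{\tilde k,\ell}(V_{\tilde k,\ell})+n_\ell$, as in the proof of Lemma~\ref{applem:FA}. The means of the two halves differ by $\frac{w}{2}\delta^{(i)}_{\tilde k}$ in steady state. Thus $|S_{\tilde k,t^\ast}|\le b$ forces one of three events: the state part of the first half deviates from $\frac{w}{2}\bar\mu^{(i)}_{\tilde k}$ by more than $b'_d$; the state part of the second half deviates from $\frac{w}{2}\bar\mu^{(i+1)}_{\tilde k}$ by more than $b'_d$; or the noise sum exceeds $b'_n$ in absolute value. This holds whenever
\[
\tfrac{w}{2}\delta_{\min}\ge b+2b'_d+b'_n .
\]
Bound each half separately with Lemma~\ref{applem:Hoeffding bound for Markov Chain}, using the segment-$i$ chain before the change and the segment-$(i+1)$ chain after it, with the same affine $[0,1]$ shift \eqref{eqn:f_map}. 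Take $b'_d$ of order $\sqrt{144 L\ln(2T)}$ in the normalization used for $b_d$ there. Then each Markov term is at most $\epsilon\|\varphi\|_d/(4T)$. Bound the noise by McDiarmid's inequality with $b'_n$ of order $\sqrt{w\ln(2T)}$.

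\textbf{Step 3 (parameter check).} Verify that the choice \eqref{eqn:w_gen} makes $\frac{w}{2}\delta_{\min}\ge b+2b'_d+b'_n$ with $b$ from \eqref{eqn:b_gen}. This amounts to
\[
\sqrt{w}\,\delta_{\min}/2\ge \sqrt{2\ln(2KT^2)}+\sqrt{144L\ln(2KT^2)}+\sqrt{144L\ln(2T)},
\]
and this is exactly how \eqref{eqn:w_gen} is built. Dividing $b$ and the $b'$ terms by $\sqrt w$ is absorbed because $\sqrt{w}\ge$ each term. A union bound over the three events then gives failure probability at most $\epsilon\|\varphi\|_d/T$, where the noise contribution is absorbed by rescaling the arbitrary constant $\epsilon$, or by using $\ln(2T)$ slack.

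\textbf{Main obstacle.} Applying Lemma~\ref{applem:Hoeffding bound for Markov Chain} to the post-change half needs care. The chain of arm $\tilde k$ does not start in stationarity at $\nu_i$, and samples of one arm are taken at irregular times. The Markov part therefore has to be handled through the $\|\varphi\|_d$ factor, with $\varphi$ the distribution at the first post-change sample. The irregular sampling of the $k$-step chain should be reduced to the consecutive-sample worst case, as argued in Lemma~\ref{applem:FA}. Matching the Hoeffding lemma's multiplicative $(1\pm\delta)\mu t$ form to an additive deviation, with the factor $144L$, is the delicate constant-tracking step.
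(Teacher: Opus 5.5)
Your route is the paper's. You take a window for the changed arm $\tilde k$ with exactly $w/2$ samples on each side of $\nu_i$; your Step 1 makes explicit, via Lemma~\ref{applemma:samples-time} with $n=w/2$ and $t_d\le s_i$, what the paper's choice of $h_i$ encodes. You split $S_{\tilde k,t}$ into a Markov-state part and a noise part, apply Lemma~\ref{applem:Hoeffding bound for Markov Chain} to each half with its own stationary mean, and use Hoeffding for the noise. Your union bound over three failure events is cleaner than the paper's product of probabilities, which tacitly treats the two halves as independent.

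The genuine problem is the scale of $b'_d$. For a half-window of $w/2$ samples with stationary mean $\mu$, an additive deviation $a$ is a relative deviation $2a/(\mu w)$. The exponent in Lemma~\ref{applem:Hoeffding bound for Markov Chain} is therefore $a^2/(36\mu wL)$, and after the shift (\ref{eqn:f_map}) it is $a^2/(144\mu' wL)$ with $\mu'$ the shifted mean. So the lemma delivers $O(1/T)$ only with $a$ of order $\sqrt{wL\ln T}$. Nothing better than $\sqrt{w\ln T}$ is possible in general, since i.i.d.\ states already fluctuate at that scale. A threshold of order $\sqrt{144L\ln(2T)}$ that does not grow with $w$ therefore cannot give $\epsilon\lVert\varphi\rVert_d/(4T)$. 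The normalization you borrowed from the proof of Lemma~\ref{applem:FA} drops exactly this factor of $w$, as does the exponent in the paper's own proof of this lemma. By contrast, (\ref{eqn:b_gen}) itself contains $\sqrt{144\,wL\ln(2KT^2)}$.

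Consequently Step 3 fails as argued: the $b'$ terms are of the same order as $b$ and are not absorbed after dividing by $\sqrt w$. From (\ref{eqn:w_gen}) and (\ref{eqn:b_gen}) one has exactly
\[ \tfrac{w}{2}\delta_{\min}-b=\sqrt{\tfrac{w}{2}\ln(2KT^{2})}+\sqrt{144\,wL\ln(2T)}. \]
This leaves room for two things only:
\begin{itemize}
\item a noise threshold $b'_n=\sqrt{\tfrac w2\ln(2KT^2)}$, with one-sided failure $1/(2KT^2)$ by Hoeffding, since each noise term ranges over an interval of length one given the states;
\item a single $12\sqrt{wL\ln(2T)}$ covering the Markov fluctuations of both halves together.
\end{itemize}
Each half may therefore be charged only $b'_d=6\sqrt{wL\ln(2T)}$. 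This works if you apply to each half the one-sided tail dictated by the sign of the change, directly with its own mean $\bar\mu\le 1$: the exponent $a^2/(36\bar\mu wL)$ is then at least $\ln(2T)$. For small $\bar\mu$ the relative deviation exceeds $1$, which is outside the lemma as stated and needs a routine domination argument; (\ref{eqn:w_gen}) forces $w\ge 2304\,L\ln(2T)$, which leaves ample room for it. With the affine shift you propose, the lemma certifies only a larger per-half deviation (up to $12\sqrt{wL\ln(2T)}$). The check then overshoots the $L$-term of (\ref{eqn:w_gen}), and you would need a larger $w$.

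Separately, the reduction of irregular, possibly adaptive, sampling times to consecutive samples that you cite from Lemma~\ref{applem:FA} is only asserted there. Citing it does not discharge that obstacle.
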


\begin{proof}
\begin{subequations}
    \begin{align}
        \mathbb{P}\left(D_i \middle| \overline{F}_{i}\overline{F}_{i-1} D_{i-1}\right) &= 1 - \mathbb{P}\left(\tau_i > \nu_i + h_i \middle| \overline{F}_{i}\overline{F}_{i-1} D_{i-1}\right)\label{eqn:sd_1}\\
        &\hspace{-40pt} \geq \max_{t\in\left\{\nu_{i}+1,\ldots,\nu_{i}+h_{i}\right\}} \mathbb{P}\left(\exists k\in\gK, \left|S_{k,t}\right| > b \middle| \overline{F}_{i}\overline{F}_{i-1} D_{i-1}\right)\label{eqn:sd_2}\\
        &\hspace{-40pt} \geq \max_{j\in\left\{0,\ldots,w/2\right\}} \mathbb{P}\left(\left\lvert\sum^{w}_{\ell=j+1}\left(f_{k,\ell}\left(V_{k,\ell}\right)+n_{\ell}\right)-\sum^{j}_{\ell=1}\left(f_{k,\ell}\left(V_{k,\ell}\right)+n_{\ell}\right)\right\rvert > b \middle| \overline{F}_{i}\overline{F}_{i-1} D_{i-1}\right)\label{eqn:sd_3}\\
        &\hspace{-40pt} \geq \mathbb{P}\left(\left\lvert\sum^{w}_{\ell=w/2+1}\left(f_{k,\ell}\left(V_{k,\ell}\right)+n_{\ell}\right)-\sum^{w/2}_{\ell=1}\left(f_{k,\ell}\left(V_{k,\ell}\right)+n_{\ell}\right)\right\rvert > b \middle| \overline{F}_{i}\overline{F}_{i-1} D_{i-1}\right)\label{eqn:sd_4}
    \end{align}
\end{subequations}
Here, \eqref{eqn:sd_1} follows from the definition of successful detection; 
\eqref{eqn:sd_2} follows from the definition of the stopping time; 
\eqref{eqn:sd_3} mirrors the derivation from \eqref{eqn:p1} to \eqref{eqn:p2}; 
and \eqref{eqn:sd_4} follows from noting that the maximum is attained at $j = w/2$ 
and decomposing the absolute value event into two separate probability terms, 
one capturing the distributional deviation and the other the noise fluctuation.

First, we bound the first probability term in \eqref{eqn:sd_4}.
\begin{subequations}
    \begin{align}
        &\mathbb{P}\left(\left\lvert\sum^{w}_{\ell=w/2+1}f_{k,\ell}\left(V_{k,\ell}\right)-\sum^{w/2}_{\ell=1}f_{k,\ell}\left(V_{k,\ell}\right)\right\rvert > b + b_d \middle| \overline{F}_{i}\overline{F}_{i-1} D_{i-1}\right)\label{eqn:sd_bound_f1}\\
        &= \mathbb{P}\left(\sum^{w}_{\ell=w/2+1}f_{k,\ell}\left(V_{k,\ell}\right)-\sum^{w/2}_{\ell=1}f_{k,\ell}\left(V_{k,\ell}\right) > b + b_d \middle| \overline{F}_{i}\overline{F}_{i-1} D_{i-1}\right)\label{eqn:sd_bound_f2}\\
        &\hspace{20pt} + \mathbb{P}\left(\sum^{w/2}_{\ell=1}f_{k,\ell}\left(V_{k,\ell}\right)-\sum^{w}_{\ell=w/2+1}f_{k,\ell}\left(V_{k,\ell}\right) > b + b_d \middle| \overline{F}_{i}\overline{F}_{i-1} D_{i-1}\right)\label{eqn:sd_bound_f3}\\
        &\geq \mathbb{P}\left(\sum^{w}_{\ell=w/2+1}\left(f_{k,\ell}\left(V_{k,\ell}\right)-\bar{\mu}_{k}^{(i)}\right)-\sum^{w/2}_{\ell=1}\left(f_{k,\ell}\left(V_{k,\ell}\right)-\bar{\mu}_{k}^{(i-1)}\right) > b + b_d - \frac{w}{2}\left(\bar{\mu}^{(i)}_{k}-\bar{\mu}^{(i-1)}_{k}\right) \middle| \overline{F}_{i}\overline{F}_{i-1} D_{i-1}\right)\label{eqn:sd_bound_f4}\\
        &\hspace{20pt} + \mathbb{P}\left(\sum^{w/2}_{\ell=1}\left(f_{k,\ell}\left(V_{k,\ell}\right)-\bar{\mu}_{k}^{(i-1)}\right)-\sum^{w}_{\ell=w/2+1}\left(f_{k,\ell}\left(V_{k,\ell}\right)-\bar{\mu}_{k}^{(i)}\right) > b + b_d - \frac{w}{2}\left(\bar{\mu}^{(i-1)}_{k}-\bar{\mu}^{(i)}_{k}\right) \middle| \overline{F}_{i}\overline{F}_{i-1} D_{i-1}\right)\label{eqn:sd_bound_f5}\\
        &\geq \mathbb{P}\left(\bar{\mu}^{(i)}_{k}>\bar{\mu}^{(i-1)}_{k}\right)\mathbb{P}\left(\sum^{w/2}_{\ell=1}\left(f_{k,\ell}\left(V_{k,\ell}\right)-\bar{\mu}_{k}^{(i-1)}\right)< \frac{\left(\bar{\mu}^{(i)}_{k}-\bar{\mu}^{(i-1)}_{k}\right)w}{4}-\frac{\left(b+b_{n}\right)}{2} \middle| \overline{F}_{i}\overline{F}_{i-1} D_{i-1}\right)\label{eqn:sd_bound_f6}\\
        &\hspace{20pt} \cdot \mathbb{P}\left(\sum^{w}_{\ell=w/2+1}\left(f_{k,\ell}\left(V_{k,\ell}\right)-\bar{\mu}_{k}^{(i)}\right) > \frac{\left(b+b_{n}\right)}{2} - \frac{\left(\bar{\mu}^{(i)}_{k}-\bar{\mu}^{(i-1)}_{k}\right)w}{4} \middle| \overline{F}_{i}\overline{F}_{i-1} D_{i-1}\right)\label{eqn:sd_bound_f7}\\
        &\hspace{20pt} + \mathbb{P}\left(\bar{\mu}^{(i)}_{k}<\bar{\mu}^{(i-1)}_{k}\right)\mathbb{P}\left(\sum^{w}_{\ell=w/2+1}\left(f_{k,\ell}\left(V_{k,\ell}\right)-\bar{\mu}_{k}^{(i)}\right)< -\frac{\left(\bar{\mu}^{(i-1)}_{k}-\bar{\mu}^{(i)}_{k}\right)w}{4}-\frac{\left(b+b_{n}\right)}{2} \middle| \overline{F}_{i}\overline{F}_{i-1} D_{i-1}\right)\label{eqn:sd_bound_f8}\\
        &\hspace{20pt} \cdot \mathbb{P}\left(\sum^{w/2}_{\ell=1}\left(f_{k,\ell}\left(V_{k,\ell}\right)-\bar{\mu}_{k}^{(i-1)}\right) > \frac{\left(b+b_{n}\right)}{2} + \frac{\left(\bar{\mu}^{(i-1)}_{k}-\bar{\mu}^{(i)}_{k}\right)w}{4} \middle| \overline{F}_{i}\overline{F}_{i-1} D_{i-1}\right)\label{eqn:sd_bound_f9}\\
        &\geq \mathbb{P}\left(\bar{\mu}^{(i)}_{k}>\bar{\mu}^{(i-1)}_{k}\right)\mathbb{P}\left(\sum^{w/2}_{\ell=1}\left(f_{k,\ell}\left(V_{k,\ell}\right)-\bar{\mu}_{k}^{(i-1)}\right)< \frac{\delta w}{4}-\frac{\left(b+b_{n}\right)}{2} \middle| \overline{F}_{i}\overline{F}_{i-1} D_{i-1}\right)\label{eqn:sd_bound_f10}\\
        &\hspace{20pt} \cdot \mathbb{P}\left(\sum^{w}_{\ell=w/2+1}\left(f_{k,\ell}\left(V_{k,\ell}\right)-\bar{\mu}_{k}^{(i)}\right) > \frac{\left(b+b_{n}\right)}{2} - \frac{\delta w}{4} \middle| \overline{F}_{i}\overline{F}_{i-1} D_{i-1}\right)\label{eqn:sd_bound_f11}\\
        &\hspace{20pt} + \mathbb{P}\left(\bar{\mu}^{(i)}_{k}<\bar{\mu}^{(i-1)}_{k}\right)\mathbb{P}\left(\sum^{w}_{\ell=w/2+1}\left(f_{k,\ell}\left(V_{k,\ell}\right)-\bar{\mu}_{k}^{(i)}\right)< -\frac{\delta w}{4}-\frac{\left(b+b_{n}\right)}{2} \middle| \overline{F}_{i}\overline{F}_{i-1} D_{i-1}\right)\label{eqn:sd_bound_f12}\\
        &\hspace{20pt} \cdot \mathbb{P}\left(\sum^{w/2}_{\ell=1}\left(f_{k,\ell}\left(V_{k,\ell}\right)-\bar{\mu}_{k}^{(i-1)}\right) > \frac{\left(b+b_{n}\right)}{2} + \frac{\delta w}{4} \middle| \overline{F}_{i}\overline{F}_{i-1} D_{i-1}\right)\label{eqn:sd_bound_f13}
    \end{align}
\end{subequations}

Here, \eqref{eqn:sd_bound_f2} and \eqref{eqn:sd_bound_f3} follow from decomposing the absolute value event into two separate one-sided probability terms;
\eqref{eqn:sd_bound_f4} and \eqref{eqn:sd_bound_f5} follow from rearranging the terms and adding and subtracting the respective means;
\eqref{eqn:sd_bound_f6} and \eqref{eqn:sd_bound_f8} follow from conditioning on whether the mean reward has increased or decreased;
\eqref{eqn:sd_bound_f7} and \eqref{eqn:sd_bound_f9} follow from rearranging the terms;
and \eqref{eqn:sd_bound_f10} and \eqref{eqn:sd_bound_f12} follow from the definition of the minimum change size $\delta$.

Let $\zeta_{k}^{(i)}=\frac{\delta/2-\left(b+b_n\right)/w}{\bar{\mu}^{(i-1)}_{k}}$ and $\zeta_{k}^{(i-1)}=\frac{\delta/2-\left(b+b_n\right)/w}{\bar{\mu}^{(i-1)}_{k}}$. Then, \eqref{eqn:sd_bound_f10}--\eqref{eqn:sd_bound_f13} become
\begin{subequations}
    \begin{align} 
        &\mathbb{P}\left(\left\lvert\sum^{w}_{\ell=w/2+1}f_{k,\ell}\left(V_{k,\ell}\right)-\sum^{w/2}_{\ell=1}f_{k,\ell}\left(V_{k,\ell}\right)\right\rvert > b + b_d \middle| \overline{F}_{i}\overline{F}_{i-1} D_{i-1}\right)\label{eqn:sd_bound_f14}\\
        &\geq \mathbb{P}\left(\bar{\mu}^{(i)}_{k}>\bar{\mu}^{(i-1)}_{k}\right)\mathbb{P}\left(\sum^{w/2}_{\ell=1}\left(f_{k,\ell}\left(V_{k,\ell}\right)-\bar{\mu}_{k}^{(i-1)}\right)< \frac{1}{2}\left(\frac{\delta/2-\left(b+b_n\right)/w}{\bar{\mu}^{(i-1)}_{k}}\right)\bar{\mu}^{(i-1)}_{k}w \middle| \overline{F}_{i}\overline{F}_{i-1} D_{i-1}\right)\label{eqn:sd_bound_f15}\\
        &\hspace{20pt} \cdot \mathbb{P}\left(\sum^{w}_{\ell=w/2+1}\left(f_{k,\ell}\left(V_{k,\ell}\right)-\bar{\mu}_{k}^{(i)}\right) > -\frac{1}{2}\left(\frac{\delta/2-\left(b+b_n\right)/w}{\bar{\mu}^{(i)}_{k}}\right)\bar{\mu}^{(i)}_{k}w \middle| \overline{F}_{i}\overline{F}_{i-1} D_{i-1}\right)\label{eqn:sd_bound_f16}\\
        &\hspace{20pt} + \mathbb{P}\left(\bar{\mu}^{(i)}_{k}<\bar{\mu}^{(i-1)}_{k}\right)\mathbb{P}\left(\sum^{w}_{\ell=w/2+1}\left(f_{k,\ell}\left(V_{k,\ell}\right)-\bar{\mu}_{k}^{(i)}\right)< \frac{1}{2}\left(\frac{\delta/2-\left(b+b_n\right)/w}{\bar{\mu}^{(i)}_{k}}\right)\bar{\mu}^{(i)}_{k}w \middle| \overline{F}_{i}\overline{F}_{i-1} D_{i-1}\right)\label{eqn:sd_bound_f17}\\
        &\hspace{20pt} \cdot \mathbb{P}\left(\sum^{w/2}_{\ell=1}\left(f_{k,\ell}\left(V_{k,\ell}\right)-\bar{\mu}_{k}^{(i-1)}\right) > -\frac{1}{2}\left(\frac{\delta/2-\left(b+b_n\right)/w}{\bar{\mu}^{(i-1)}_{k}}\right)\bar{\mu}^{(i-1)}_{k}w \middle| \overline{F}_{i}\overline{F}_{i-1} D_{i-1}\right)\label{eqn:sd_bound_f18}\\
        &\geq \mathbb{P}\left(\bar{\mu}^{(i)}_{k}>\bar{\mu}^{(i-1)}_{k}\right)\mathbb{P}\left(\sum^{w/2}_{\ell=1}\left(f_{k,\ell}\left(V_{k,\ell}\right)-\bar{\mu}_{k}^{(i-1)}\right)< \frac{\zeta_{k}^{(i-1)}\bar{\mu}^{(i-1)}_{k}w}{2} \middle| \overline{F}_{i}\overline{F}_{i-1} D_{i-1}\right)\label{eqn:sd_bound_f16}\\
        &\hspace{20pt} \cdot \mathbb{P}\left(\sum^{w}_{\ell=w/2+1}\left(f_{k,\ell}\left(V_{k,\ell}\right)-\bar{\mu}_{k}^{(i)}\right) > -\frac{\zeta_{k}^{(i)}\bar{\mu}^{(i)}_{k}w}{2} \middle| \overline{F}_{i}\overline{F}_{i-1} D_{i-1}\right)\label{eqn:sd_bound_f17}\\
        &\hspace{20pt} + \mathbb{P}\left(\bar{\mu}^{(i)}_{k}<\bar{\mu}^{(i-1)}_{k}\right)\mathbb{P}\left(\sum^{w}_{\ell=w/2+1}\left(f_{k,\ell}\left(V_{k,\ell}\right)-\bar{\mu}_{k}^{(i)}\right)< \frac{\zeta_{k}^{(i)}\bar{\mu}^{(i)}_{k}w}{2} \middle| \overline{F}_{i}\overline{F}_{i-1} D_{i-1}\right)\label{eqn:sd_bound_f18}\\
        &\hspace{20pt} \cdot \mathbb{P}\left(\sum^{w/2}_{\ell=1}\left(f_{k,\ell}\left(V_{k,\ell}\right)-\bar{\mu}_{k}^{(i-1)}\right) > -\frac{\zeta_{k}^{(i-1)}\bar{\mu}^{(i-1)}_{k}w}{2} \middle| \overline{F}_{i}\overline{F}_{i-1} D_{i-1}\right)\label{eqn:sd_bound_f19}
    \end{align}
\end{subequations}
Here, \eqref{eqn:sd_bound_f14}--\eqref{eqn:sd_bound_f17} follow from substituting the definitions of $\zeta_{k}^{(i)}$ and $\zeta_{k}^{(i-1)}$; and \eqref{eqn:sd_bound_f18} and \eqref{eqn:sd_bound_f19} follow from rearranging the terms.
Before applying Lemma~\ref{applem:Hoeffding bound for Markov Chain}, we rewrite the above as
\begin{subequations}
    \begin{align}
        &\mathbb{P}\left(\left\lvert\sum^{w}_{\ell=w/2+1}f_{k,\ell}\left(V_{k,\ell}\right)-\sum^{w/2}_{\ell=1}f_{k,\ell}\left(V_{k,\ell}\right)\right\rvert > b + b_d \middle| \overline{F}_{i}\overline{F}_{i-1} D_{i-1}\right)\label{eqn:sd_bound_f20}\\
        &\geq \mathbb{P}\left(\bar{\mu}^{(i)}_{k}>\bar{\mu}^{(i-1)}_{k}\right)\left(1-\mathbb{P}\left(\sum^{w/2}_{\ell=1}\left(f_{k,\ell}\left(V_{k,\ell}\right)-\bar{\mu}_{k}^{(i-1)}\right)\geq \frac{\left(1+\zeta_{k}^{(i-1)}\right)\bar{\mu}^{(i-1)}_{k}w}{2} \middle| \overline{F}_{i}\overline{F}_{i-1} D_{i-1}\right)\right)\label{eqn:sd_bound_f21}\\
        &\hspace{20pt} \cdot \left(1-\mathbb{P}\left(\sum^{w}_{\ell=w/2+1}\left(f_{k,\ell}\left(V_{k,\ell}\right)-\bar{\mu}_{k}^{(i)}\right)\leq \frac{\left(1-\zeta_{k}^{(i)}\right)\bar{\mu}^{(i)}_{k}w}{2} \middle| \overline{F}_{i}\overline{F}_{i-1} D_{i-1}\right)\right)\label{eqn:sd_bound_f22}\\
        &\hspace{20pt} + \mathbb{P}\left(\bar{\mu}^{(i)}_{k}<\bar{\mu}^{(i-1)}_{k}\right)\left(1-\mathbb{P}\left(\sum^{w}_{\ell=w/2+1}\left(f_{k,\ell}\left(V_{k,\ell}\right)-\bar{\mu}_{k}^{(i)}\right)\geq \frac{\left(1+\zeta_{k}^{(i)}\right)\bar{\mu}^{(i)}_{k}w}{2} \middle| \overline{F}_{i}\overline{F}_{i-1} D_{i-1}\right)\right)\label{eqn:sd_bound_f23}\\
        &\hspace{20pt} \cdot \left(1-\mathbb{P}\left(\sum^{w/2}_{\ell=1}\left(f_{k,\ell}\left(V_{k,\ell}\right)-\bar{\mu}_{k}^{(i-1)}\right)\leq \frac{\left(1-\zeta_{k}^{(i-1)}\right)\bar{\mu}^{(i-1)}_{k}w}{2} \middle| \overline{F}_{i}\overline{F}_{i-1} D_{i-1}\right)\right)\label{eqn:sd_bound_f24}
    \end{align}
\end{subequations}

Here, \eqref{eqn:sd_bound_f20}--\eqref{eqn:sd_bound_f23} follow from rearranging the terms.
Now, we can apply Lemma~\ref{applem:Hoeffding bound for Markov Chain} to bound the probability terms in \eqref{eqn:sd_bound_f21}--\eqref{eqn:sd_bound_f24}. For example, for the term in \eqref{eqn:sd_bound_f21}, we have
\begin{subequations}
    \begin{align}
        &\mathbb{P}\left(\left\lvert\sum^{w}_{\ell=w/2+1}f_{k,\ell}\left(V_{k,\ell}\right)-\sum^{w/2}_{\ell=1}f_{k,\ell}\left(V_{k,\ell}\right)\right\rvert > b + b_d \middle| \overline{F}_{i}\overline{F}_{i-1} D_{i-1}\right)\label{eqn:sd_bound_f25}\\
        &\geq \mathbb{P}\left(\bar{\mu}^{(i)}_{k}>\bar{\mu}^{(i-1)}_{k}\right)\left(1-\epsilon \lVert \varphi \rVert_{d}\exp\left(-\frac{\left(c/2-b_{n}/w\right)^{2}}{144L}\right)\right)^2\label{eqn:sd_bound_f26}\\
        &\hspace{20pt} + \mathbb{P}\left(\bar{\mu}^{(i)}_{k}<\bar{\mu}^{(i-1)}_{k}\right)\left(1-\epsilon \lVert \varphi \rVert_{d}\exp\left(-\frac{\left(c/2-b_{n}/w\right)^{2}}{144L}\right)\right)^2\label{eqn:sd_bound_f27}\\
        &= \left(1-\epsilon \lVert \varphi \rVert_{d}\exp\left(-\frac{\left(c/2-b_{n}/w\right)^{2}}{144L}\right)\right)^2\label{eqn:sd_bound_f28}\\
    \end{align}
\end{subequations}
Finally, to satisfy the inequality $\delta^{(i)}_{k} \ge 2b/w + c$, we set $w$ and $b$ as in \eqref{eqn:w_gen} and \eqref{eqn:b_gen}, and choose $c = 2\left(\sqrt{144L\ln T/w} + \sqrt{\ln(2KT^2)/2w}\right)$, which leads to 
\begin{equation}
    \mathbb{P}\left(\left\lvert\sum^{w}_{\ell=w/2+1}f_{k,\ell}\left(V_{k,\ell}\right)-\sum^{w/2}_{\ell=1}f_{k,\ell}\left(V_{k,\ell}\right)\right\rvert > b + b_d \middle| \overline{F}_{i}\overline{F}_{i-1} D_{i-1}\right)\geq 1 - \frac{\epsilon \lVert \varphi \rVert_{d}}{T}.\label{eqn:sd_final}
\end{equation}
The second probability term in \eqref{eqn:sd_4} can be bounded similarly as in \eqref{eqn:n_bound}, the successful detection probability is then obtained by substituting \eqref{eqn:sd_final} and \eqref{eqn:n_bound} back into \eqref{eqn:sd_4}.
\begin{equation}
    \mathbb{P}\left(D_i \middle| \overline{F}_{i}\overline{F}_{i-1} D_{i-1}\right) \geq 1 - \frac{\epsilon \lVert \varphi \rVert_{d}}{T}.\label{eqn:sd_final_2}
\end{equation}

\end{proof}


Lemma~\ref{applemma:gen_exp_delay} further bounds the expected detection delay in the situation where the change detection algorithm successfully detects the change within the desired interval.

\begin{lemma}[Expected detection delay]\label{applemma:gen_exp_delay} 
Consider a piecewise-stationary restless bandit environment. For any $\boldsymbol{\bar{\mu}}^{(i)},\boldsymbol{\bar{\mu}}^{(i+1)}\in\left[0,1\right]^{K}$ with parameters chosen in \eqref{eqn:w} and \eqref{eqn:b} 
and
\begin{equation}
h_{i} = \left\lceil w\left(\frac{K}{2\alpha}+1\right)\sqrt{s_{i}+1}+\frac{w^{2}}{4}\left(\frac{K}{2\alpha}+1\right)^{2} \right\rceil,
\end{equation}
for some $k\in\gK, i\geq 1$ and $c>0$, under the \Algref{alg:main_alg}, we have
\begin{equation}
    \E\left[\tau_{i}-\nu_{i}\middle| \overline{F}_{i}D_{i}\overline{F}_{i-1}D_{i-1}\right]\leq h_{i}.
\end{equation}
\end{lemma}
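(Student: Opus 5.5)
The plan mirrors the one-state argument of Lemma~\ref{applemma:exp_delay}: the bound should follow from the definition of the conditioning event alone, without using the Markovian structure. The key observation is that the conditioning set contains $D_i$, and $D_i=\{\nu_i\le\tau_i\le\nu_i+h_i\}$. On this event the nonnegative random variable $\tau_i-\nu_i$ takes values in $\{0,1,\ldots,h_i\}$ almost surely. So the task reduces to bounding the conditional expectation of a variable with bounded support.

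Concretely, I would write the tail-sum formula for a nonnegative integer-valued variable under the conditional measure:
\begin{equation*}
\E\left[\tau_{i}-\nu_{i}\middle| \overline{F}_{i}D_{i}\overline{F}_{i-1}D_{i-1}\right]=\sum_{j\ge 1}\mathbb{P}\left(\tau_i-\nu_i\ge j\middle|\overline{F}_{i}D_{i}\overline{F}_{i-1}D_{i-1}\right).
\end{equation*}
All terms with $j>h_i$ vanish on $D_i$, and each remaining term is at most one. This gives the bound $h_i$.

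The only point needing care is that the conditioning event has positive probability, so that the conditional expectation is well defined. For this I would invoke Lemma~\ref{applem:SD}, which gives $\mathbb{P}(D_i\mid\overline F_i\overline F_{i-1}D_{i-1})\ge 1-\epsilon\|\varphi\|_d/T>0$, together with Lemma~\ref{applem:FA}, which gives $\mathbb{P}(\overline F_i\mid\overline F_{i-1}D_{i-1})>0$. Chaining these inductively over $i$ shows the conditioning event is nondegenerate.

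I would also note that $h_i$ is exactly the time needed for each arm to collect $w/2$ post-change samples under diminishing exploration. This follows from \eqref{eqn:Ttd} of Lemma~\ref{applemma:samples-time} with $n=w/2$ and $t_d\le s_i$. The fact is needed only to justify the choice of $h_i$ in Lemma~\ref{applem:SD}, not for the present inequality. The only change from the one-state case is that $w$ now carries the mixing-time factor $L$ through \eqref{eqn:w_gen}, and this does not enter the argument.

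I expect no real obstacle. The one subtlety is the nondegeneracy of the conditioning, which is handled by the two lemmas above.
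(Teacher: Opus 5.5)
Your proposal is correct and is essentially the paper's proof. The paper likewise writes the conditional expectation as the tail sum $\sum_{j=1}^{h_i}\mathbb{P}(\tau_i\ge\nu_i+j\mid\overline{F}_iD_i\overline{F}_{i-1}D_{i-1})$, truncated at $h_i$ because conditioning on $D_i$ confines $\tau_i-\nu_i$ to $\{0,\ldots,h_i\}$, and bounds each term by one. Your remarks on the nondegeneracy of the conditioning event and on the origin of $h_i$ are not needed for the inequality; also, like the paper, your argument uses $D_i=\{\nu_i\le\tau_i\le\nu_i+h_i\}$, which the paper's definitions give only for $1\le i\le M-2$, since $D_{M-1}$ is defined separately as $\{\tau_{M-1}\le T\}$.
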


\begin{proof}
    For any $1\leq i\leq M$, we have
    \begin{subequations}
        \begin{align}
            \E\left[\tau_{i}-\nu_{i}\middle| \overline{F}_{i}D_{i}\overline{F}_{i-1}D_{i-1}\right]  
            &=\sum^{h_{i}}_{j=1}\mathbb{P}\left(\tau_{i}\geq\nu_{i}+j\middle| \overline{F}_{i}D_{i}\overline{F}_{i-1}D_{i-1}\right)
            \leq h_{i}.
        \end{align}
    \end{subequations}
\end{proof}

Plugging the bounds in Lemmas ~\ref{applem:FA},~\ref{applem:SD} and~\ref{applemma:exp_delay} into Theorem~\ref{appthm:general_extra} shows the following regret bound in Corollary~\ref{appcor:gen_mucb}.

\begin{corollary}\label{appcor:gen_mucb}
Combining Algorithm~\ref{alg:main_alg} (diminishing exploration with resets) 
with the base solver $\mathcal{B}$ and Algorithm~\ref{alg:mucb_cd} with parameters $(w,b)$ given in Equations~\ref{eqn:w} and~\ref{eqn:b}, 
the excess regret is upper-bounded as
\begin{equation}
     \Rex(T)\leq 2\alpha\sqrt{MT}+w\left(\frac{K}{2\alpha}+1\right)\sqrt{M\left(T+M\right)}+\frac{w^{2}M}{4}\left(\frac{K}{2\alpha}+1\right)^{2}+2\epsilon \lVert \varphi \rVert _{d}+1.
\end{equation}\label{eqn:regret_bound_MUCB_p}By setting $\alpha = c\sqrt{KL\log{\left(KT\right)}}$ for some constant $c$, the expected regret is upper-bounded by $\mathcal{O}(\sqrt{KLMT\log{T}})$.
\end{corollary}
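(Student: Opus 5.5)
The plan is to plug the three general-case lemmas into Theorem~\ref{appthm:general_extra} term by term, sum the results, and then choose $\alpha$ to balance the exploration and delay terms.

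\textbf{Term (a).} This is $2\alpha\sqrt{MT}$ directly from the theorem, since Lemma~\ref{applemma:de_regret} does not depend on the reward structure.

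\textbf{Term (b).} By Lemma~\ref{applemma:gen_exp_delay}, each conditional delay is at most $h_i$. Summing the ceiling expression over $i\le M-1$ gives
\[
\sum_i h_i \le w\left(\frac{K}{2\alpha}+1\right)\sum_i\sqrt{s_i+1}+\frac{w^2 M}{4}\left(\frac{K}{2\alpha}+1\right)^2+M .
\]
Applying Cauchy--Schwarz, as in the proof of Theorem~\ref{appthm:general_extra},
\[
\sum_i\sqrt{s_i+1}\le\sqrt{M\sum_i(s_i+1)}=\sqrt{M(T+M)} ,
\]
which produces the second and third terms of the corollary.

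\textbf{Term (c).} Lemma~\ref{applem:FA} gives $T\cdot\mathbb{P}(F_i\mid\cdot)\le \epsilon\lVert\varphi\rVert_d+1$, and Lemma~\ref{applem:SD} gives $T\cdot\mathbb{P}(\overline{D}_i\mid\cdot)\le\epsilon\lVert\varphi\rVert_d$. Summed over segments, these contribute $O(M(2\epsilon\lVert\varphi\rVert_d+1))$. The stated constant $2\epsilon\lVert\varphi\rVert_d+1$ reads like a per-segment contribution. I would either absorb the factor $M$ into lower-order terms or note that it does not affect the order. I expect this bookkeeping, and not any probabilistic argument, to be the only discrepancy.

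\textbf{Choosing $\alpha$ and the order bound.} Note that $w$ from \eqref{eqn:w_gen} is $\Theta(L\log(KT)/\delta_{\min}^2)$, treating $\delta_{\min}$ as a constant, so $w = \Theta(L\log(KT))$. With $\alpha=c\sqrt{KL\log(KT)}$:
\begin{itemize}
\item The exploration term is $2\alpha\sqrt{MT}=O(\sqrt{KLMT\log T})$.
\item The delay term is $w\left(\frac{K}{2\alpha}+1\right)\sqrt{M(T+M)}$. Since $wK/\alpha = O(\sqrt{KL\log(KT)})$, the contribution of the $K/(2\alpha)$ piece is $O(\sqrt{KLMT\log T})$.
\end{itemize}

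The step I expect to be the main obstacle is the ``$+1$'' part of $\frac{K}{2\alpha}+1$. Taken literally, it contributes $w\sqrt{MT}=L\log(KT)\sqrt{MT}$. This fits within $\sqrt{KLMT\log T}$ only when $L\log(KT)\lesssim K$. Otherwise it gives an extra $\sqrt{L\log T}$ factor, consistent with the $\tilde O$ statement in the abstract.

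I would handle this in the same way the one-state Corollary~\ref{appcor:onestate_ucb_mucb} does. I would treat the $+1$ as lower order in the regime of interest, and declare the final order up to logarithmic and mixing-time factors.

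The remaining term, $\frac{w^2M}{4}\left(\frac{K}{2\alpha}+1\right)^2$, is polylogarithmic in $T$ times $M$. It is therefore negligible against $\sqrt{T}$ for large $T$, which completes the order bound.
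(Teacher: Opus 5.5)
Your route is the paper's own: its proof is the single sentence ``plug Lemmas~\ref{applem:FA}, \ref{applem:SD} and~\ref{applemma:gen_exp_delay} into Theorem~\ref{appthm:general_extra}.'' Your term-by-term expansion is a faithful version of it, and you correctly use the parameters of Equations~\ref{eqn:w_gen} and~\ref{eqn:b_gen}, which those lemmas require. Your bookkeeping objection is also right. The probability terms sum to $M(2\epsilon\lVert\varphi\rVert_d+1)$, and the ceilings in $h_i$ add a further $M$. So the stated constant $2\epsilon\lVert\varphi\rVert_d+1$ should be an $O(M)$ term, like the $2M$ in Corollary~\ref{appcor:onestate_ucb_mucb}. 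This does not affect the order.

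The genuine gap is the final order claim, which the paper's one-line proof also leaves open and which your proposal does not close. The piece $w\sqrt{M(T+M)}$ produced by the ``$+1$'' in $\frac{K}{2\alpha}+1$ does not involve $\alpha$, so no tuning removes it. For constant $\delta_{\min}$ it is $\Theta(L\log(KT)\sqrt{MT})$. Declaring it lower order amounts to the extra hypothesis $L\log(KT)\lesssim K$; it is not an argument. Appealing to the one-state corollary does not help, because it has the same unaddressed term with $w=\Theta(\log(KT))$. Your fallback is also weaker than you suggest. The excess factor $\sqrt{L\log(KT)/K}$ contains $\sqrt{L}$, which $\tilde O$ does not hide. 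So for $L\gg K$ you would not even recover the abstract's $\tilde O(\sqrt{LMKT})$.

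The missing idea is that the $+1$ is an artifact of the proof of Lemma~\ref{applemma:samples-time}. That proof bounds $(\sqrt{u}+a)^2+1\le(\sqrt{u}+a+1)^2$ with $a=K/(2\alpha)$, charging every ceiling a full unit of $\sqrt{u}$. Use $\sqrt{y^2+1}\le y+\frac{1}{2y}$ instead.
\begin{itemize}
\item The first block starting after the delay satisfies $\sqrt{u}\le\sqrt{t_d+1}+a$. This holds because its predecessor starts at or before $t_d$, and $(\sqrt{t_d}+a)^2+1\le(\sqrt{t_d+1}+a)^2$.
\item Every later block has $\sqrt{u}\ge\sqrt{t_d+1}$, so each step raises $\sqrt{u}$ by at most $a+\frac{1}{2\sqrt{t_d+1}}$.
\end{itemize}
This gives
\[
T_{t_d}\le\frac{nK}{\alpha}\sqrt{t_d+1}+n+K+n^{2}\Big(\frac{K}{2\alpha}+\frac12\Big)^{2}.
\]
With $n=w/2$ and $t_d\le s_i$, $h_i$ loses its $w\sqrt{s_i+1}$ piece. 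The delay sum becomes $\frac{wK}{2\alpha}\sqrt{M(T+M)}$ plus $O\big(M(w+K)+Mw^2(\frac{K}{2\alpha}+1)^2\big)$. For $\alpha=c\sqrt{KL\log(KT)}$ one has $wK/\alpha=\Theta(\sqrt{KL\log(KT)})$. The bound is therefore $O(\sqrt{KLMT\log T})$ plus terms polylogarithmic in $T$, which is the claimed order.
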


%% file: 3-parameters_RMAB.tex
\section{Algorithms and Parameters Tuning}
\label{app:para}
In this appendix, we provide an explanation of our parameter selection.

\textbf{Algorithm Setting.} For change detection mechanism, the window size $w$ is set to $\left\lfloor100\sqrt{144L}\right\rfloor$, where $L$ is the mixing time for $\epsilon=1/8$, the threshold $b = \sqrt{w/2\ln{(2KT^2)}} + \sqrt{144wL\ln{(2KT^2)}}$. For uniform sampling scheme, the sample rate $\alpha = \sqrt{M/T\log{(T/M)}}$ is following the setting in~\cite{liu2018change}. For diminishing exploration scheme, the sampling parameter $\alpha = 1$. In RestlessUCB algorithm, the minimum sampling amount is set as $m(T)=100$, and the confidence radius $rad(T)=\sqrt{T/(2Mm(T))}$. For the Colored-UCRL2 algorithm, $\delta=1/\min(10000,T)$ and $B=S$, following the work in~\cite{ortner2012regret}. To improve computational efficiency, we restrict the value of $\delta$ so that the value iteration can run faster without significant loss of accuracy. 

\textbf{Environment Setting.} We consider an experiment consisting of three Markov chains (arms), indexed by $k \in \{1,2,3\}$, each evolving over five segments, indexed by $i \in \{1,2,3,4,5\}$. The transitions between states in each Markov chain are governed by transition probability matrices, and each state is associated with a reward mean function.

\textbf{Markov Chain Transition Kernels. }
The transition kernel for each segment $i$ is given by a set of transition matrices $P^{(i,k)}$, where each row represents the probability distribution over the next states, given the current state.
For each segment $i$, we define:

\begin{equation}
P^{(i)} =
\begin{bmatrix}
P^{(i)}_1 & P^{(i)}_2 & P^{(i)}_3
\end{bmatrix}
\end{equation}

where $P^{(i)}_k$ represents the transition matrix for arm $k$.

The transition matrices for each segment are:

\begin{equation}
P^{(1)} =
\begin{bmatrix}
\begin{bmatrix} 0.50 & 0.50 & 0.00 \\ 0.17 & 0.66 & 0.17 \\ 0.00 & 0.50 & 0.50 \end{bmatrix}, &
\begin{bmatrix} 0.36 & 0.64 & 0.00 \\ 0.39 & 0.22 & 0.39 \\ 0.00 & 0.50 & 0.50 \end{bmatrix}, &
\begin{bmatrix} 0.55 & 0.45 & 0.00 \\ 0.43 & 0.36 & 0.21 \\ 0.00 & 0.50 & 0.50 \end{bmatrix}
\end{bmatrix}
\end{equation}

\begin{equation}
P^{(2)} =
\begin{bmatrix}
\begin{bmatrix} 0.43 & 0.57 & 0.00 \\ 0.45 & 0.30 & 0.25 \\ 0.00 & 0.64 & 0.36 \end{bmatrix}, &
\begin{bmatrix} 0.33 & 0.67 & 0.00 \\ 0.32 & 0.42 & 0.26 \\ 0.00 & 0.50 & 0.50 \end{bmatrix}, &
\begin{bmatrix} 0.50 & 0.50 & 0.00 \\ 0.11 & 0.47 & 0.42 \\ 0.00 & 0.67 & 0.33 \end{bmatrix}
\end{bmatrix}
\end{equation}

\begin{equation}
P^{(3)} =
\begin{bmatrix}
\begin{bmatrix} 0.62 & 0.38 & 0.00 \\ 0.50 & 0.44 & 0.06 \\ 0.00 & 0.62 & 0.38 \end{bmatrix}, &
\begin{bmatrix} 0.50 & 0.50 & 0.00 \\ 0.43 & 0.21 & 0.36 \\ 0.00 & 0.38 & 0.62 \end{bmatrix}, &
\begin{bmatrix} 0.17 & 0.83 & 0.00 \\ 0.50 & 0.28 & 0.22 \\ 0.00 & 0.44 & 0.56 \end{bmatrix}
\end{bmatrix}
\end{equation}

\begin{equation}
P^{(4)} =
\begin{bmatrix}
\begin{bmatrix} 0.57 & 0.43 & 0.00 \\ 0.20 & 0.47 & 0.33 \\ 0.00 & 0.57 & 0.43 \end{bmatrix}, &
\begin{bmatrix} 0.60 & 0.40 & 0.00 \\ 0.18 & 0.47 & 0.35 \\ 0.00 & 0.50 & 0.50 \end{bmatrix}, &
\begin{bmatrix} 0.53 & 0.47 & 0.00 \\ 0.20 & 0.20 & 0.60 \\ 0.00 & 0.73 & 0.27 \end{bmatrix}
\end{bmatrix}
\end{equation}

\begin{equation}
P^{(5)} =
\begin{bmatrix}
\begin{bmatrix} 0.38 & 0.62 & 0.00 \\ 0.31 & 0.13 & 0.56 \\ 0.00 & 0.56 & 0.44 \end{bmatrix}, &
\begin{bmatrix} 0.55 & 0.45 & 0.00 \\ 0.45 & 0.46 & 0.09 \\ 0.00 & 0.64 & 0.36 \end{bmatrix}, &
\begin{bmatrix} 0.33 & 0.67 & 0.00 \\ 0.56 & 0.25 & 0.19 \\ 0.00 & 0.40 & 0.60 \end{bmatrix}
\end{bmatrix}
\end{equation}

\textbf{Reward Mean Functions.} The rewards are randomly generated based on the current state in each segment, denoted as $R^{(i)}_k(s)$, where $k=1,2,3$ represents the arm, $i=1,2,3,4,5$ represents the segment, and $s=1,2,3$ represents the state. 

\begin{equation}
R^{(i)} =
\begin{bmatrix}
R^{(i)}_1(1) & R^{(i)}_1(2) & R^{(i)}_1(3) \\
R^{(i)}_2(1) & R^{(i)}_2(2) & R^{(i)}_2(3) \\
R^{(i)}_3(1) & R^{(i)}_3(2) & R^{(i)}_3(3)
\end{bmatrix}
\end{equation}

The values for each segment are:

\begin{equation}
R^{(1)} =
\begin{bmatrix}
0.24090 & 0.21567 & 0.11303 \\
0.68377 & 0.55163 & 0.29024 \\
0.88837 & 0.87993 & 0.40863
\end{bmatrix} \quad
R^{(2)} =
\begin{bmatrix}
0.76023 & 0.42959 & 0.15431 \\
0.95580 & 0.81347 & 0.63100 \\
0.37325 & 0.24685 & 0.06446
\end{bmatrix}
\end{equation}

\begin{equation}
R^{(3)} =
\begin{bmatrix}
0.83859 & 0.80774 & 0.19539 \\
0.29037 & 0.23361 & 0.08248 \\
0.95163 & 0.45959 & 0.03668
\end{bmatrix} \quad
R^{(4)} =
\begin{bmatrix}
0.52273 & 0.14461 & 0.03640 \\
0.67324 & 0.54894 & 0.31872 \\
0.93209 & 0.88988 & 0.62226
\end{bmatrix}
\end{equation}

\begin{equation}
R^{(5)} =
\begin{bmatrix}
0.64763 & 0.56596 & 0.36023 \\
0.87041 & 0.82968 & 0.08825 \\
0.22632 & 0.21085 & 0.13084
\end{bmatrix}
\end{equation}


\textbf{Mixing Time Calculation.} The mixing time $L$ is calculated based on the transition matrices of the Markov chains. For each arm $k$ in segment $i$, we compute the second largest eigenvalue modulus (SLEM) of the transition matrix $P^{(i,k)}$. The mixing time is then determined using the formula:
\begin{equation}
L^{(i,k)} = \frac{\ln(1/\epsilon)}{1 - \lambda_2^{(i,k)}}
\end{equation}
where $\lambda_2^{(i,k)}$ is the SLEM of $P^{(i,k)}$ and $\epsilon$ is the desired accuracy level (set to $1/8$ in our experiments). The overall mixing time $L$ used in the algorithm is the maximum mixing time across all arms and segments:
\begin{equation}
L = \max_{i,k} L^{(i,k)}
\end{equation}

%% file: 4-one-state_simulation.tex
\section{One-State Special Case Simulation Results}
\label{app:sim}
In this appendix, we assess the effectiveness of the proposed diminishing exploration scheme across various dimensions, encompassing regret scaling in $M$, $K$, and $T$, regrets in synthetic environments, and regrets in a real-world scenario. In addition to evaluating M-UCB \citep{cao2019nearly} with our diminishing exploration, we also examine a variant of CUSUM-UCB \citep{liu2018change} that incorporates diminishing exploration with CUSUM-UCB, further highlighting the efficacy of the proposed exploration method. We will compare our approach with M-UCB, CUSUM-UCB, GLR-UCB, Discounted-UCB, Discounted Thompson Sampling, and MASTER. 
Unless stated otherwise, we report the average regrets over 100 simulation trials. Detailed configuration is provided in Appendix~\ref{app:para_one_state}.

{\bf Regret in Each Time Step.} In this simulation, we consider a multi-armed bandit problem with $T = 20000$ time stpdf and $M=5$. 
Recall that $\mu^{(i)}_{k}$ represents the expected value for arm $k$ in the $i$-th segment. Here, we set $\mu^{(i)}_{k}=0.2,0.5,0.8$ for $i$ with $(i+k)\bmod 3=2, 0, 1$, respectively.

Figure~\ref{fig:t} shows that for both CUSUM-UCB and M-UCB, employing diminishing exploration can effectively reduce the additional regret caused by constant exploration. 
Moreover, M-UCB with the proposed diminishing exploration achieves the lowest regret. In the figure, the change points are clearly evident by observing the breakpoints in each line. The reason for the overall steeper slope of CUSUM-UCB (both with and without diminishing exploration) is due to the heightened sensitivity of the CUSUM detector itself, resulting in more frequent false alarms. 

\begin{figure}[h]
\centering
\begin{subfigure}{0.85\textwidth}
    \centering
    \centering
    \includegraphics[width=0.7\linewidth]{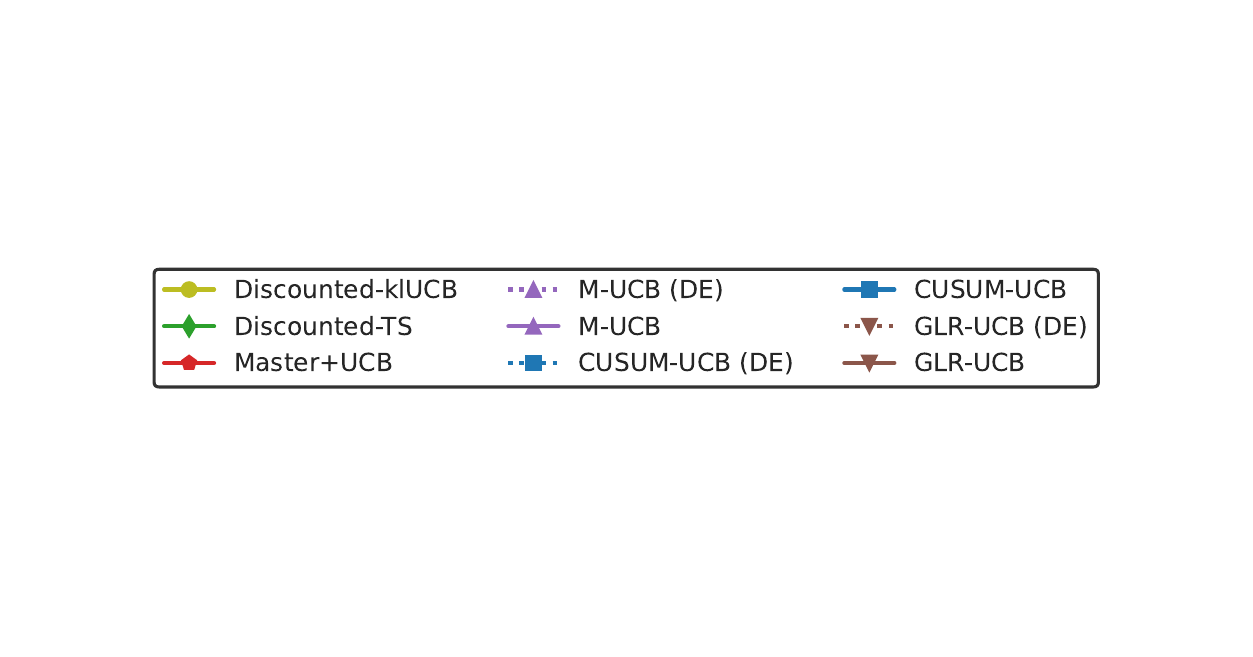}
    \vspace{-2cm}
\end{subfigure}
\hfill
\begin{subfigure}{0.25\textwidth}
    \includegraphics[width=\textwidth]{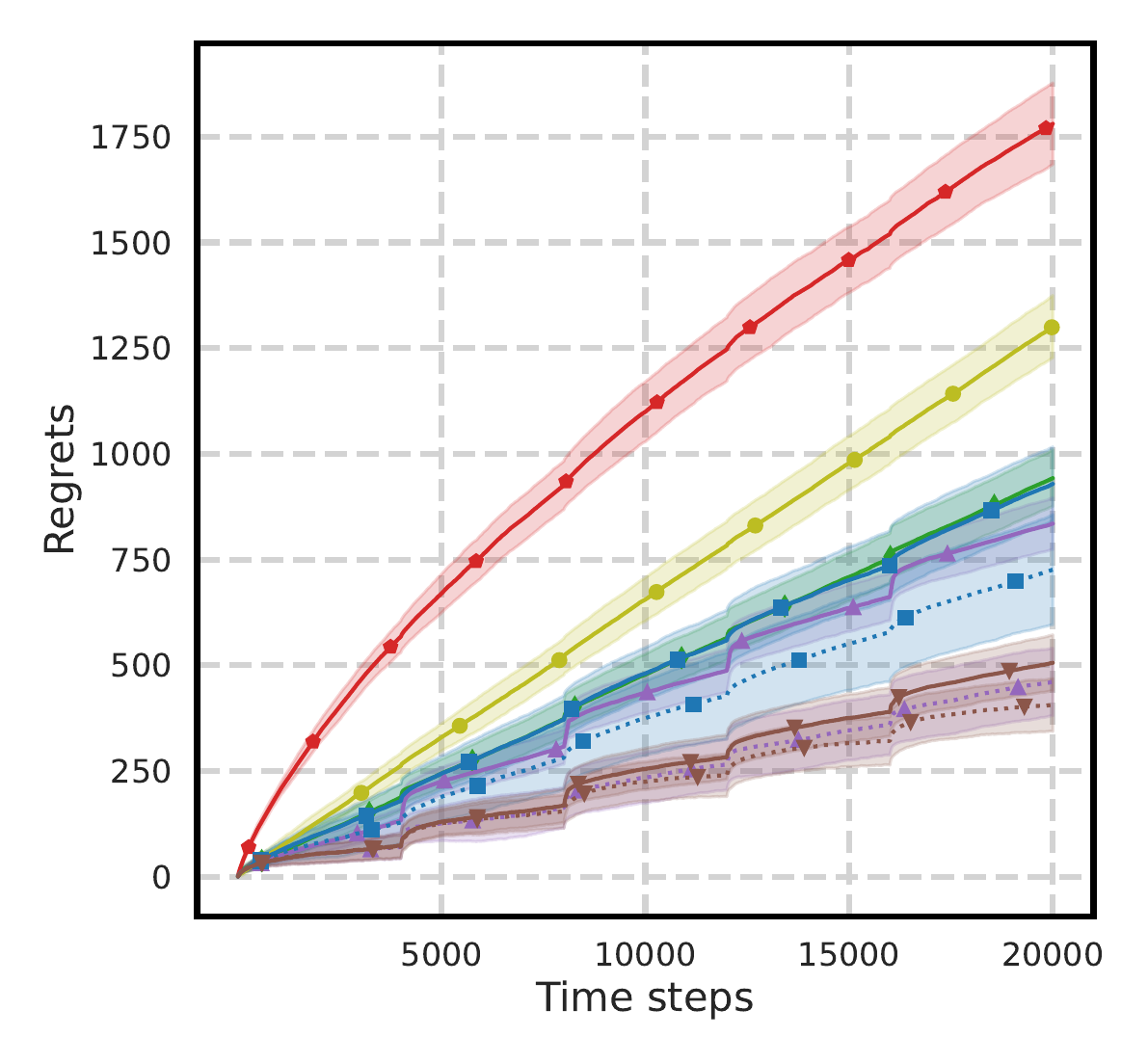}
    \caption{Scaling in $t$.}
    \label{fig:t}
\end{subfigure}
\hspace{-20pt}
\hfill
\begin{subfigure}{0.25\textwidth}
    \includegraphics[width=\textwidth]{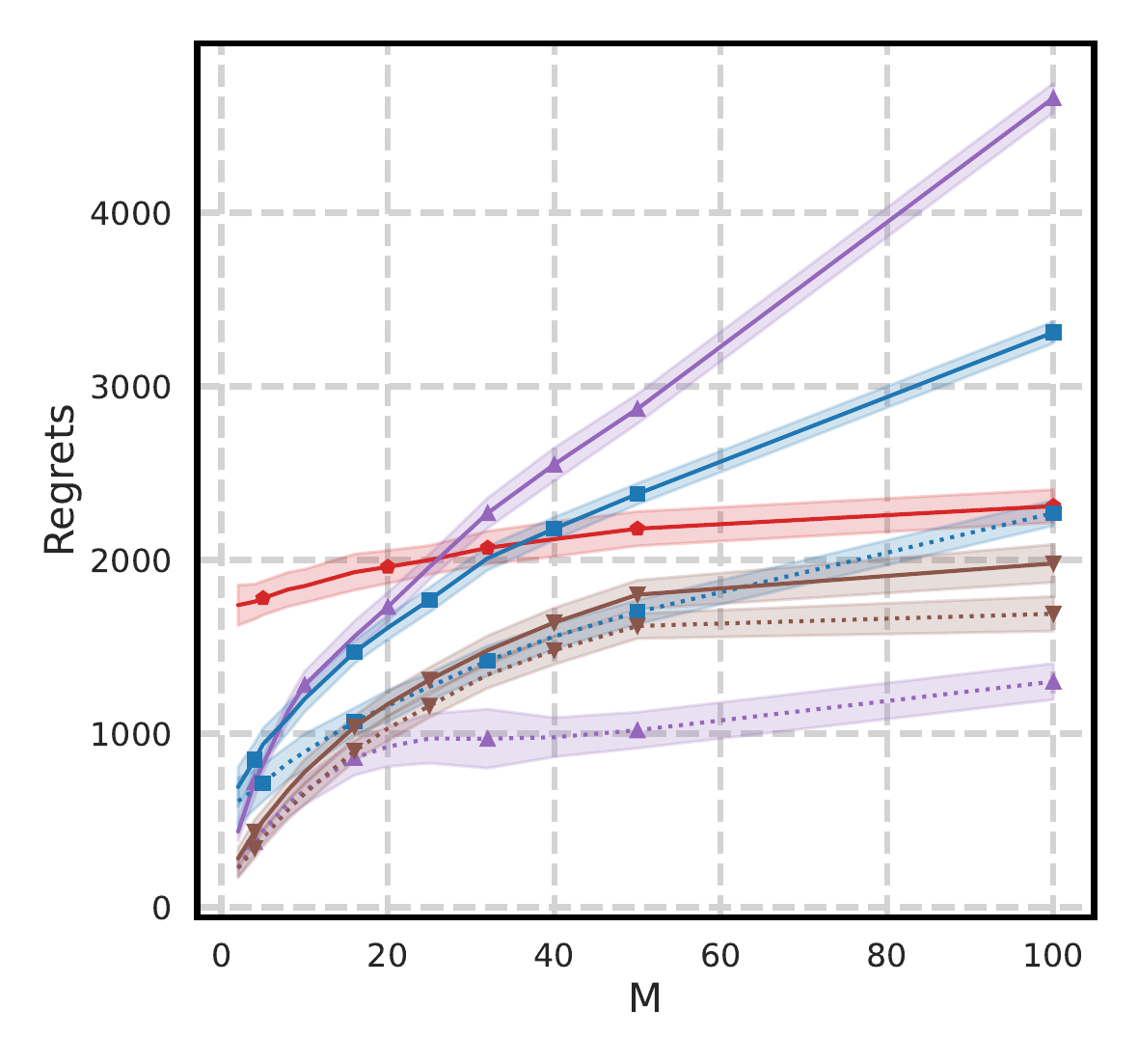}
    \caption{Scaling in $M$.}
    \label{fig:M}
\end{subfigure}
\hspace{-20pt}
\hfill
\begin{subfigure}{0.25\textwidth}
    \includegraphics[width=\textwidth]{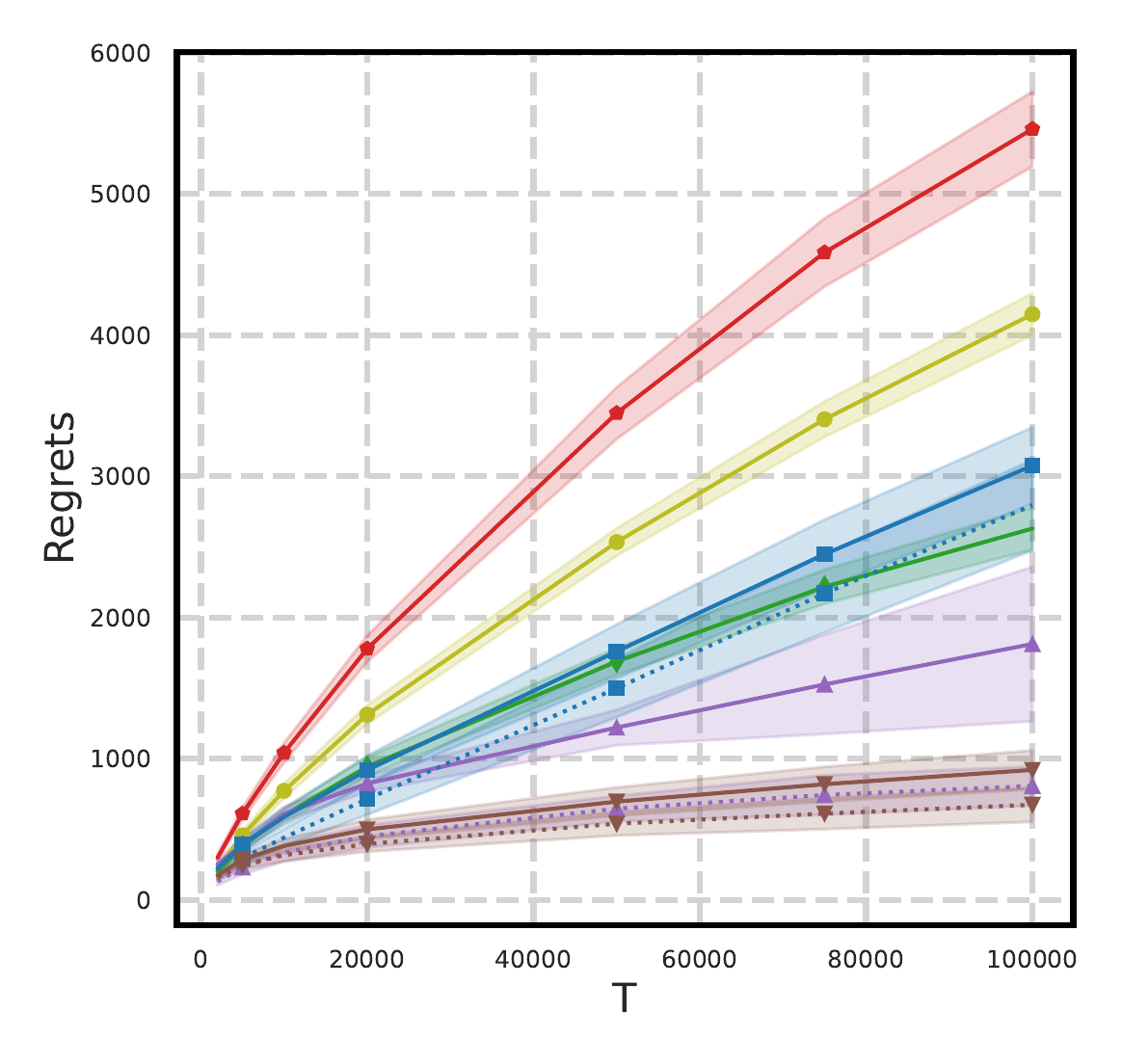}
    \caption{Scaling in $T$.}
    \label{fig:T}
\end{subfigure}
\hspace{-20pt}
\hfill
\begin{subfigure}{0.25\textwidth}
    \includegraphics[width=\textwidth]{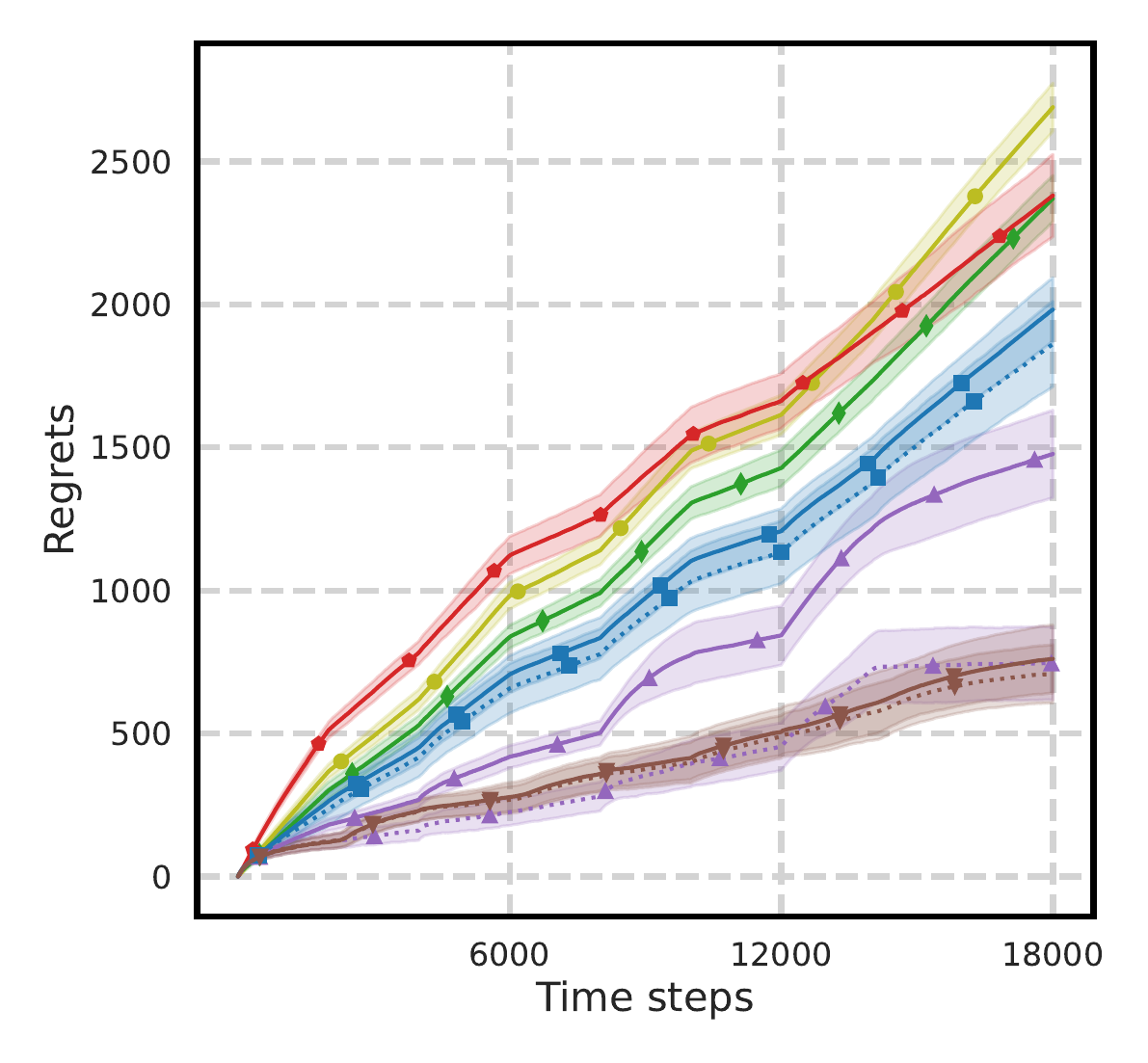}
    \caption{Yahoo Data Set.}
    \label{fig:Yahoo}
\end{subfigure}
\hspace{-20pt}
\vspace{+10pt}
\caption{Regret in the synthetic environments and under the Yahoo data set.}\label{fig:3}
\end{figure}

{\bf Regret Scaling in $M$.} Based on the settings outlined above, with the only variation being in the parameter $M$, Figure~\ref{fig:M} illustrates the dynamic regrets across various values of $M$. In this experiment, adjustments to the exploration parameter settings are required based on the size of $M$ when using a constant exploration rate. However, this is not the case for the proposed diminishing exploration. The result confirms the earlier-discussed rationale that the proposed diminishing exploration can automatically adapt to the environment, resulting in the best regret performance among the algorithms.

{\bf Regret Scaling in $T$.} In line with the setting described above, with the only variation being the parameter $T$, we present the dynamic regrets across different values of $T$. Observations similar to those made above can again be found in Figure~\ref{fig:T}, where the proposed diminishing exploration can effectively reduce the regret.

\textbf{Regret and Execution Time.} Here, we compare the computation time and regret across different algorithms for various choices of $M$ and $T$ with other parameters same as above. Specifically, we conduct experiments under three scenarioso: one where the environment changes rapidly ($M=50$ and $T=20000$), one where the environment changes slowly ($M=5$ and $T=20000$), and one where the considered time horizon is double ($M=5$ and $T=40000$). As shown in Figure~\ref{fig:ComM50} to \ref{fig:ComBigT}, despite oblivious of $M$, our algorithm almost always achieves the lowest computation time and regret in all the scenarios. Moreover, comparing to Master+UCB, another algorithm not requiring the knowledge of $M$, our algorithm is always significantly better as shown in these figures. Figure~\ref{fig:ratio} plots the ratio of average execution time of Master+UCB to that of M-UCB with our DE for various $T$. It is shown that the growth rate is faster than $0.5 \log T$, and it appears to become even linear in $T$ as $T$ increases.

\begin{figure}[h]
\centering
\begin{subfigure}{0.85\textwidth}
    \centering
    \centering
    \includegraphics[width=0.7\linewidth]{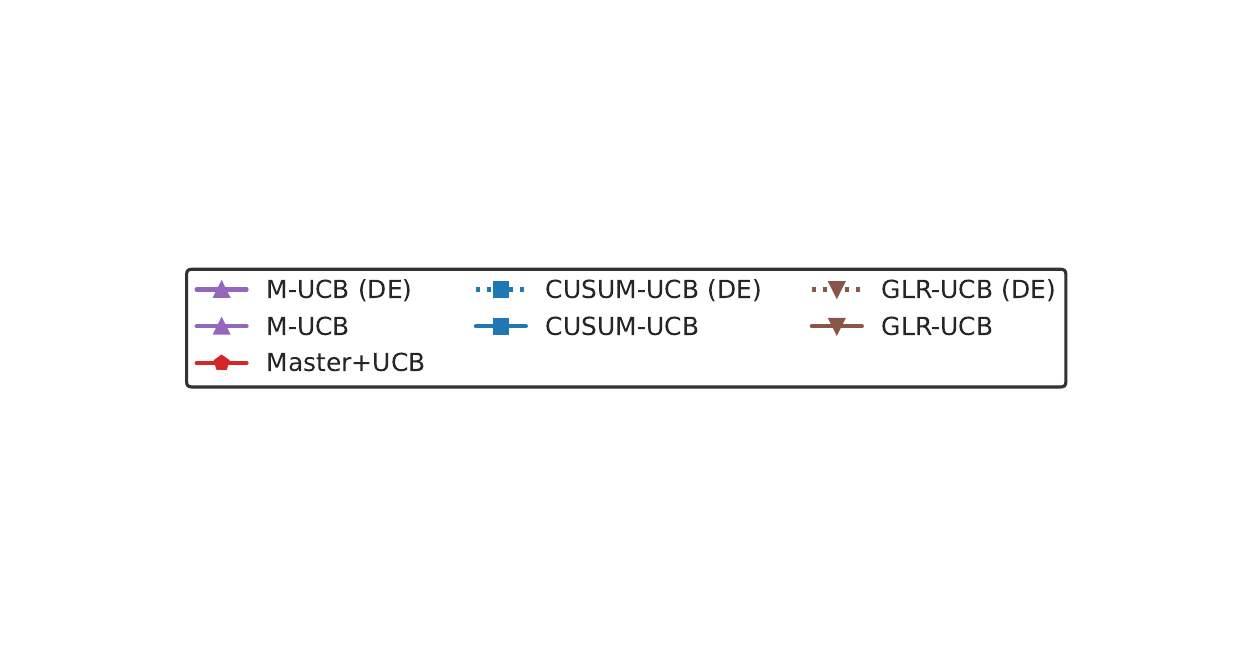}
    \vspace{-2cm}
\end{subfigure}
\hfill
\begin{subfigure}{0.25\textwidth}
    \includegraphics[width=\textwidth]{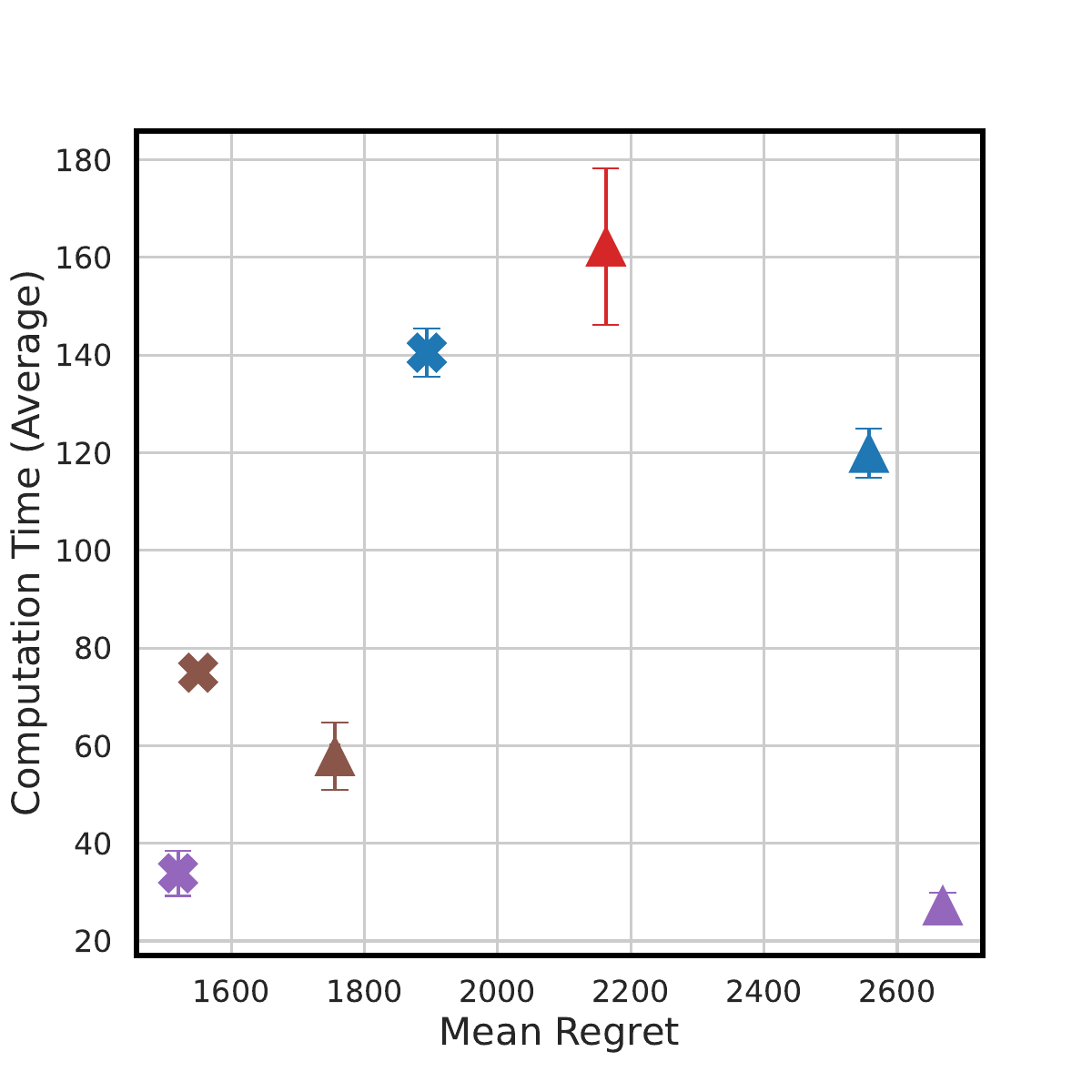}
    \caption{$M=50$, $T=20000$.}
    \label{fig:ComM50}
\end{subfigure}
\hspace{-20pt}
\hfill
\begin{subfigure}{0.25\textwidth}
    \includegraphics[width=\textwidth]{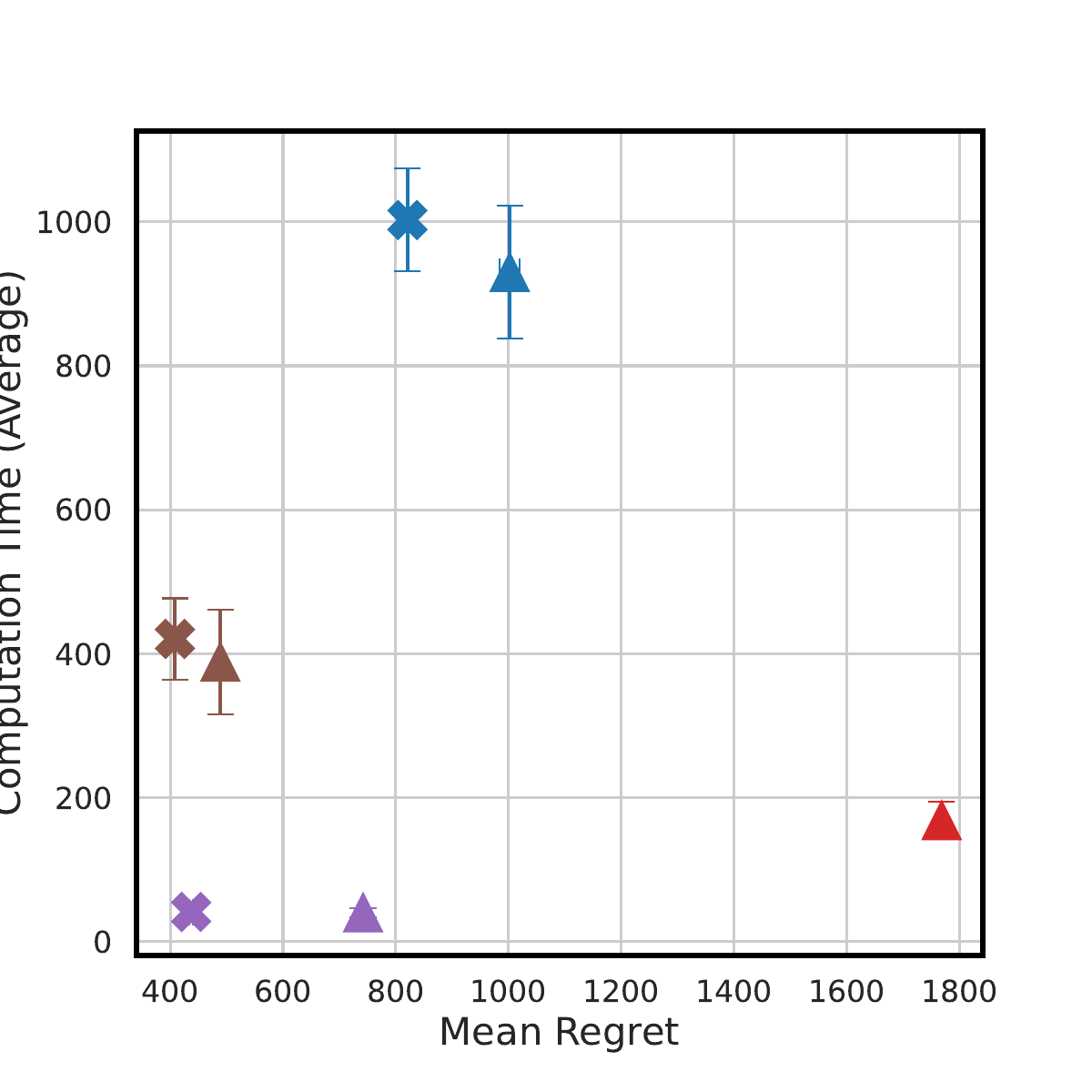}
    \caption{$M=5$, $T=20000$.}
    \label{fig:ComM5}
\end{subfigure}
\hspace{-20pt}
\hfill
\begin{subfigure}{0.25\textwidth}
    \includegraphics[width=\textwidth]{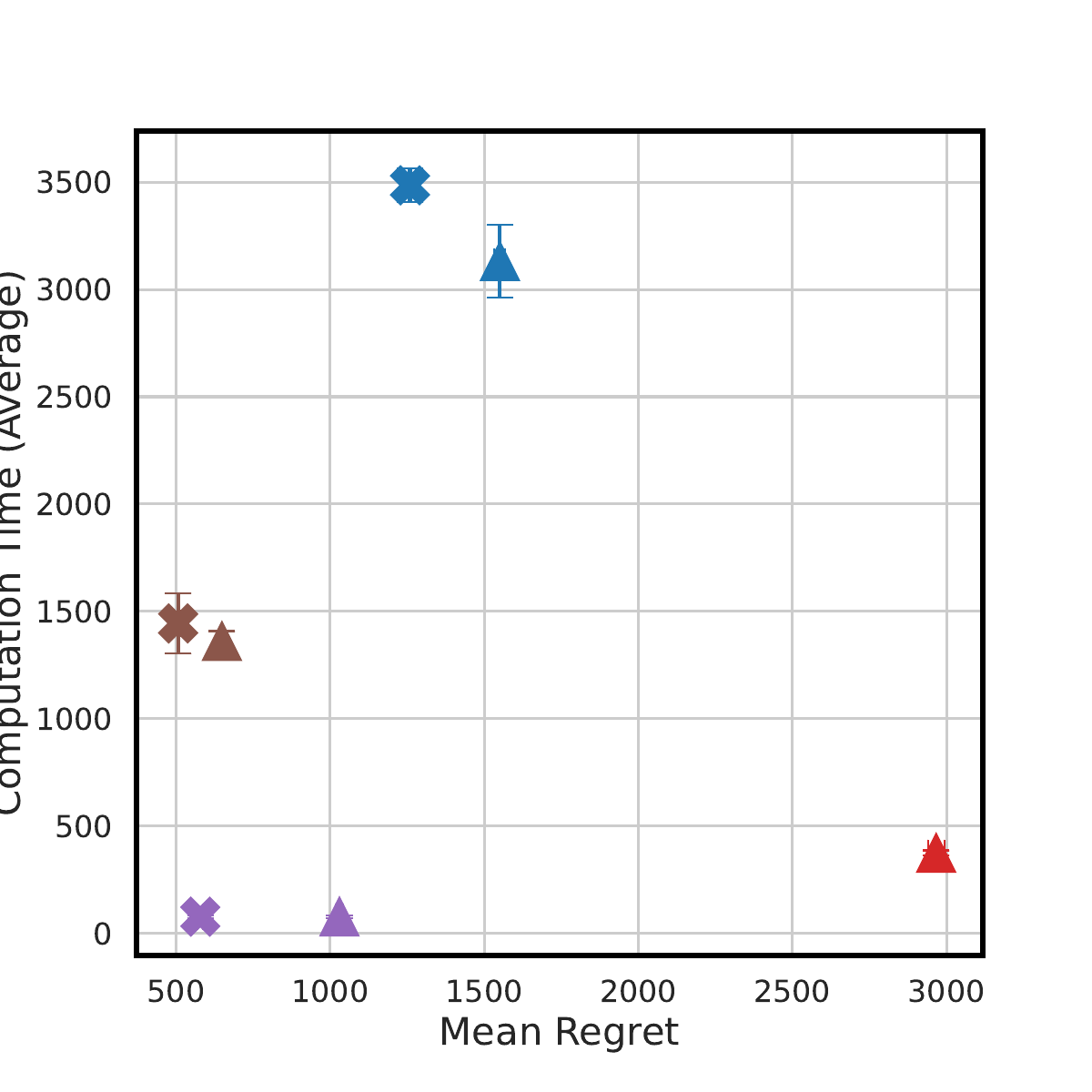}
    \caption{$M=5$, $T=40000$.}
    \label{fig:ComBigT}
\end{subfigure}
\hspace{-20pt}
\hfill
\begin{subfigure}{0.25\textwidth}
    \hspace{-20pt}
    \includegraphics[width=\textwidth]{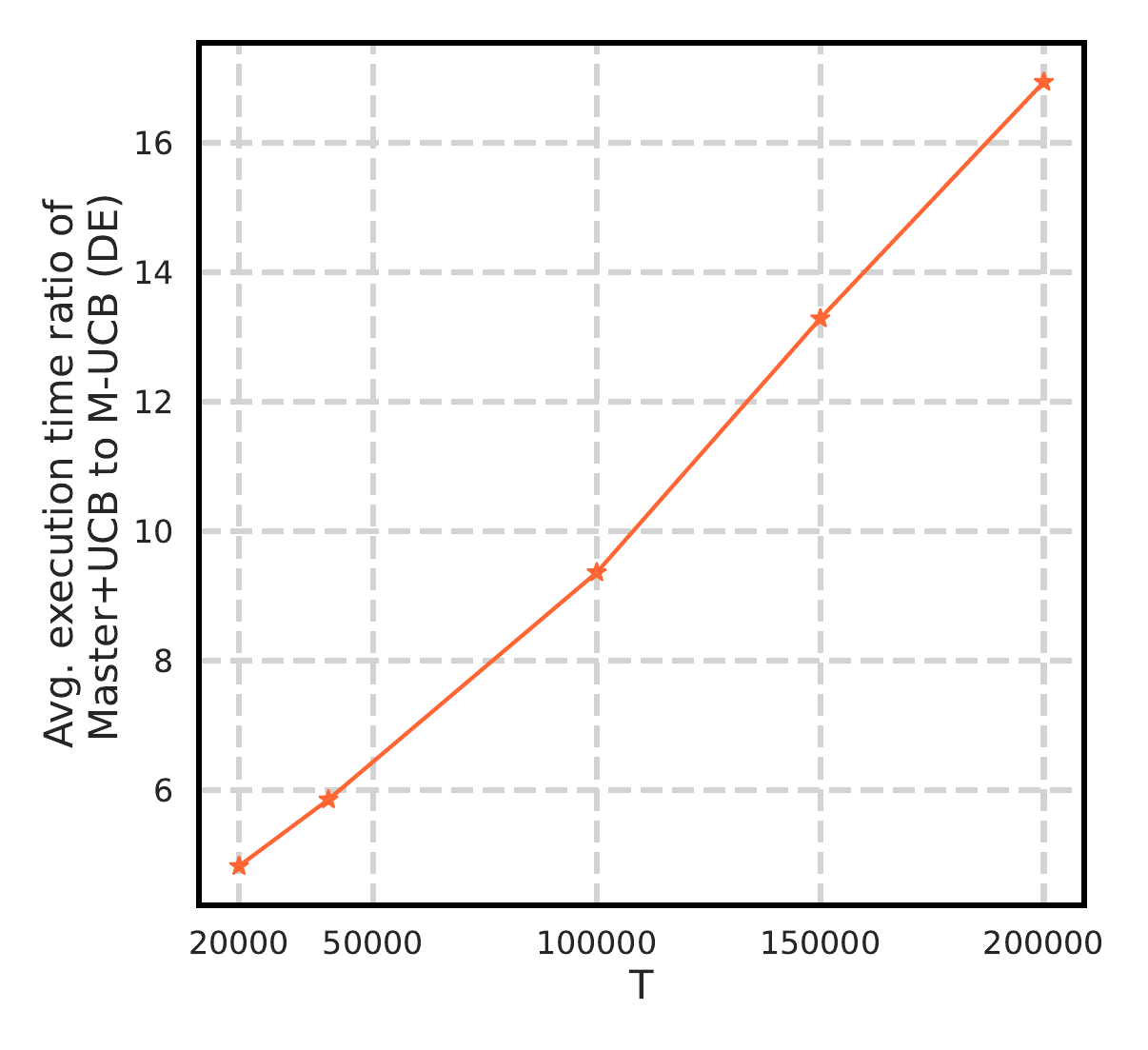}
    \caption{Ratio of avg. execution\\ time of Master+UCB to that\\ of M-UCB (DE).}
    \vspace{-20pt}
    \label{fig:ratio}
\end{subfigure}
\hspace{-20pt}
\vspace{+10pt}
\caption{Regret and computation times.}\label{fig:4}
\end{figure}
\textbf{Regret in an Environment Built from a Real-World Dataset.} We further utilize the benchmark dataset publicly published by Yahoo! for evaluation. To enhance arm distinguishability in our simulation, we scale up the data by a factor of $10$. The number of segments is set to $M=9$ and the number of arms is set to $K=6$. Figure \ref{fig:Yahoo} shows the evolution of dynamic regret. Again, we see that the diminishing exploration scheme could help M-UCB, GLR-UCB and CUSUM-UCB achieve similar or better regret even without knowing $M$.

\textbf{Scenarios when Assumptions are Violated.} In Assumption~\ref{ass:minimum_gap}, we assumes the knowledge of $\delta$ to select an appropriate $w$. We emphasize that our settings in many of the above simulations actually violate this assumption. Take Figure~\ref{fig:t} for example, $w=200$ is chosen, which corresponds to $\delta\approx 0.6$ when back calculating, which is much larger than the actual $\delta=0.3$ of this scenario. Assumptions~\ref{asm:seg_length} provide guarantees that the segment length is sufficiently long. However, in the last data point of our Figure~\ref{fig:M} ($M=100$), these assumptions are clearly violated. In this case, it becomes challenging for the active methods to promptly detect every change point. For algorithms like M-UCB and CUSUM-UCB, which require knowledge of $M$, their awareness of quick changes causes them to invest more effort into change detection, leading to a very high exploration rate. However, this does not always guarantee successful detection, resulting in very high regret. Diminishing exploration, on the other hand, continues to decrease the exploration rate regardless of $M$ when no changes are detected. This allows more resources to be invested for UCB, which might gradually adapt to the environment's changes, leading to a regret that is lower than that of uniform exploration.

%% file: 5-parameter.tex
\newpage
\section{One-State Special Case Algorithms and Parameters Tuning}
\label{app:para_one_state}
In this appendix, we provide an explanation of our parameter selection. For M-UCB, the window size $w$ is set to 200 unless otherwise specified; however, for the last data point ($M=100$) in Figure~\ref{fig:M}, we chose $w=50$ due to the limitations inherent to change detection. We compute the change detection threshold $b_{\textrm{M-UCB}} = \sqrt{w/2 \log\left(2KT^{2}\right)}$ following the original formulation in \cite{cao2019nearly}. Additionally, the uniform exploration rate $\gamma_{\textrm{M-UCB}} = \sqrt{MK \log{T}/T}$ is determined as initially stated in \cite{besson2022efficient}.Concerning CUSUM-UCB, we adhere to \cite{liu2018change} by fixing $\epsilon = 0.1$, setting the change detection threshold $b_{\textrm{CUSUM-UCB}} = \log\left(T/M-1\right)$, and establishing the uniform exploration rate $\gamma_{\textrm{CUSUM-UCB}} = \sqrt{MK \log{T}/T}$ as initially stated in \cite{besson2022efficient}. Additionally, in CUSUM-UCB, the change point detection involves averaging the first $H$ samples, where $H$ is set to 100. For GLR-UCB \citep{besson2022efficient}, we set $\gamma_{m, \textrm{GLR-UCB}} = \sqrt{mK \log{T}/T}$, where $m$ is the number of alarms. We utilize the threshold function $\beta(n, \delta) = \log{\left(n^{3/2}/\delta\right)}$ and set $\delta = 1/\sqrt{T}$. In our setup, for both the diminishing versions of M-UCB and CUSUM-UCB, we follow the parameter selection approach described earlier, except for the choice of the exploration rate. In this context, we opt for $\alpha=1$. 
For passive methods, including DUCB \citep{garivier2011upper} and DTS \citep{qi2023discounted}, we use a discounting factor $\gamma = 0.75$. MASTER, on the other hand, follows the theoretical settings outlined in \cite{pmlr-v134-wei21b} and is categorized as an active method.

To evaluate the scalability of our methods, we conducted three sets of scaling experiments. For scaling in $t$ (Figure~\ref{fig:t}), scaling in $M$ (Figure~\ref{fig:M}), and scaling in $T$ (Figure~\ref{fig:T}), The parameters of DUCB, DTS and MASTER follow the theoretical settings outlined in \cite{garivier2011upper}, \cite{qi2023discounted} and \cite{pmlr-v134-wei21b}, respectively.

\begin{table}[h!]
\centering
\caption{Parameter Selection for Active and Passive Methods}
\resizebox{\linewidth}{!}{
\begin{tabular}{@{}lll@{}}
\toprule
\textbf{Method}        & \textbf{Parameters}                                                                                                                                                           & \textbf{References}                  \\ \midrule
\textbf{M-UCB}         & - Window size $w = 200$ (default), $w = 50$ for $M = 100$ (Figure~\ref{fig:M}). \\
                       & - Threshold: $b_{\textrm{M-UCB}} = \sqrt{w/2 \log\left(2KT^{2}\right)}$. \\
                       & - Exploration rate: $\gamma_{\textrm{M-UCB}} = \sqrt{MK \log{T}/T}$.                                                                                       & \cite{cao2019nearly},  \\ \midrule
\textbf{CUSUM-UCB}     & - Fixed $\epsilon = 0.1$. \\
                       & - Threshold: $b_{\textrm{CUSUM-UCB}} = \log\left(T/M-1\right)$. \\
                       & - Exploration rate: $\gamma_{\textrm{CUSUM-UCB}} = \sqrt{MK \log{T}/T}$. \\
                       & - Change detection based on averaging first $H = 100$ samples.                                                                       & \cite{liu2018change},  \\ \midrule
\textbf{GLR-UCB}       & - Exploration rate: $\gamma_{m, \textrm{GLR-UCB}} = \sqrt{mK \log{T}/T}$. \\
                       & - Threshold function: $\beta(n, \delta) = \log{\left(n^{3/2}/\delta\right)}$, $\delta = 1/\sqrt{T}$.                                  & \cite{besson2022efficient}          \\ \midrule
\textbf{MASTER}        & - All parameters follow theoretical settings.                                                                                                                                                       & \cite{pmlr-v134-wei21b}              \\ \midrule
\textbf{DUCB (Passive)} & - Discounting factor $\gamma = 0.75$.(Figure~\ref{fig:t} and~\ref{fig:Yahoo})\\
                        & - Discounting factor follows theoretical setting. (Figure~\ref{fig:M} and ~\ref{fig:T})& \cite{garivier2011upper}  \\ \midrule
\textbf{DTS (Passive)}  & - Discounting factor $\gamma = 0.75$. (Figure~\ref{fig:t} and~\ref{fig:Yahoo})\\
                        & - Discounting factor follows theoretical setting. (Figure~\ref{fig:M} and ~\ref{fig:T})  & \cite{qi2023discounted}             \\ \bottomrule
\end{tabular}}
\label{tab:parameter_setup}
\end{table}

\color{black}